\newcommand{\Apx}{\mathit{Apx}}
\newcommand{\epsilonsub}[1]{\varepsilon_{\textnormal{#1}}}
\newcommand{\F}{\mathsf{F}}
\renewcommand{\P}{\mathbb{P}}
\renewcommand{\tilde}{\widetilde}
\newcommand{\R}{\mathbb{R}}
\renewcommand{\epsilon}{\varepsilon}
\date{}
\title{Reduced-Rank Regression with Operator Norm Error}
\begin{document}
\maketitle
\begin{abstract}
     A common data analysis task is the reduced-rank regression problem:
        $$\min_{\textrm{rank-}k \ X} \|AX-B\|,$$
    where $A \in \mathbb{R}^{n \times c}$ and $B \in \mathbb{R}^{n \times d}$ are given large matrices and $\|\cdot\|$ is some norm. Here the unknown matrix $X \in \mathbb{R}^{c \times d}$
    is constrained to be of rank $k$ as it results in a significant parameter reduction of the solution when $c$ and $d$ are large.
    In the case of Frobenius norm error, there is a standard closed form solution to this problem and
    a fast algorithm to find a $(1+\varepsilon)$-approximate solution. However, for the important case of 
    operator norm error, no closed form solution is known and the fastest known algorithms take singular
    value decomposition time. 
    
    We give the first randomized algorithms
    for this problem running in time 
    $$(\nnz{A} + \nnz{B} + c^2) \cdot k/\varepsilon^{1.5} + (n+d)k^2/\epsilon + c^{\omega},$$
    up to a polylogarithmic
    factor involving  condition numbers, matrix dimensions, and dependence on $1/\varepsilon$. Here 
    $\nnz{M}$ denotes the number of nonzero entries of a matrix $M$, and
    $\omega$ is the exponent of matrix multiplication. 
    As both (1) spectral low rank approximation ($A = B$) and (2) linear system solving
    ($n = c$ and $d = 1$) are special cases, 
    our time cannot be improved by more than a $1/\varepsilon$ factor (up to polylogarithmic factors)
    without a major breakthrough in linear algebra. Interestingly, known techniques for low rank approximation,
    such as alternating minimization or sketch-and-solve, provably fail for this problem. 
    Instead, our algorithm uses an existential characterization of a 
    solution, together with Krylov methods, low degree polynomial approximation, and sketching-based preconditioning.
\end{abstract}
% \twocolumn
\section{Introduction}
Given an $n \times c$ matrix $A$, an $n \times d$ matrix $B$, and an integer parameter $k$, the reduced-rank regression problem asks to solve for a rank at most $k$ matrix $X \in \mathbb{R}^{c \times d}$ for which
$\|AX-B\|$ is minimized in some norm. A standard motivation is that by 
constraining $X$ to have rank at most $k$, the solution $X$ can be
represented using only $(c+d)k$ parameters rather than $c\cdot d$ parameters. Another important motivation is
that the rank constraint provides regularization on the solution, which often leads to better generalization.
Yet another motivation is that the solution $X$ can be explained by at most $k$ latent factors, and one can
try to interpret the latent factors, plot them \citep{braak1994biplots}, and so on. This is commonly done in ecology, where reduced-rank regression
is known as redundancy analysis \citep{legendre1999distance}, and is a type of ordination method \citep{kobak2019sparse}. For a survey, we refer the reader to the textbook by \citet{velu2013multivariate} devoted to reduced-rank regression. 

%There has also been a long line of work on Column Sampling based low-rank matrix approximation (see \cite{boutsidis2011topics}, and the references therein). To approximate a matrix $B \in \R^{n \times d}$, we first sample columns of $B$ uniformly at random or using other importance sampling procedures, to obtain a matrix $A = BR$ where $R$ selects a subset of sampled columns of $B$. We then solve the problem $\min_{\text{rank-}k\ X}\|AX-B\|$ for some norm $\| \cdot \|$. For operator norm, it is known that there is an algorithm \cite[Theorem~33]{boutsidis2011topics} that samples $r$ columns of $B$ to form a matrix $A = BR$ for which
%\begin{equation*}
%	\min_{\text{rank-}k\ X} \opnorm{AX-B} = O\left(\sqrt{{n}/{r}}\right)\opnorm{B - [B]_k}.
%\end{equation*}
%Though the above guarantee is weaker than the one given by Block Krylov Iteration\citep{muscomusco}, it is interesting as a low rank approximation is constructed using an actual subset of columns of the matrix $B$. No input sparsity time, efficient algorithms were previously known to solve the problem $\min_{\text{rank-}k\ X}\opnorm{AX-B}$ which is what we address in this work.

The $\min_{\text{rank-}k\ X}\|AX-B\|$ problem is only known to have a closed form solution when the error
measure is the Frobenius norm. In this case, the solution is given by 
$X = A^{+}[AA^{+} B]_k$ (see, e.g., \cite{f07}). Here for a matrix $M$, $[M]_k$ denotes the best rank $k$ approximation for $M$ in Frobenius norm and $M^+$ denotes the Moore-Penrose pseudo-inverse. This has a natural geometric interpretation - project each of the columns of $B$ onto the column span of $A$
and find the best rank-$k$ approximation to the projected matrix. By the Pythagorean
theorem, one can show there is no loss in this approach, as the optimal cost decomposes into the sum of squared
distances of columns of $B$ to the column span of $A$ followed by the best rank-$k$ approximation to the projected
matrix inside of the column span of $A$. 

In a number of applications, the Frobenius norm is not the right measure. For example, in cancer genetics more robust versions are desired, and versions based on the sum of Euclidean lengths instead of the sum of squared
Euclidean lengths are sometimes used \citep{she2017robust}. Still, in other applications, the operator norm error solution may give a solution of much better quality. Indeed,
if $B$ has a heavy tail of singular values, as is common for data analysis and
learning applications, then it has no good rank-$k$ approximation,
much less one in the column span of $A$, and consequently, outputting an $X'$
with $\|AX'-B\|_\F^2 \leq (1+\epsilon)\|AX_\F-B\|_\F^2$, where $X_\F$ is the optimal
Frobenius norm solution, may be meaningless as one could just set $X' = 0$. Indeed,
this is sometimes a motivation (see, e.g., \cite{muscomusco}) for the low rank approximation problem with operator norm error, which is a special case of our problem when $A = B$, and a number of works \citep{halko2011finding,jlsw20,karnin2015online,szlam2014implementation} suggest considering operator norm error in certain contexts.  

It is tempting to think that the Frobenius norm solution 
holds also for other unitarily invariant norms, such as the operator 
norm. However, one can show this is not the case. 
Indeed, let $X_\F$ be the solution to $\min_{\textrm{rank-}k \ X} \frnorm{AX-B}$. It was shown by \citet{boutsidis2011topics} that this is a $\sqrt{2}$-approximation, namely, that $\opnorm{AX_\F-B} \leq \sqrt{2} \cdot \opt$ where $\opt = \min_{\text{rank-}k\ X}\opnorm{AX-B}$. Unfortunately, the $\sqrt{2}$ factor is tight and there are instances where the Frobenius norm solution really does give at best a $\sqrt{2}$-approximation. Suppose, for example\footnote{We thank Ankur Moitra for pointing out this example to us.} $$A = 
 \begin{bmatrix}
 0&  0\\
 1 & 0\\
 0 & 1
 \end{bmatrix}, \text{ and }
 B = \begin{bmatrix}
 1&  0\\
 1 & 0\\
 0 & 1+\gamma
 \end{bmatrix}.
$$
For the problem $\min_{\text{rank-}1\ X} \|AX-B\|_\F$, the optimum solution is 
$$
X_{\F} = \begin{bmatrix}0 & 0 \\ 0 & 1+\gamma \end{bmatrix}\text{, with } AX_{\F}-B = -\begin{bmatrix} 1 & 0 \\ 1& 0 \\ 0 & 0 \end{bmatrix}
$$
and thus, $\|AX_{\F}-B\|_2 = \sqrt{2}$. On the other hand, for 
$$
X = \begin{bmatrix}1 & 0 \\ 0 & 0\end{bmatrix},\ AX- B = -\begin{bmatrix}1 & 0 \\ 0 & 0 \\ 0 & 1+\gamma \end{bmatrix},
$$
and so $\|AX-B\|_2 = (1+\gamma)$. As $\gamma \rightarrow 0$, the approximation factor becomes arbitrarily close to $\sqrt{2}$.

We note that the reduced-rank regression problem in operator norm is non-convex in $X$ due to the rank constraint, and it is not even clear this problem can be solved in polynomial time. Of the few techniques that are known for rank-constrained optimization, they do not apply here. One common method is alternating minimization, writing the problem above as $\min_{U, V} \|AUV-B\|_2$, where $U \in \mathbb{R}^{n \times k}$ and $V \in \mathbb{R}^{k \times d}$. The idea is to fix $U$, then solve for $V$, then fix $V$ and solve for $U$, and repeat. When $U$ is fixed, then $V = (AU)^{+} B$ is the optimum, and when $V$ is fixed, the solution turns out to be $U = A^+ B V^+$, though this is not as obvious, see (1.3) in \cite{m07}, taking $p \rightarrow \infty$, for a proof. It turns out if one initializes with the Frobenius norm solution $U, V$, then each of these operations does not change $U$ or $V$, and so by the example above, alternating minimization gives at best a $\sqrt{2}$-approximation. Other techniques include sketching to a small problem,
and solving the small problem in the sketch space; sketches are well-known not to 
apply to operator norm low rank approximation problems, motivating the first open
question in \cite{sketching-dw}. 

This issue of polynomial time solvability was raised in the control theory literature by \citet{sou2012generalized}, where a $(1+\varepsilon)$-approximation was obtained, but the the time required to find the solution was at least the time to perform a singular value decomposition (SVD) on matrices $A$ and $B$, which is prohibitive for large $n,c,$ and $d$. This is a common setting of parameters and indeed, one of the motivations for constraining $X$ to have rank at most $k$ in the first place. This motivates the question:

\begin{center}
``\textit{Are there fast algorithms for reduced-rank regression with operator norm error?}''
\end{center}

\subsection{Main Result} 
We answer the question above by designing a new randomized algorithm running
in time $$O\left(\left(\frac{\nnz{B} \cdot k}{\varepsilon} +\frac{\nnz{A}\cdot k}{\varepsilon^{1.5}} +\frac{c^2k}{\varepsilon^{1.5}} + \frac{(n+d)k^2}{\varepsilon} \right)\cdot \textnormal{polylog}(\kappa(B),n, d,k,1/\varepsilon)+ c^{\omega}\right).$$ Here, $\kappa(B)$ denotes $\sigma_1(B)/\sigma_{k+1}(B)$. 
This significantly
improves over Sou and Rantzer's polynomial time result, which takes $\Omega(nd^2 + nc^2)$ time.

We note that spectral
low rank approximation is a special case in which $A = B$, and the best known upper
bound is $O(\nnz{A} \cdot k/\sqrt{\varepsilon})$ for this problem, up to logarithmic factors
\cite{muscomusco}. A major open question in randomized numerical linear algebra
is to improve this bound (see, e.g., Open Question 1 of \cite{sketching-dw}), or show
that it is not possible. We note that for $k = 1$, in the matrix-vector query model,
$\Omega(1/\sqrt{\varepsilon})$ queries is known to be required if a slightly stronger guarantee
than spectral low rank approximation is desired, even for adaptive
algorithms \cite{bhsw20, simchowitz2018tight}. 
Another important point is that when $n=c$ and $d = 1$, this
is just the time to solve an arbitrary linear system, for which the best known
time is $c^{\omega}$. Improving either spectral low rank approximation or linear
system solving is a major open question, and barring that, our algorithm is optimal
up to a $1/\varepsilon$ factor and polylogarithmic factors involving
matrix dimensions and condition numbers. 

\subsection{Our Techniques}
Throughout the paper, let $\opt := \inf_{\text{rank-}k\ X}\opnorm{AX-B}$, $\beta$ be such that $(1+\varepsilon)\opt \le \beta \le (1+2\varepsilon)\opt$, and let $\Delta := \T{B}(I-AA^+)B$. The work of \citet{sou2012generalized} shows that $X_\beta = A^+ [AA^+ B(\beta^2I - \Delta)^{-1/2}]_k(\beta^2I - \Delta)^{1/2}$ satisfies $\opnorm{AX - B} < \beta$. For completeness, we give a short proof of this
fact in this paper. It is not a priori clear how to extract a solution from this expression,
while multiplying out all of the matrices, computing an inverse square root, and taking an SVD would take a prohibitive amount of time. This is essentially the algorithm
of \cite{sou2012generalized}.

%Throughout the paper, let $\opt := \min_{\text{rank-}k\ X}\opnorm{AX-B}$ and $\beta$ be such that $(1+\varepsilon)\opt \le \beta \le (1+2\varepsilon)\opt$. The work of \cite{sou2012generalized}
%implies an \emph{existential result}, namely that there exists a solution $X$ of cost $\beta$ such that $AX$ is  in the column span of the rank $k$ matrix
%$[AA^{+} B (\beta^2 I - B^T(I-AA^{+})B)^{-1/2}]_k,$
%where for a matrix $C$, the matrix $[C]_k$ denotes its
%best rank-$k$ approximation, as given by the SVD. 
%For completeness, we give a short proof of this
%fact in this paper. It is not a priori clear how to extract a solution from this expression,
%while multiplying out all of the matrices, computing an inverse square root, and taking an SVD would take a prohibitive amount of time. This is essentially the algorithm
%of \cite{sou2012generalized}.

We instead show that not only the best rank $k$ approximation of the matrix $AA^+ B(\beta^2I - \Delta)^{-1/2}$, but even a $1+\varepsilon$ approximation in spectral norm yields an overall solution of cost at most  $\beta(1+O(\varepsilon))$. To obtain such a $1+\varepsilon$ approximation, we next try to apply the iterative method of 
\cite{muscomusco} which computes the Krylov matrix
$K =  [C \cdot G, (C\T{C})\cdot C \cdot G, (C\T{C})^2\cdot C \cdot G, \ldots, (C\T{C})^{(q-1)/2}\cdot C \cdot G]$
where $G$ is a Gaussian matrix with $k$ columns, $q = O(\log(d/\varepsilon) \sqrt{1/\varepsilon})$ is an odd integer, and $C = AA^{+} B (\beta^2 I - \Delta)^{-1/2}$.
The first problem with this approach is that we have to compute the matrix vector product $CG$ and to do this, in each
iteration we need to (1) multiply by the square root of an inverse (multiplication by $(\beta^2 I - \Delta)^{-1/2})$, and then (2) project onto the column span of $A$ (multiplication by $AA^{+}$). 

Computing exact matrix-vector products with the matrices $AA^+$ and $(\beta^2I - \Delta)^{-1/2}$, is slow when $c,d$ are large, and finding the matrices $AA^+$ and $(\beta^2I - \Delta)^{-1/2}$ takes at least $\Omega(nc^2+\nnz{B}\cdot c + d^{\omega})$ time. To avoid such a running time, we show that the Block Krylov Iteration algorithm of \citet{muscomusco} works even with approximate matrix-vector products i.e., we only need algorithms to compute vectors $C \circ v$ and $\T{C} \circ v'$ for arbitrary vectors $v,v'$ such that $\opnorm{C \circ v - Cv}$ and $\opnorm{\T{C}\circ v' - \T{C}v'}$ are small. Here and throughout the paper, we use the notation $M \circ v$ to denote an approximation to the matrix-vector product $Mv$. 

An important idea of \citet{muscomusco} is that the Krylov matrix $K$ spans a rank $k$ matrix $p(C)G = \sum_{\text{odd\ }i \le q} p_i (C\T{C})^{(i-1)/2}G$, where $p$ is a polynomial, such that projecting the columns of the matrix $C$ onto the column span of $p(C)G$ gives a good rank $k$ approximation. To prove that the algorithm works even with approximate matrix-vector products, we first show that the approximations computed to matrices $(C\T{C})^{(i-1)/2}CG$ for $i=1,\ldots,q$ are good enough to imply that the approximate Krylov matrix $K'$ spans a matrix $\Apx$ that is close to the matrix $p(C)G$ in Frobenius norm. To then conclude that the column space of $\Apx$ is also a good subspace to project the matrix $C$ onto, we need to show that $(\Apx)(\Apx)^+ \approx (p(C)G)(p(C)G)^+$. We prove a simple lemma that shows if $\frnorm{p(C)G - \Apx}$ is small, and $p(C)G$ has a good condition number, and so then $\opnorm{(p(C)G)(p(C)G)^+ - (\Apx)(\Apx)^+}$ is small. Crucially, as $G$ is a Gaussian matrix that has, with good probability a good condition number, we only have to bound $\sigma_1(p(C))/\sigma_k(p(C))$ to obtain a bound on the condition number of $p(C)G$. Using several properties of Chebyshev polynomials used to define the polynomial $p(x)$, we show that $\sigma_1(p(C))/\sigma_{k}(p(C))$ can be bounded in terms of $\kappa = \sigma_1(C)/\sigma_{k+1}(C)$, which finally shows that the $k$-dimensional column span of $\Apx$ is also a good subspace to project the columns of $C$.

As the parameters of the polynomial $p(x)$ are unknown, we cannot actually compute the matrix $\Apx$ and then project $C$ onto the column span. But using the fact that $K'$ spans $\Apx$, we can conclude, similarly to the arguments of \cite{muscomusco}, that the best rank $k$ Frobenius norm approximation of $C$ in the span of $K'$ is a good rank $k$ approximation to $C$. Using the oracle to compute approximate matrix-vector products with the matrix $C$, we recover a $1+\varepsilon$ approximation to the best rank $k$ Frobenius norm approximation of $C$ inside the span of $K'$, which we then show is a $1+\varepsilon$ approximation to a spectral norm low rank approximation of matrix $C$. Our analysis that the Block Krylov Iteration algorithm works with approximate matrix-vector products could help justify why the Block Krylov Iteration algorithm works well when using finite precision arithmetic rather than exact arithmetic. Our results address the comments of \cite{mms-lancosz} about the stability of block Lanczos based methods for problems such as low rank approximation. Though several analyses of the noisy power method have been done previously \citep{balcan-du-wang-yu,hardt-price,hardt-roth}, where each intermediate computation is corrupted by Gaussian noise, we are not aware of an analysis that works for worst case corruption. Also, previous work bounds the amount of Gaussian noise that can be added in terms of a gap between $\sigma_k$ and $\sigma_{k+1}$, which can be $0$, and would not work for our analysis.

We return to the task at hand, i.e., of computing a low rank approximation of $AA^+ B(\beta^2I - \Delta)^{-1/2}$. We show that we can replace the matrix $(\beta^2I - \Delta)^{-1/2}$ with the matrix $(1/\beta) {r}(\Delta/\beta^2)$, where $ {r}(x)$ is a polynomial of degree $\tilde{O}(1/\sqrt{\varepsilon})$, using polynomial approximation techniques based on Chebyshev polynomials (see, e.g., \cite{sachdeva-vishnoi} and the references therein). Here we crucially use the fact that $(1+2\varepsilon)\opt \ge \beta \ge (1+\varepsilon)\opt \ge (1+\varepsilon)\opnorm{(I-AA^+)B}$ to lower bound the minimum singular value of the matrix $(\beta^2I - \Delta)$, thereby obtaining an upper bound on the number of terms required to approximate $(I-(\Delta/\beta^2))^{-1/2}$ with a Taylor series. Then we replace each monomial in the Taylor series with a low degree polynomial approximation to construct a polynomial $ {r}(x)$. The replacement of $(\beta^2I - \Delta)^{-1/2}$ with the matrix $ {r}(\Delta/\beta^2)$ is done as we can give very fast algorithms to approximately multiply a vector with the matrix $ {r}(\Delta/\beta^2)$, as discussed below. 

Let $\mathcal{M'} = AA^+ B\cdot  {r}(\Delta/\beta^2)$. Recall $\Delta = \T{B}(I-AA^+)B$. To approximate the matrix-vector product $\Delta u$ for an arbitrary vector $u$, we need only approximate $\T{B}AA^+ Bu$, since $\T{B}Bu$ can be computed exactly in $\nnz{B}$ time. For computing an approximation to $AA^+(Bu)$, we use fast sketching-based preconditioning methods for linear regression, which show given an arbitrary vector $b$ and accuracy parameter $\epsilonsub{reg}$ how to find an $x$ for which $\opnorm{Ax - AA^+ b} \le \epsilonsub{reg}\opnorm{(I-AA^+)b}$ in time $O((\nnz{A} + c^2)\log(1/\epsilonsub{reg}) + c^\omega)$, where $\omega \approx 2.376$ is the exponent of matrix multiplication \citep{cw13,mm13,nn13}. We note that we only need to pay the $c^{\omega}$ time once
to compute a preconditioner, after which each regression problem takes $O((\nnz{A} + c^2)\log(1/\epsilonsub{reg}))$ time. This algorithm to approximately compute $\Delta u$ for an arbitrary vector $u$ is extended to approximate $ {r}(\Delta/\beta^2) \cdot v$ for an arbitrary $v$. After approximating the product $ {r}(\Delta/\beta^2) \cdot v$ with a vector $y$, we approximate the vector $AA^+ By$ again using the sketching-based preconditioning methods for linear regression.

Similarly we also give an algorithm to approximate $\T{\mathcal{M'}}v'$ for an arbitrary vector $v'$. Thus, as discussed above, we can obtain using a Block Krylov algorithm, a matrix $Z$ with orthonormal columns for which
$
	\opnorm{Z\T{Z}\mathcal{M'} - \mathcal{M'}} \le (1+\varepsilon)\sigma_{k+1}(\mathcal{M'})
$
and then conclude that
\begin{equation*}
	\opnorm{AA^+ Z(AA^+ Z)^+ B - B} \le (1+O(\varepsilon))\beta = (1+O(\varepsilon))\opt
\end{equation*}
and that the rank $k$ matrix $X = A^+ Z(AA^+ Z)^+ B$ is a $1+O(\varepsilon)$ approximation for the problem $\min_{\text{rank-}k}\opnorm{AX-B}$. 

The time complexity of our algorithm depends logarithmically on $\kappa(B) = \sigma_1(B)/\sigma_{k+1}(B)$ and $\kappa(AA^+ B) = \sigma_1(AA^+ B)/\sigma_{k+1}(AA^+ B)$. We show that if $\tilde{B} = B + \alpha G\T{F}$ where $G$ is an $n \times (k+1)$ random Gaussian matrix and $\T{F}$ has $k+1$ orthonormal rows, then for a suitable value of $\alpha$, the condition number $\kappa(AA^+ \tilde{B}) \le (Cn/\varepsilon)\kappa(B)$ for a constant $C$. We also show that a $1+\varepsilon$ approximation for reduced rank regression computed using the matrix $\tilde{B}$ is a $1+O(\varepsilon)$ approximation for reduced rank regression on matrix $B$, thus removing the  dependence on $\kappa(AA^+ B)$. Note that matrix-vector products with $\tilde{B}$ can be computed in $\nnz{B} + (n+d)k$ time.

Our final dependence on $\varepsilon$ in the running time is $1/\varepsilon^{3/2}$, ignoring polylogarithmic factors, where a factor of $1/\sqrt{\varepsilon}$ is from the number of iterations in the Block Krylov Iteration algorithm of \citet{muscomusco}, a factor of $1/\sqrt{\varepsilon}$ is from the degree of the polynomial $ {r}(x)$, which is used as a proxy for the matrix $(\beta^2I - \Delta)^{-1/2}$ with a matrix $ {r}(\Delta/\beta^2)$, and a factor of $1/\sqrt{\varepsilon}$ is due to the running time of high-precision regression methods based on the accuracy with which the approximate matrix products need to be computed.

\section{Notation and Preliminaries}
For a matrix $M$, $\nnz{M}$ denotes the number of nonzero entries in $M$. We refer to the Singular Value Decomposition (SVD) with only nonzero singular values as the ``thin'' SVD. Given an arbitrary matrix $M$, $\text{colpsan}(M)$ denotes the subspace spanned by the columns of $M$, and the matrix $M^{+}$ denotes the Moore-Penrose pseudo-inverse of matrix $M$. Given a subspace $V$, the matrix $\P_V$ denotes the projection onto the subspace $V$. Therefore $\P_Vu = \argmin_{v \in V}\opnorm{u - v}$ for all vectors $u$. Given a matrix $M$, we use $\P_M$ to denote $\P_{\text{colspan}(M)}$.

For a matrix $M$, the Frobenius norm $(\sum_{i,j} M_{i,j}^2)^{1/2}$ is denoted by $\frnorm{M}$ and the operator norm (or spectral norm) $\sup_{x} \opnorm{Mx}/\opnorm{x}$ is denoted by $\opnorm{M}$. For a square matrix $M$, \text{tr}($M$) denotes the sum of diagonal entries. For matrices $M$ and $M'$ of the same dimensions, $\langle M, M' \rangle$ denotes $\text{tr}(\T{M}M') = \sum_{i,j}M_{i,j}M'_{i,j}$. We use the following standard facts repeatedly throughout the paper: for any matrix $M$, (1) $\opnorm{M} \le \frnorm{M}$, (2) $\frnorm{M} \le \sqrt{\text{rank}(M)}\opnorm{M}$ and (3) $\P_M = MM^{+}$. For any matrices $A,B$ and $C$, (i) $\text{tr}(ABC) = \text{tr}(BCA)$, (ii) $\frnorm{ABC} \le \opnorm{A}\frnorm{B}\opnorm{C}$ and (iii) $\langle A, B \rangle \le \frnorm{A}\frnorm{B}$.

For a symmetric matrix $M$, define $\text{psd}(M)$ to be the closest positive semi-definite matrix to $M$ in Frobenius norm. It can be shown that if $M = \sum_i \lambda_i v_i\T{v_i}$, then $\text{psd}(M) = \sum_{i : \lambda_i \ge 0}\lambda_i v_i\T{v_i}$.

\textbf{Weyl's Inequality.} For matrices $A$ and $B$, Weyl's inequality gives that 
$
 \sigma_{i+j-1}(A + B) \le \sigma_i(A) + \sigma_j(B)   
$
for all $i$ and $j$. In particular, if $\opnorm{A - B} \le \varepsilon$, $|\sigma_i(A) - \sigma_i(B)| \le \varepsilon$ for all $i$.

\textbf{Polynomials and Matrices.} Let $ {p}(x) = \sum_{i=0}^d p_ix^i$ be a degree $d$ polynomial. We define $\| {p}\|_1 := \sum_i |p_i|$ to be the sum of absolute values of the coefficients of the polynomial $p(x)$. Given $A \in \R^{n \times d}$, let $A = U\Sigma\T{V}$ be the singular value decomposition of $A$ with $\Sigma \in \R^{n \times d}$. Define $ {p}(A) := Up(\Sigma)\T{V}$ where $ {p}(\Sigma)$ is the matrix with main diagonal entries $ {p}(\sigma_1),\ldots, {p}(\sigma_d)$. It is easy to check that the singular values of $ {p}(A)$ are equal to $| {p}(\sigma_1)|,\ldots,| {p}(\sigma_d)|$.

\textbf{Singular Value Excess.} Let $A \in \R^{n \times d}$ with $n \ge d$ be an arbitrary matrix. Let $\sigma_1 \ge \sigma_2 \ge \cdots \ge \sigma_d \ge 0$ be the singular values of matrix $A$. The Singular Value Excess of matrix $A$, denoted by $\text{sve}(A)$, is defined as the number of singular values of matrix $A$ that are greater than or equal to $1$ i.e.,
\begin{equation*}
	\text{sve}(A) = |\set{i \in [d]\,|\, \sigma_i \ge 1}|.
\end{equation*}
As eigenvalues of matrix $I-\T{A}A$ are $1-\sigma_1^2 \le \cdots \le 1-\sigma_d^2$, $\text{sve}(A)$ is equal to the number of non-positive eigenvalues of the matrix $I-\T{A}A$. For any symmetric matrix $M$, let $k^-(M)$ denote the number of non-positive eigenvalues of the matrix $M$. For any matrix $A$,
$
	\text{sve}(A) = k^-(I-\T{A}A).
$

\textbf{Sketching Based Preconditioning for High-Precision Regression.} Given a matrix $A \in \R^{n \times c}$ and a vector $b \in \R^{n}$, we use fast sketching based preconditioning methods given by the following theorem to obtain a $(1+\varepsilon)$ approximation to the problem $\min_x \opnorm{Ax - b}$. See \cite{sketching-dw} and references therein for more background.
\begin{theorem}[High Precision Regression/Approximate Projections]
	\label{thm:high-precision-regression}
	Given a matrix $A \in \Real^{n \times c}$ and a vector $b \in \Real^n$, we can compute a vector $x$ in time $O((\nnz{A}+c^2)\log(1/\varepsilon) + c^\omega)$ that satisfies
	$
		\opnorm{Ax - b}^2 \le (1+\varepsilon)\opnorm{AA^+ b - b}^2.
	$
By the Pythagorean theorem, the vector $x$ obtained satisfies
$
	\opnorm{AA^+ b - Ax}^2 \le \varepsilon\opnorm{AA^+ b - b}^2.
$
\end{theorem}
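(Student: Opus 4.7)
The plan is to use the standard sketch-and-precondition paradigm. First I would draw an oblivious subspace embedding $S \in \Real^{m \times n}$ for $c$-dimensional subspaces with constant distortion, for example an OSNAP/CountSketch-style matrix with $m = O(c^2)$ rows that can be applied to $A$ in $O(\nnz{A})$ time. With good probability this gives $\opnorm{SAx} \in [\frac{1}{2}\opnorm{Ax}, \frac{3}{2}\opnorm{Ax}]$ for every $x \in \Real^c$ simultaneously.

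Next I would compute a (thin) QR factorization $SA = QR$ with $R \in \Real^{c \times c}$; since $SA$ has $O(c^2)$ rows, this costs $O(c^\omega)$ time, paid once up front. The subspace embedding property transfers to singular values: all singular values of $AR^{-1}$ lie in a constant interval, so $AR^{-1}$ has $O(1)$ condition number. I would then solve the preconditioned least squares problem $\min_y \opnorm{AR^{-1} y - b}^2$ using a standard iterative method (e.g., LSQR or preconditioned conjugate gradient on the normal equations). Because the condition number is $O(1)$, the residual contracts by a constant factor per iteration, so $O(\log(1/\varepsilon))$ iterations suffice. Each iteration performs one multiplication by $A$, one by $\T{A}$, and two triangular solves with $R$ and $\T{R}$, costing $O(\nnz{A} + c^2)$. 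Returning $x = R^{-1} y$ yields $\opnorm{Ax - b}^2 \le (1 + \varepsilon)\min_{x'}\opnorm{Ax' - b}^2 = (1+\varepsilon)\opnorm{AA^+ b - b}^2$. Summing the sketching cost $O(\nnz{A})$, the QR cost $O(c^\omega)$, and the $O(\log(1/\varepsilon))$ iterations of cost $O(\nnz{A} + c^2)$ gives the claimed running time.

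For the second assertion I would just invoke the Pythagorean theorem: $AA^+ b$ is the orthogonal projection of $b$ onto $\text{colspan}(A)$, and $Ax - AA^+ b \in \text{colspan}(A)$ is orthogonal to the residual $AA^+ b - b$. Hence $\opnorm{Ax - b}^2 = \opnorm{Ax - AA^+ b}^2 + \opnorm{AA^+ b - b}^2$, and combining this identity with the first bound yields $\opnorm{AA^+ b - Ax}^2 \le \varepsilon\opnorm{AA^+ b - b}^2$.

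The only nontrivial step is showing that the constant-distortion subspace embedding really does produce a constant-condition-number preconditioner $R^{-1}$; this is standard but the cleanest derivation is via $\T{(AR^{-1})}(AR^{-1}) \approx I$ as a direct consequence of $\T{(SAR^{-1})}(SAR^{-1}) = I$ (since $SAR^{-1} = Q$ has orthonormal columns) together with the subspace embedding guarantee. Everything else is bookkeeping on the per-iteration cost and the geometric convergence of a well-conditioned iterative solver, both of which are entirely standard; a full proof is given in the references cited in the theorem statement.
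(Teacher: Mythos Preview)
Your proposal is correct and is precisely the standard sketch-and-precondition argument from the references the paper cites (\cite{cw13,mm13,nn13}; see also \cite{sketching-dw}). Note, however, that the paper does not actually prove this theorem: it is stated in the preliminaries as a known black-box result, with only the Pythagorean-theorem consequence spelled out in the statement itself. So there is no ``paper's own proof'' to compare against beyond the citation; your write-up supplies exactly the argument those references contain.
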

We have to pay $c^\omega$ only once to compute a preconditioner. Thereafter, every regression problem can be solved in time $O((\nnz{A}+c^2)\log(1/\varepsilon))$. Throughout the paper, we use \textsc{HighPrecisionRegression}$(A,b, \epsilon)$ to denote the algorithm implied by Theorem~1. We extend the notation to compute approximate projections of each of the columns of matrix $B$, instead of just a single vector $b$, onto the column space of $A$.

\textbf{Low Rank Approximation(LRA).} Let $A \in \Real^{n \times c}$ and $A = U\Sigma\T{V}$ be its ``thin'' Singular Value Decomposition, where $\T{U}U = I$, $\T{V}V=I$ and $\Sigma = \text{diag}(\sigma_1,\ldots,\sigma_{\text{rank}(A)})$ with $\sigma_1 \ge \sigma_2 \ge \cdots\ge \sigma_{\text{rank}(A)} > 0$. For any $k \le \text{rank}(A)$, we define
$
    [A]_k := \sum_{i=1}^k \sigma_iU_{*i}(\T{V})_{i*},
$
where $U_{*i}$ denotes the $i$-th column of matrix $U$ and $\T{V}_{i*}$ denotes the $i$-th row of matrix $\T{V}$. The matrix $[A]_k$ optimally solves the problems $\min_{\text{rank-}k\ X}\frnorm{A - X}$ and $\min_{\text{rank-}k\ X}\opnorm{A - X}$. As computing $[A]_k$ exactly is expensive, we use the Block Krylov Iteration algorithm of \cite{muscomusco} to obtain a matrix $Z \in \Real^{n \times k}$ for which $Z\T{Z}A$ is a good solution to the Frobenius norm and spectral norm low rank approximation problems.
\begin{theorem}[\citet{muscomusco}]
    \label{thm:musco-musco}
    Given a matrix $A \in \Real^{n \times d}$ such that the products $Av \in \Real^{n}$ and $\T{A}v' \in \Real^d$ can be computed in time $T$ for any vectors $v \in \Real^d$ and $v' \in \Real^n$, the Block Krylov Iteration algorithm runs in time
    \begin{equation*}
        O\left(T\frac{k\log d}{\varepsilon^{1/2}} + \frac{nk^2\log^2(d)}{\varepsilon} + \frac{k^3\log^3(d)}{\varepsilon^{3/2}}\right) 
    \end{equation*}
    and returns a matrix $Z \in \Real^{n \times k}$ with orthonormal columns for which  
    $$
        \opnorm{A - Z\T{Z}A} \le (1 + \varepsilon)\opnorm{A - [A]_k}\ \text{and}\ \frnorm{A - Z\T{Z}A} \le (1 + \varepsilon)\frnorm{A - [A]_k}.$$
\end{theorem}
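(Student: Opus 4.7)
The plan is to implement Block Krylov Iteration directly. Fix an odd integer $q = \Theta(\log(d)/\sqrt{\varepsilon})$, draw a Gaussian matrix $G \in \mathbb{R}^{d \times k}$, and build
$$K = [\,AG,\ (AA^{\mathsf{T}})AG,\ (AA^{\mathsf{T}})^2AG,\ \ldots,\ (AA^{\mathsf{T}})^{(q-1)/2}AG\,] \in \mathbb{R}^{n \times (qk+k)/2}.$$
Orthonormalize the columns of $K$ to obtain $Q$, form the small matrix $M = Q^{\mathsf{T}} A A^{\mathsf{T}} Q$, let $\tilde{Z}$ be its top-$k$ eigenvectors, and return $Z = Q\tilde{Z}$. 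Note that each column block of $K$ requires two matvecs with $A$ or $A^{\mathsf{T}}$, so all of $K$ can be formed using $O(qk)$ oracle calls.

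The analysis hinges on constructing, via scaled/shifted Chebyshev polynomials, an odd polynomial $p$ of degree $q$ that behaves as a gap-free spectral filter: one can ensure $|p(\sigma)| \ge 1$ for $\sigma \ge (1+\varepsilon/2)\sigma_{k+1}(A)$ while $|p(\sigma)| \le 1/\poly(d)$ for $\sigma \le \sigma_{k+1}(A)$. Because $p$ is odd and has degree $q$, the matrix $p(A)G = \sum_{j=0}^{(q-1)/2} p_{2j+1} (AA^{\mathsf{T}})^j A G$ lies in the column span of $K$, so projecting $A$ onto $\mathrm{colspan}(K)$ is at least as good as projecting onto $\mathrm{colspan}(p(A)G)$. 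I would then argue that, with high probability over $G$, the projector $P_{p(A)G}$ captures the top-$k$ left singular directions of $A$ up to a $(1+\varepsilon)$ factor on every direction; the Gaussian initialization guarantees that the $k \times k$ leading block of $V^{\mathsf{T}}G$ is well-conditioned, and the polynomial filter then dominates the tail contribution.

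From this one gets a per-direction bound of the form $\opnorm{(I - P_K) A w} \le (1+\varepsilon)\sigma_{k+1}(A)$ for every unit $w$, which in particular yields the Frobenius-norm guarantee $\frnorm{A - P_K A} \le (1+\varepsilon)\frnorm{A - [A]_k}$ by summing the squared per-direction errors. The output $Z = Q\tilde{Z}$ is the best rank-$k$ approximation of $A$ (in the spectral/Frobenius sense) inside $\mathrm{colspan}(Q)$, obtained from the small eigenproblem on $M$; since $[P_K A]_k$ is no worse than $P_K A$ for either norm and since $Z\T{Z}A = [P_K A]_k$ up to the choice of top-$k$ subspace, both stated bounds follow. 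Running time decomposes as: $O(Tqk) = O(Tk\log(d)/\sqrt{\varepsilon})$ for the matvecs; $O(n(qk)^2) = O(nk^2\log^2(d)/\varepsilon)$ for the QR of $K$; and $O((qk)^3) = O(k^3\log^3(d)/\varepsilon^{3/2})$ for forming and diagonalizing $M$.

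The main obstacle is the spectral-norm guarantee in the gap-free regime (where $\sigma_k(A)$ may equal $\sigma_{k+1}(A)$). The Frobenius bound follows from standard sketching arguments once the polynomial filter is in hand, but the spectral bound requires a direction-by-direction analysis: one must split an arbitrary unit test vector $w$ into components in directions with $\sigma_i \ge (1+\varepsilon/2)\sigma_{k+1}$, intermediate directions, and small directions, and show that the Krylov projector loses at most a $(1+\varepsilon)$ factor on each. Controlling the intermediate band is what forces the Chebyshev construction and the $\sqrt{1/\varepsilon}$ scaling of $q$; without this careful filter the degree would blow up to $1/\varepsilon$ and the stated runtime would not be achievable.
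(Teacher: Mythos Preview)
The paper does not prove this theorem; it is stated in the Preliminaries section as a cited result from \cite{muscomusco}, with no proof supplied. There is therefore nothing in the paper itself to compare your proposal against. Your sketch is a reasonable high-level outline of the Musco--Musco argument (build the Krylov space with $q=\Theta(\log(d)/\sqrt{\varepsilon})$ blocks, observe that $p(A)G$ lies in its span for a suitable Chebyshev-based odd polynomial $p$, then extract the best rank-$k$ approximation inside the span via the small eigenproblem on $\T{Q}A\T{A}Q$), and your running-time accounting matches the statement.

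One substantive remark: you identify the spectral-norm guarantee in the gap-free regime as the main obstacle and propose handling it by a direction-by-direction splitting of a test vector into large, intermediate, and small singular directions. That is not how the argument actually goes, either in \cite{muscomusco} or in this paper's own adaptation (the proof of Theorem~\ref{thm:main-theorem-krylov} in the appendix). Instead, one first establishes an \emph{additive} Frobenius bound of the form $\frnorm{M-Y^m\T{(Y^m)}M}^2 \le \frnorm{M-[M]_m}^2 + \varepsilon\sigma_{k+1}(M)^2$, where $m\le k$ is the largest index with $\sigma_m(M)\ge (1+\varepsilon)\sigma_{k+1}(M)$, and then invokes the Frobenius-to-spectral additive error translation (Theorem~3.2 of \cite{gu-error-translation-fro-to-op}) to obtain $\opnorm{M-Y^m\T{(Y^m)}M}^2 \le \sigma_{m+1}(M)^2 + \varepsilon\sigma_{k+1}(M)^2 \le (1+O(\varepsilon))\sigma_{k+1}(M)^2$. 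This sidesteps any explicit control of the intermediate band and is both shorter and more robust than the decomposition you sketch.
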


\textbf{Frobenius Norm Reduced-Rank Regression.} As discussed in the introduction, there is a closed form solution to the reduced-rank Frobenius norm regression problem.
\begin{lemma}[Lemma~4.1 of \cite{sketching-dw}, Lemma~2 of \cite{muscomusco}]\label{lma:frobenius-norm-rrr}
Given matrices $A \in \R^{n \times c}$, $B \in \R^{n \times d}$, and a rank parameter $k \le c$, let matrix $Q$ denote an orthonormal basis for the column span of $A$. Then
$
    \min_{\text{rank-}k\ X} \frnorm{AX - B} = \frnorm{Q[\T{Q}B]_k - B} = \frnorm{[AA^+ B]_k - B}.
$
If $\bar{U}\bar{\Sigma}^2\T{\bar{U}}$ is the SVD of $\T{Q}A\T{A}Q$, and $\bar{U}_k$ denotes the first $k$ columns of $\bar{U}$, then $[\T{Q}B]_k = \bar{U}_k\T{\bar{U}_k}\T{Q}B$, and therefore
\begin{equation*}
    \min_{\text{rank-}k\ X}\frnorm{AX - B} = \frnorm{Q[\T{Q}B]_k - B} = \frnorm{(Q\bar{U}_k)\T{(Q\bar{U}_k)}B - B}.
\end{equation*}
\end{lemma}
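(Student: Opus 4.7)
The strategy is to reduce the rank-constrained minimization to an unconstrained rank-$k$ approximation via an orthogonal change of variable, then invoke the Pythagorean theorem and Eckart--Young. First I would observe that every feasible $AX$ has column span contained in $\text{colspan}(A) = \text{colspan}(Q)$ and rank at most $k$, so it can be written as $AX = QW$ with $W := \T{Q}AX$ of rank at most $k$; conversely, given any rank-at-most-$k$ matrix $W$, setting $X := A^+ Q W$ also has rank at most $k$ and satisfies $AX = AA^+ QW = QW$. Hence the original problem is equivalent to $\min_{\text{rank-}k\ W} \frnorm{QW - B}$.

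Next, since $Q\T{Q} = AA^+$ is an orthogonal projector and $\T{Q}Q = I$, a Pythagorean decomposition gives
\[
\frnorm{QW - B}^2 = \frnorm{Q(W - \T{Q}B)}^2 + \frnorm{Q\T{Q} B - B}^2 = \frnorm{W - \T{Q}B}^2 + \frnorm{(I - AA^+)B}^2 .
\]
By Eckart--Young the rank-$k$ minimizer of the first term is $W = [\T{Q}B]_k$, which establishes $\min_X \frnorm{AX - B} = \frnorm{Q[\T{Q}B]_k - B}$. For the second equality, if $\T{Q}B = \hat U\hat\Sigma \T{\hat V}$ is a thin SVD, then $Q\T{Q}B = (Q\hat U)\hat\Sigma \T{\hat V}$ is also a thin SVD because $Q\hat U$ has orthonormal columns; truncating at rank $k$ yields $[AA^+ B]_k = Q[\T{Q}B]_k$, so $\frnorm{Q[\T{Q}B]_k - B} = \frnorm{[AA^+ B]_k - B}$.

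For the explicit closed form, I would extract the left singular vectors of $\T{Q}B$ from the eigendecomposition of $(\T{Q}B)\T{(\T{Q}B)} = \T{Q}B\T{B}Q$ (which I take to be the matrix intended in the statement). Writing this eigendecomposition as $\bar U \bar\Sigma^2 \T{\bar U}$, the columns of $\bar U$ are left singular vectors of $\T{Q}B$ with the corresponding singular values on the diagonal of $\bar\Sigma$, so $[\T{Q}B]_k = \bar U_k \T{\bar U_k}\T{Q}B$ by SVD truncation. Substituting into $Q[\T{Q}B]_k$ and using $\T{(Q\bar U_k)} = \T{\bar U_k}\T{Q}$ gives the advertised formula $(Q\bar U_k)\T{(Q\bar U_k)}B$; and $Q\bar U_k$ has orthonormal columns as a product of matrices with orthonormal columns.

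The lemma is essentially routine given Eckart--Young; the only subtlety is the opening reduction, where I must verify that the substitution $X \leftrightarrow W$ is a bijection between rank-at-most-$k$ feasible matrices on the two sides and that it preserves the objective, which I handle via $X = A^+ Q W$ in one direction and $W = \T{Q}AX$ in the other. Once this is in place, the remainder is direct manipulation of SVDs and orthogonal projectors.
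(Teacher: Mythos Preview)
The paper does not actually prove this lemma; it is stated as a preliminary result with citations to \cite{sketching-dw} and \cite{muscomusco}, and the only justification in the paper is the informal sketch in the introduction (``By the Pythagorean theorem, \ldots the optimal cost decomposes into the sum of squared distances of columns of $B$ to the column span of $A$ followed by the best rank-$k$ approximation to the projected matrix inside of the column span of $A$''). Your proposal is correct and is precisely a formalization of that sketch: the reduction $AX \leftrightarrow QW$, the Pythagorean split, Eckart--Young, and the SVD identity $[AA^+B]_k = Q[\T{Q}B]_k$ are exactly the standard route, and your treatment of the bijection via $X = A^+QW$ is clean.

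You are also right to flag the typo: the statement's ``$\T{Q}A\T{A}Q$'' should read $\T{Q}B\T{B}Q$, as confirmed by the paper's own later use of the lemma (in the proof of Lemma~\ref{lma:existence-of-good-space-inside-k-prime}, where it invokes the SVD of $\T{Q'}M\T{M}Q'$ with $M$ playing the role of $B$).
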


\textbf{Chebyshev Polynomials.} The Chebyshev polynomials are defined as 
\begin{align*}
    T_0(x) = 1,\ 
    T_1(x) = x\ \text{and}\ 
    T_i(x) = 2xT_{i-1}(x) - T_{i-2}(x)
\end{align*}
for all $i \ge 2$. Thus $T_i(x)$ is a polynomial of degree $i$. It can be shown that if $i$ is odd, then $T_i(x)$ has only odd degree monomials. Chebyshev polynomial $T_i$ has the property that $\|T_i\|_1 \le (1+\sqrt{2})^i$ for all $i$. See \cite{muscomusco} for more properties of Chebyshev polynomials.

\section{Previous work}
\label{sec:related-work}
Let $A \in \R^{n \times c}$ be a matrix and $U\Sigma V^{\mathsf{T}}$ be the ``thin'' SVD of $A$, where $U$ is an orthonormal basis for the column space of $A$. Note that the projection matrix onto the column space of $A$ is given by $AA^+ = U\T{U}$.
The first algorithm to solve $\min_{\text{rank-}k\ X}\opnorm{AX - B}$ 
was by \citet{sou2012generalized}. They consider the following problem:
\begin{align}
    \text{minimize } &\rank(X)  \nonumber\\ \text{such that } &\opnorm{AX - \B} < 1. \label{eqn:sr-problem}
\end{align}
As multiplying a matrix with a projection matrix does not increase the operator norm, we have that  $\opnorm{AX-B} \ge \opnorm{(I-AA^+)(AX-B)} = \opnorm{(I-AA^+)B}$. Thus the problem is feasible only when $\opnorm{(I-AA^+)B} = \opnorm{(I-U\T{U})B} < 1$. The following theorem characterizes the solution for~\eqref{eqn:sr-problem}.
\begin{theorem}[\citet{sou2012generalized}]
    \label{thm:sou-rantzer}
Given matrices $A \in \R^{n \times c}$ and a matrix $B \in \R^{n \times d}$, if there is a matrix $Y$ such that $\opnorm{AY - B} < 1$, then the optimum value of \eqref{eqn:sr-problem} is sve($B$) where sve($B$) denotes the number of singular values of $B$ that are greater than or equal to $1$.
\end{theorem}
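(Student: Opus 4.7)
The plan is to prove matching rank bounds. For the lower bound, suppose $X$ has rank $r$ and $\opnorm{AX - B} < 1$. Then $AX$ has rank at most $r$, so Weyl's inequality applied to $B = AX + (B - AX)$ gives
$$\sigma_{r+1}(B) \le \sigma_{r+1}(AX) + \opnorm{B - AX} = 0 + \opnorm{B - AX} < 1,$$
which shows at most $r$ singular values of $B$ are $\ge 1$, i.e.\ $\text{sve}(B) \le r$, and hence any feasible $X$ has rank at least $\text{sve}(B)$.

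For the upper bound, I would set $s = \text{sve}(B)$, $P = AA^+$, $\Delta = \T{B}(I - P)B$, $M = PB(I - \Delta)^{-1/2}$, and take $X = A^+ [M]_s (I - \Delta)^{1/2}$, which has rank at most $s$. Feasibility forces $\opnorm{(I-P)B} < 1$, so $\opnorm{\Delta} < 1$ and $I - \Delta$ is strictly positive definite, justifying the fractional powers. Because the columns of $[M]_s$ lie in $\operatorname{colspan}(M) \subseteq \operatorname{colspan}(A)$, we have $P[M]_s = [M]_s$, and combined with $PB = M(I - \Delta)^{1/2}$ the error splits as
$$AX - B = ([M]_s - M)(I - \Delta)^{1/2} - (I - P)B.$$
The first term has columns in $\operatorname{colspan}(A)$ and the second in its orthogonal complement, so in $\T{(AX - B)}(AX - B)$ the cross terms vanish and, writing $E := [M]_s - M$,
$$\T{(AX - B)}(AX - B) = (I - \Delta)^{1/2}\,\T{E}E\,(I - \Delta)^{1/2} + \Delta.$$
Conjugating $\T{E}E \prec I$ by $(I - \Delta)^{1/2}$ gives $(I - \Delta)^{1/2}\T{E}E(I - \Delta)^{1/2} \prec I - \Delta$; adding $\Delta$ to both sides then shows $\opnorm{AX - B} < 1$ as soon as $\opnorm{E} = \sigma_{s+1}(M) < 1$.

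The heart of the argument, and what I expect to be the main obstacle, is therefore proving $\sigma_{s+1}(M) < 1$. A direct expansion using $\T{B}PB = \T{B}B - \Delta$ yields $\T{M}M = I - L$ where $L = (I - \Delta)^{-1/2}(I - \T{B}B)(I - \Delta)^{-1/2}$. Since $(I - \Delta)^{-1/2}$ is invertible, Sylvester's law of inertia implies $L$ and $I - \T{B}B$ share the same signature, so the number of non-positive eigenvalues of $L$ equals $k^-(I - \T{B}B) = \text{sve}(B) = s$. Hence the $(s+1)$-st smallest eigenvalue of $L$ is strictly positive, which forces the $(s+1)$-st largest eigenvalue of $\T{M}M = I - L$ to be strictly less than $1$, i.e.\ $\sigma_{s+1}(M) < 1$. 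Combined with the previous paragraph, this gives $\opnorm{AX - B} < 1$ with $X$ of rank at most $s$, matching the lower bound and completing the proof.
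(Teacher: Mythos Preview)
Your proof is correct and follows essentially the same route as the paper: the rank-$s$ solution $X = A^+[M]_s(I-\Delta)^{1/2}$ and the crucial identity $\T{M}M = I - (I-\Delta)^{-1/2}(I-\T{B}B)(I-\Delta)^{-1/2}$ combined with Sylvester's law of inertia match the paper's argument exactly (your $M = PB(I-\Delta)^{-1/2}$ is just $U\hat{B}$ in the paper's notation, hence has the same singular values). The only difference is cosmetic: the paper derives both rank bounds from a single chain of equivalences $\opnorm{UX-B}<1 \iff \opnorm{X(I-\Delta)^{-1/2}-\hat{B}}<1$, whereas you handle the lower bound separately via Weyl's inequality, which is arguably cleaner.
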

For an arbitrary $s > 0$, consider the problem \eqref{eqn:sr-problem} with matrices $A/s$ and $B/s$. The problem is feasible if and only if $\opnorm{(I-U\T{U})(B/s)} < 1$, i.e., if and only if $\opnorm{(I-U\T{U})B}< s$. Suppose $s$ is such that $s > \opnorm{(I-U\T{U})B}$. Then Theorem~\ref{thm:sou-rantzer} implies that there is a rank $k$ matrix $X$ such that $\opnorm{(A/s)X-(B/s)} < 1$ if and only if $k \ge \text{sve}(B/s)$, i.e., $\sigma_{k+1}(B/s) < 1$. This argument shows that for any $s > \max(\sigma_{k+1}(B),\opnorm{(I-U\T{U})B})$, there is a rank $k$ matrix $X$ such that $\opnorm{AX-B}< s$. Thus $\opt = \max(\sigma_{k+1}(B),\opnorm{(I-U\T{U})B})$.

It is interesting and perhaps surprising that the above theorem implies we can obtain a solution that has a value {$\max(\sigma_{k+1}(B),\opnorm{(I-U\T{U})B})$}, which is a simple lower bound on the optimum. This shows that if $\opnorm{(I-U\T{U})B} \le \sigma_{k+1}(B)$, there is a rank $k$ matrix in the column span of matrix $A$ that is as good of an approximation to $B$ in spectral norm as $[B]_k$. Also, if $\opnorm{(I-U\T{U})B} \ge \sigma_{k+1}(B)$, then there is a rank-$k$ matrix in the column space of $A$ that is as good of an approximation to $B$ in spectral norm as $AA^+ B = U\T{U}B$, the projection of $B$ onto the column span of $A$. 

We thus have the following corollary summarizing the discussion above. The corollary was also observed in \cite[see][Section 4]{nambirajan2015topics} in terms of a different parameter they call the critical rank.
\begin{corollary}
	Given matrices $A \in \R^{n \times c}, B \in \R^{n \times d}$ and a parameter $k$,
	\begin{equation*}
		\inf_{\text{rank-}k\ X}\opnorm{AX-B} = \max(\opnorm{(I-AA^+)B},\sigma_{k+1}(B)).
	\end{equation*}
	\label{cor:optimum-value}
\end{corollary}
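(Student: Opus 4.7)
The plan is to prove the two inequalities separately, with the lower bound following from standard facts and the upper bound being a direct scaling consequence of Theorem~\ref{thm:sou-rantzer}, exactly along the lines of the discussion that precedes the corollary.

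For the lower bound, I would argue that for any rank-$k$ matrix $X$, two lower bounds on $\opnorm{AX-B}$ hold simultaneously. First, since $I - AA^+$ is an orthogonal projection, it is contractive in operator norm, and $(I-AA^+)A = 0$, so
\[
\opnorm{AX - B} \;\ge\; \opnorm{(I-AA^+)(AX-B)} \;=\; \opnorm{(I-AA^+)B}.
\]
Second, because $AX$ has rank at most $k$, the matrix $AX$ is a feasible candidate for rank-$k$ approximation of $B$ in operator norm, and the Eckart--Young theorem gives $\opnorm{AX - B} \ge \opnorm{B - [B]_k} = \sigma_{k+1}(B)$. Taking the maximum of these two bounds and then infimum over rank-$k$ matrices $X$ yields the $\ge$ direction.

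For the upper bound, I would use the scaling argument already sketched in the paragraph before the corollary. Fix any $s > \max(\opnorm{(I-AA^+)B},\sigma_{k+1}(B))$ and apply Theorem~\ref{thm:sou-rantzer} to the pair $(A/s, B/s)$. The feasibility condition $\opnorm{(I-AA^+)(B/s)} < 1$ holds by the choice of $s$, so the theorem gives that the minimum rank among matrices $Y$ with $\opnorm{(A/s)Y - B/s} < 1$ equals $\mathrm{sve}(B/s)$. Since $s > \sigma_{k+1}(B)$, at most $k$ singular values of $B/s$ are $\ge 1$, so $\mathrm{sve}(B/s) \le k$. Hence there exists a rank-$k$ matrix $X$ with $\opnorm{(A/s)X - B/s} < 1$, equivalently $\opnorm{AX - B} < s$. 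Letting $s$ decrease to $\max(\opnorm{(I-AA^+)B},\sigma_{k+1}(B))$ gives $\inf_{\text{rank-}k\ X}\opnorm{AX-B} \le \max(\opnorm{(I-AA^+)B},\sigma_{k+1}(B))$.

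There is no real obstacle here; both directions are essentially one-liners once Theorem~\ref{thm:sou-rantzer} is in hand. The only mild subtlety is to ensure the scaling argument goes through with a strict inequality and then pass to the infimum, which is why I state the upper bound for every $s$ strictly larger than the desired quantity and only at the end take a limit. Combining the two inequalities yields the claimed equality.
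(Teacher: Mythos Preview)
Your proposal is correct and follows essentially the same approach as the paper: the lower bound via projection contractivity and Eckart--Young is exactly the ``simple lower bound'' the paper alludes to, and your scaling argument for the upper bound reproduces verbatim the discussion preceding the corollary. The only cosmetic difference is that the paper leans on the ``if and only if'' in Theorem~\ref{thm:sou-rantzer} rather than invoking Eckart--Young separately, but your version is equally valid and arguably cleaner.
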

We give a proof of Theorem~\ref{thm:sou-rantzer} for completeness in Appendix~\ref{subsec:proof-of-sou-rantzer}. Our proof is similar to the proof of \citet{sou2012generalized} with some minor changes.
\section{Reduced-Rank Regression in Operator Norm}\label{sec:reduced-rank-regression-intro}
We first consider the case when $c,d$ are small. In this case, we could assume that we can compute matrices $U$ and $\Delta$, where $U$ is an orthonormal basis for the column span of matrix $A$, and the matrix $\Delta = \T{B}(I-U\T{U})B$. We give a simple algorithm that demonstrates our techniques. We then extend these ideas to the case when $c,d$ are large, for which computing an orthonormal basis for $A$ and computing $\Delta$ is prohibitively expensive.

From Corollary~\ref{cor:optimum-value}, we have that $\opt = \max(\opnorm{(I-U\T{U})B},\sigma_{k+1}(B))$. Let $\beta$ be such that $(1+\varepsilon)\opt \le \beta \le (1+2\varepsilon)\opt$, which can be found using the Block Krylov algorithm. Throughout the paper we assume we know the value $\beta$.
\begin{lemma}
If there exists a rank-$k$ matrix $X$ such that $\opnorm{UX - B} < \beta$, then
$
    \sigma_{k+1}(\T{U}B(\beta^2I - \Delta)^{-1/2}) < 1.
$
\label{lma:attainable-condition}
\end{lemma}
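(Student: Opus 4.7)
My plan is to reduce the desired singular value bound to a PSD-inequality derived from the assumption $\opnorm{UX-B}<\beta$, by completing the square around the matrix $\Delta = \T{B}(I-U\T{U})B$.

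First I would translate the operator-norm hypothesis into a PSD statement: $\opnorm{UX-B}<\beta$ is equivalent to $(UX-B)^{\T{}}(UX-B) \prec \beta^2 I$. Expanding the left side and using $\T{U}U=I$ gives
\begin{equation*}
\T{X}X - \T{X}\T{U}B - \T{B}UX + \T{B}B \prec \beta^2 I.
\end{equation*}
Since $\Delta = \T{B}B - \T{B}U\T{U}B$, I can substitute $\T{B}B = \Delta + \T{B}U\T{U}B$ and regroup the $X$-terms to complete the square:
\begin{equation*}
(X-\T{U}B)^{\T{}}(X-\T{U}B) \prec \beta^2 I - \Delta.
\end{equation*}

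Second I would check that $\beta^2 I - \Delta$ is strictly positive definite so that its inverse square root exists. Since $\opnorm{\Delta} = \opnorm{(I-U\T{U})B}^2$, and since $\beta \ge (1+\varepsilon)\opt \ge (1+\varepsilon)\opnorm{(I-U\T{U})B}$ by Corollary~\ref{cor:optimum-value}, we get $\beta^2 > \opnorm{\Delta}$, hence $\beta^2 I - \Delta \succ 0$. Let $M := (\beta^2 I - \Delta)^{-1/2}$, which is symmetric PSD. Conjugating the displayed PSD inequality on both sides by $M$ (and using that $M(\beta^2 I-\Delta)M = I$) yields
\begin{equation*}
((X-\T{U}B)M)^{\T{}}\,(X-\T{U}B)M \prec I,
\end{equation*}
that is, $\opnorm{(X-\T{U}B)M} < 1$.

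Third, I would finish with a standard rank-plus-perturbation singular value argument. Write $\T{U}B\,M = XM - (X-\T{U}B)M$. The matrix $XM$ has rank at most $k$, so $\sigma_{k+1}(XM)=0$, and by Weyl's inequality
\begin{equation*}
\sigma_{k+1}(\T{U}B(\beta^2 I-\Delta)^{-1/2}) \le \sigma_{k+1}(XM) + \opnorm{(X-\T{U}B)M} < 1,
\end{equation*}
which is exactly the claim.

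The only non-routine ingredient is the completing-the-square identity in the first step; the rest is a short PSD conjugation plus a one-line application of Weyl. The potential pitfall is making sure $\beta^2 I - \Delta$ is invertible, which is guaranteed precisely by the strict inequality $\beta > \opnorm{(I-U\T{U})B}$ implied by $\beta > \opt$ via Corollary~\ref{cor:optimum-value}.
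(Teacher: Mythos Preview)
Your proposal is correct and follows essentially the same route as the paper's proof: translate $\opnorm{UX-B}<\beta$ into the PSD inequality, complete the square to obtain $(X-\T{U}B)^{\T{}}(X-\T{U}B)\prec \beta^2 I-\Delta$, conjugate by $(\beta^2 I-\Delta)^{-1/2}$, and conclude via a rank-$k$ perturbation bound. The only cosmetic difference is that the paper writes the intermediate step as $(UX-U\T{U}B)^{\T{}}(UX-U\T{U}B)$ before using $\T{U}U=I$, and it phrases the last step via the optimality of $[\cdot]_k$ rather than Weyl's inequality; these are equivalent.
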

The proof of this lemma is in Appendix~\ref{subsec:lma:attainable-condition}. The proof of the above lemma also shows that if we can find a matrix $Y$ of rank $k$ such that $\opnorm{Y - \T{U}B(\beta^2I-\Delta)^{-1/2}} \le 1$, then we can obtain a matrix $X = Y(\beta^2I - \Delta)^{1/2}$ such that $\opnorm{UX-B} < \beta$. Thus, we can compute the SVD of the matrix $\T{U}B(\beta^2I - \Delta)^{-1/2}$ and obtain $[\T{U}B(\beta^2I - \Delta)^{-1/2}]_k$ and obtain a solution $[\T{U}B(\beta^2I - \Delta)^{-1/2}]_k(\beta^2I - \Delta)^{1/2}$ of cost $\beta$.

Computing an exact SVD, as required in the proof of above Lemma, is much slower than computing a rank $k$ matrix that satisfies the guarantees of the best rank $k$ matrices approximately. The following lemma shows that we can obtain a solution of cost close to $\beta$ even if we can compute a rank $k$ matrix $Y$ such that $\opnorm{Y - \T{U}B(\beta^2I - \Delta)^{-1/2}} \le 1 + \varepsilon$.
\begin{lemma}
	If $Y$ is a rank $k$ matrix such that $\opnorm{Y - \T{U}B(\beta^2 I - \Delta)^{-1/2}} \le 1 + \varepsilon$, then we obtain that $\opnorm{UY(\beta^2I - \Delta)^{1/2} - B} \le (1+\epsilon)\beta.$ Furthermore,
$
		\opnorm{UY(UY)^{+}B - B} \le (1+\epsilon)\beta.
$
\label{lma:approximation-to-approximation}
\end{lemma}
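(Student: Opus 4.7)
The plan is to let $M := (\beta^2I - \Delta)^{1/2}$ and $C := \T{U}B(\beta^2I - \Delta)^{-1/2} = \T{U}BM^{-1}$, so that the hypothesis reads $\opnorm{Y - C} \le 1+\varepsilon$. Before anything else I would check that $M$ is well defined (and positive definite): since $\opnorm{\Delta} = \opnorm{(I-U\T{U})B}^2 \le \opt^2$ and $\beta \ge (1+\varepsilon)\opt$, we get $\opnorm{\Delta} \le \beta^2/(1+\varepsilon)^2 < \beta^2$, so $\beta^2I - \Delta \succ 0$.

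For the first claim, the key algebraic identity is to expand the PSD matrix $W^TW$ where $W := UYM - B$. Using $\T{U}U = I$ and the decomposition $\T{B}B = \T{B}U\T{U}B + \Delta$, a direct computation gives
\begin{equation*}
    W^TW = M\T{Y}YM - M\T{Y}\T{U}B - \T{B}UYM + \T{B}U\T{U}B + \Delta = (YM - \T{U}B)^T(YM - \T{U}B) + \Delta.
\end{equation*}
Since $CM = \T{U}B$, we have $YM - \T{U}B = (Y-C)M$, so $W^TW = M(Y-C)^T(Y-C)M + \Delta$.

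Now $(Y-C)^T(Y-C) \preceq \opnorm{Y-C}^2 I \preceq (1+\varepsilon)^2 I$, and because $M$ is symmetric, conjugating this inequality by $M$ gives $M(Y-C)^T(Y-C)M \preceq (1+\varepsilon)^2 M^2 = (1+\varepsilon)^2(\beta^2I - \Delta)$. Plugging in,
\begin{equation*}
    W^TW \preceq (1+\varepsilon)^2(\beta^2I - \Delta) + \Delta = (1+\varepsilon)^2\beta^2 I - \bigl((1+\varepsilon)^2 - 1\bigr)\Delta \preceq (1+\varepsilon)^2\beta^2 I,
\end{equation*}
where the final step uses $\Delta \succeq 0$ and $(1+\varepsilon)^2 \ge 1$. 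Taking operator norms gives $\opnorm{UYM - B} \le (1+\varepsilon)\beta$, which is the first claim.

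For the ``furthermore,'' I would use that $UY(UY)^+$ is the orthogonal projection onto $\mathrm{colspan}(UY)$, so for any $Z$ the matrix $UYZ$ is annihilated by $I - UY(UY)^+$. Hence
\begin{equation*}
    \opnorm{UY(UY)^+ B - B} = \opnorm{(I - UY(UY)^+)B} = \opnorm{(I - UY(UY)^+)(B - UYZ)} \le \opnorm{UYZ - B},
\end{equation*}
since $I - UY(UY)^+$ is an orthogonal projection with operator norm at most one. Applying this to $Z = M = (\beta^2I - \Delta)^{1/2}$ and invoking the first part of the lemma yields $\opnorm{UY(UY)^+B - B} \le (1+\varepsilon)\beta$.

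The main obstacle is really the sharpness of Step~3: one must see that conjugating the slack $(1+\varepsilon)^2 I - (Y-C)^T(Y-C) \succeq 0$ by $M$ produces precisely the term $-\Delta$ that cancels the stray $+\Delta$ appearing from $\T{B}B = \T{B}U\T{U}B + \Delta$, giving a clean $(1+\varepsilon)\beta$ bound with no $\sqrt{2}$-type loss. Once that cancellation is spotted, the rest is routine.
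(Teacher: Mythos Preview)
Your proof is correct and follows essentially the same approach as the paper's own proof: both expand the quadratic form $(UYM-B)^{\mathsf T}(UYM-B)$, use the identity $\T{B}B = \T{B}U\T{U}B + \Delta$ together with $\T{U}U=I$ to reduce it to $M(Y-C)^{\mathsf T}(Y-C)M + \Delta$, and then observe that conjugating $(Y-C)^{\mathsf T}(Y-C)\preceq(1+\varepsilon)^2 I$ by $M$ produces exactly the $-\Delta$ that cancels the stray $+\Delta$. Your projection argument for the ``furthermore'' part is likewise the same as the paper's, stated a bit more explicitly.
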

The proof of this lemma is in Appendix~\ref{subsec:lma:approximation-to-approximation}. The above lemma states that a $1+\varepsilon$ approximation to the best rank-$k$ approximation of the matrix $\T{U}B(\beta^2I - \Delta)^{-1/2}$ in operator norm is sufficient to find a solution of cost $(1 + \epsilon)\beta$ to the reduced-rank regression problem. We can use the Block Krylov algorithm to compute such an approximation. The Block Krylov algorithm of \citet{muscomusco} only needs an oracle to compute matrix-vector products. In the case when $c,d$ are small, we can compute the matrices $U,(\beta^2I - \Delta)^{-1/2}$ and then given arbitrary vectors $v,v'$ we can compute $\T{U}B(\beta^2I - \Delta)^{-1/2}v$ and $(\beta^2I - \Delta)^{-1/2}\T{B}{U}v'$ and hence run the Block Krylov Algorithm. This gives a $1+O(\varepsilon)$ approximation to the reduced-rank regression problem.

When $r,d$ are large, it is expensive to compute the matrices $U,\Delta$ and $(\beta^2I - \Delta)^{-1/2}$. As the analysis of \citet{muscomusco} works only when exact matrix-vector products can be computed, we cannot run the Block Krylov algorithm unless we compute the matrices $U,\Delta$ or at least are able to compute exact matrix vector products with the matrix $\T{U}B(\beta^2I - \Delta)^{-1/2}$. So we analyze their algorithm and show that it  works even using approximate matrix products instead of exact matrix products, given that the error is low enough.
\section{Block Krylov Iteration with Approximate Multiplication Oracle}\label{sec:block-krylov-iteration}
\IncMargin{1em}
\begin{algorithm2e}[t]
    \KwIn{$M \in \Real^{n \times d}, k \in \mathbb{Z}, \varepsilon > 0, \text{Oracle}_M : \R^d \times \varepsilon \rightarrow \R^n, \text{Oracle}_{\T{M}} : \R^n \times \varepsilon \rightarrow \R^d$ }
    \KwOut{$Z \in \R^{n \times k}$}
    \DontPrintSemicolon
    % \LineComment{This procedure returns $s$ such that $\opt \le s \le \sqrt{3d_B}\opt$}  
	$G\sim \mathcal{N}(0,1)^{d \times k}$, $\kappa \gets \sigma_1(M)/\sigma_{k+1}(M)$, $q \gets O(({1}/{\sqrt{\varepsilon}})\log(d/\varepsilon))$\;
	$\varepsilon_{\circ} \gets O\left({\varepsilon}/({\kappa^{2+5q}k^{7}C^q}\right))$, $\varepsilon_{\bullet} \gets O\left({\varepsilon^2}/{(48\kappa(\kappa^2(\sqrt{qk})k))}\right)$\;
	\tcc{Let $\circ$ and $\bullet$ denote approximate matrix-vector products using the Oracles with accuracy $\epsilon_\circ$ and $\epsilon_\bullet$, respectively}
% 	\LineComment{Let $ \circ, \bullet $ denote approximate matrix-vector products using $\text{Oracle}_M$ and $\text{Oracle}_{\T{M}}$ with precisions $\varepsilon_{\circ}, \varepsilon_{\bullet}$ respectively}
	$K' \gets [(M\T{M})^{\circ (q-1)/2}M \circ G, (M\T{M})^{\circ (q-3)/2}M \circ G, \ldots, M \circ G]$\;
	$Q' \gets$ Orthonormal basis for $K'$\;
	$[\bar{U},\bar{\Sigma}^2,\T{\bar{U}}] \gets \text{SVD}(\T{Q'}(M \bullet (\T{M}\bullet Q')))$\;
	$\bar{U}_k \gets $First $k$ columns of $\bar{U}$\;
	$Z \gets Q'\bar{U}_k$\;
    \caption{Low Rank Approximation with Approximate Matrix Multiplication}
    \label{alg:musco-musco-adaptation}
\end{algorithm2e}
\DecMargin{1em}

Given a parameter $k$ and an oracle to approximately compute $Mv$ and $\T{M}v'$, given arbitrary vectors $v$ and $v'$, we would like to compute a matrix $Z$ with $k$ orthonormal columns such that
\begin{equation}
	\opnorm{M - Z\T{Z}M} \le (1+\varepsilon)\sigma_{k+1}(M).
	\label{eqn:lra-guarantee}
\end{equation}
Specifically, suppose we have an oracle that, given an arbitrary vector $v$ and approximation parameter $\epsilon_{\circ}$, can compute in time $T(\epsilon_{\circ})$ a vector $M \circ v$ such that
$
	\opnorm{Mv - (M \circ v)} \le \epsilon_{\circ}\opnorm{M}\opnorm{v},
$
and also given an arbitrary vector $v'$ and accuracy parameter $\epsilon_{\circ}$ can compute in time $T(\epsilon_{\circ})$ a vector ${\T{M} \circ v'}$ such that
$
	\opnorm{\T{M}v' - \T{M} \circ v'} \le \epsilon_{\circ}\opnorm{M}\opnorm{v'}
$. We are also given $\kappa = \sigma_1(M)/\sigma_{k+1}(M)$ and we want to compute a matrix $Z$ as in \eqref{eqn:lra-guarantee}.

Our algorithm to compute such a matrix $Z$ is Algorithm~\ref{alg:musco-musco-adaptation}. It is essentially the same as the Block Krylov algorithm of \cite{muscomusco} with exact matrix-vector multiplication replaced by approximate  matrix-vector multiplication with accuracy parameters as defined in our algorithm. 
Our main result for this section is the following theorem that states that the Block Krylov algorithm of \cite{muscomusco} works even with approximate matrix-vector products.
\begin{theorem}
	Let $M \in \R^{n \times d}$, $k \le d$ be a rank parameter, and $\varepsilon > 0$ be an accuracy parameter. Let $\kappa = \sigma_1(M)/\sigma_{k+1}(M)$. Given access to an oracle that can in time $T(\epsilon_{\circ})$ compute vectors $M \circ v$ and $\T{M} \circ v'$ such that
	\begin{equation*}
		\opnorm{M \circ v - Mv} \le \epsilon_{\circ}\opnorm{M}\opnorm{v}\quad \text{and}\quad \opnorm{\T{M}\circ v' - \T{M}v'} \le \epsilon_{\circ}\opnorm{M}\opnorm{v'},
	\end{equation*}
	for any vectors $v$ and $v'$, Algorithm~\ref{alg:musco-musco-adaptation} computes a matrix $Z \in \R^{n \times k}$ with $k$ orthonormal columns such that, with probability $\ge 3/5$,
	$
		\opnorm{(I-Z\T{Z})M} \le (1+\varepsilon)\sigma_{k+1}(M).
	$
	The running time is 
	\begin{equation*}
	O\left(T\left(\frac{\varepsilon}{2\kappa^{5q}k^{11}D^q}\right)qk + T\left(\frac{\varepsilon^2}{192\kappa^2(\sqrt{qk})k}\right)qk\right),
	\end{equation*}
	where $q = O\left(({1}/{\sqrt{\varepsilon}})\log(d/\varepsilon)\right)$ and $D$ is an absolute constant. Further, if the approximations $M \circ v$ are spanned by $M$ for all $v$, then the columns of the matrix $Z$ are also spanned by the matrix $M$.
	\label{thm:main-theorem-krylov}
\end{theorem}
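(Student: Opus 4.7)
The plan is to mirror the \citet{muscomusco} analysis step by step, replacing each exact product with its approximate counterpart and tracking how the errors accumulate. Recall that the Musco--Musco proof relies on an odd-degree polynomial $ {p}(x) = \sum_{\text{odd } i \le q} p_i x^i$ (a scaled odd Chebyshev polynomial on a suitable interval) such that projecting $M$ onto $\text{colspan}( {p}(M)G)$ already yields a $(1+\varepsilon)$-approximate spectral low-rank approximation of $M$. Since $K'$ is built from approximate matrix-vector products, $\text{colspan}(K')$ does not exactly contain $ {p}(M)G$; instead, I define
\begin{equation*}
    \Apx \;:=\; \sum_{\text{odd } i \le q} p_i \cdot (M\T{M})^{\circ (i-1)/2} M \circ G,
\end{equation*}
which lies in $\text{colspan}(K')$ by construction, and show that $\Apx$ is very close to $ {p}(M)G$.

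The first step is error propagation through the Krylov blocks. By induction on $j$, each approximate iterate $(M\T{M})^{\circ j} M \circ G$ differs from the exact iterate $(M\T{M})^j MG$ by a Frobenius-norm error that grows by roughly $\opnorm{M}^2$ per iteration and receives an additive contribution of $\epsilon_{\circ}\opnorm{M}$ times the current iterate norm at each oracle call. Telescoping across $q$ iterations bounds the per-block error by $\poly(q,k)\cdot\epsilon_{\circ}\opnorm{M}^{2q+1}$. Summing over blocks weighted by $|p_i|$, and using the Chebyshev bound $\| {p}\|_1 \le (1+\sqrt{2})^q$, yields $\frnorm{ {p}(M)G - \Apx} \le \text{(small)}\cdot \epsilon_{\circ}$. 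The choice $\epsilon_{\circ} = O(\varepsilon/(\kappa^{2+5q}k^7 C^q))$ from the algorithm is calibrated so that this quantity is much smaller than $\sigma_k( {p}(M)G)/\sqrt{k}$.

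Next I transfer Frobenius closeness of $ {p}(M)G$ and $\Apx$ to operator-norm closeness of their associated projection matrices. For this I would prove a helper perturbation lemma: if $Y,\tilde Y$ are $d\times k$ matrices with $\frnorm{Y-\tilde Y} \le \eta\cdot \sigma_k(Y)$ for sufficiently small $\eta$, then $\opnorm{YY^+ - \tilde Y \tilde Y^+} \le O(\eta)$; this follows from a direct manipulation of pseudo-inverses. To apply the lemma with $Y =  {p}(M)G$, I need a lower bound on $\sigma_k( {p}(M)G)$. Writing $ {p}(M) = V\Sigma_p \T{V}$ via the SVD, the Gaussian factor $\T{V}G$ is itself $d\times k$ Gaussian with $\sigma_1(\T{V}G)/\sigma_k(\T{V}G) \le O(\sqrt{dk})$ with constant probability, so it suffices to bound $\sigma_1( {p}(M))/\sigma_k( {p}(M))$ by $\poly(\kappa,q)$. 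This is the main delicate step: the scaled Chebyshev polynomial $ {p}$ is bounded by $1$ on the ``small singular value'' interval containing $\sigma_{k+1}(M)$ and grows only polynomially in $\kappa$ at $\sigma_1(M)$, yielding the needed bound.

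Once $\opnorm{( {p}(M)G)( {p}(M)G)^+ - \Apx\Apx^+}$ is small, I invoke the structural lemma underlying Musco--Musco: any subspace $W$ that contains (a near copy of) $\text{colspan}( {p}(M)G)$ satisfies $\opnorm{(I-\P_W)M} \le (1+O(\varepsilon))\sigma_{k+1}(M)$. Since $Q'$ spans $\Apx$, its column space is such a $W$, so the best rank-$k$ projection of $M$ within $\text{colspan}(Q')$ gives the claimed guarantee. It remains to analyze the final SVD step, which operates on $\T{Q'}(M\bullet(\T{M}\bullet Q'))$ rather than $\T{Q'}M\T{M}Q'$; the operator-norm perturbation is at most $O(\epsilon_{\bullet}\opnorm{M}^2\sqrt{qk})$, and the choice of $\epsilon_{\bullet}$ together with Weyl's inequality ensures that the resulting top-$k$ eigenvectors define a projection whose cost exceeds the best in-span rank-$k$ projection by only $O(\varepsilon)\sigma_{k+1}(M)$. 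Combining these pieces shows $Z = Q'\bar U_k$ satisfies the claimed bound. The main obstacle throughout is bookkeeping: every Frobenius-to-operator conversion, every $\| {p}\|_1$ blow-up, and the lower bound on $\sigma_k( {p}(M))$ must be controlled within the multiplicative $1+\varepsilon$ budget, which is why the algorithm uses such stringent $\epsilon_{\circ}$ and $\epsilon_{\bullet}$. The running-time bound follows by counting $O(qk)$ oracle calls of accuracy $\epsilon_{\circ}$ to build $K'$ and $O(qk)$ calls of accuracy $\epsilon_{\bullet}$ for the final SVD.
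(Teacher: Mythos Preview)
Your proposal tracks the paper's architecture closely: same definition of $\Apx$, same inductive error propagation through the Krylov blocks, same condition-number control on $p(M)G$ to pass from Frobenius closeness to closeness of projectors, and same perturbation analysis of the final SVD. There is, however, one genuine gap in the chain of implications.

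You write that ``any subspace $W$ that contains a near copy of $\text{colspan}(p(M)G)$ satisfies $\opnorm{(I-\P_W)M} \le (1+O(\varepsilon))\sigma_{k+1}(M)$,'' and conclude ``so the best rank-$k$ projection of $M$ within $\text{colspan}(Q')$ gives the claimed guarantee.'' The first statement is about projecting onto the full $qk$-dimensional $W$, not onto a $k$-dimensional subspace; and even granting that $K'$ contains \emph{some} $k$-dimensional subspace that is good in spectral norm, it does not follow that the best rank-$k$ \emph{Frobenius} approximation of $M$ inside $K'$---which is what the final SVD actually computes---is good in spectral norm. This Frobenius-to-spectral transfer is exactly the nontrivial core of the Musco--Musco argument, and the paper reproduces it carefully: for each $l\le k$ with $\sigma_l(M)\ge(1+\varepsilon)\sigma_{k+1}(M)$ one tracks $\mathcal{E}_l=\frnorm{[M]_l}^2-\frnorm{\P_Y[M]_l}^2$, shows it stays $\le (3\varepsilon/4)\sigma_{k+1}^2$ as $Y$ moves from $p(M)G$ to $\Apx$ to the best rank-$l$ Frobenius subspace $Y^l$ of $K'$, and then at the critical index $m$ (the largest such $l$) invokes the additive Frobenius-to-spectral error translation (Theorem~3.2 of \cite{gu-error-translation-fro-to-op}) to obtain $\opnorm{(I-Y^m\T{(Y^m)})M}\le(1+O(\varepsilon))\sigma_{k+1}$. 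Without the $m$-parameter device and this additive translation, the step you describe does not go through.

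Two smaller points: $\sigma_1(p(M))/\sigma_k(p(M))$ is bounded by $(3\kappa)^q$, exponential rather than polynomial in $q$ (this is exactly why $\epsilon_\circ$ carries a $\kappa^{5q}$ factor); and the final SVD perturbation is handled via a Frobenius-norm bound $\frnorm{\T{Q'}M\T{M}Q'-\tilde\Apx}$ fed into a trace argument (Lemma~\ref{lma:singular-vectors-with-approximate-matrix}), not Weyl's inequality, since Weyl controls eigenvalues but you need control over the induced projection.
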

\begin{proofsketch}
	The proof of the Block Krylov algorithm of \cite{muscomusco} first shows that there is a polynomial $p(x)$ that has only odd degree monomials such that the $k$-dimensional column space of the matrix $p(M)G$, where $G$ is a Gaussian matrix with $k$ columns, spans a $(1+\varepsilon)$ approximation. As we do not know how to compute this polynomial $p(x)$, the proof shows that the Krylov Space $K$ spans this matrix $p(M)G$ and then shows that the rank $k$ Frobenius norm approximation of the matrix $M$ inside the Krylov subspace $K$ is also a $1+\varepsilon$ spectral norm rank $k$ approximation.
	
	We adapt their proof to the case when we can compute matrix-vector products only approximately. We first show that the approximate Krylov matrix $K'$ computed by Algorithm~\ref{alg:musco-musco-adaptation} is close to the actual Krylov matrix $K$ in Lemma~\ref{lma:krylov-matrix-error}. However, this lemma is not sufficient to directly prove that the rank-$k$ Frobenius norm approximation of $M$ inside of $K'$ is a $1+\varepsilon$ rank-$k$ spectral approximation, since the matrices $K$ and $K'$ can be very poorly conditioned. Therefore, similar to the matrix $p(M)G$ in \cite{muscomusco}, we define a rank-$k$ matrix $\Apx$ (see Equation~\ref{eqn:apx-matrix-definition}) and show that the matrix $\Apx$ is spanned by $K'$. Then we show in Lemma~\ref{lma:spectral-norm-error-p-M-Apx} that the matrix $\Apx$ is close to $p(M)G$. Using an upper bound on the condition number of the matrix $p(M)G$ (see Lemma~\ref{lma:condition-number-of-p-M-G}), we conclude in Equation~\ref{eqn:diff-pM-apx} that the projection matrices onto the column spaces of the matrices $\Apx$ and $p(M)G$ are close to each other. 
	
	Similar to the argument of \cite{muscomusco}, we encounter the issue that this matrix $\Apx$ cannot be computed as we do not know the parameters of the polynomial $p(x)$, but we do have that this matrix $\Apx$ is spanned by the column space of $K'$. Using this fact, we show that an approximate rank $k$ Frobenius norm approximation of $M$ in the column space of $K'$ is also a $1+\varepsilon$ spectral norm rank $k$ approximation for the matrix $M$. We also show that this approximate rank $k$ Frobenius norm approximation can be computed using approximate matrix-vector product oracles.
\end{proofsketch}

\section{Approximate Oracles and Reduced Rank Regression}\label{sec:final-proof}
Lemma~\ref{lma:approximation-to-approximation} shows that if $Y$ is a rank $k$ matrix such that $\opnorm{Y - \T{U}B(\beta^2I - \Delta)^{-1/2}} \le 1+\varepsilon$, then $\opnorm{UY(UY)^+ B - B} \le (1+\epsilon)\beta$. Based on this result, we prove the following lemma which shows that a low rank-approximation of the matrix $AA^+ B(\beta^2I - \Delta)^{-1/2}$ suffices.
\begin{lemma}
	Let $\tilde{Z} \in \R^{n \times k}$ be a matrix with orthonormal columns such that $$\opnorm{AA^+ B(\beta^2I - \Delta)^{-1/2} - \tilde{Z}\T{\tilde{Z}}AA^+ B(\beta^2I - \Delta)^{-1/2}} \le 1+\varepsilon.$$ Then ${\opnorm{(AA^+ \tilde{Z})(AA^+ \tilde{Z})^+ B - B} \le (1+\epsilon)\beta}$.
	\label{lma:apporximate-subspace-lra-rra}
\end{lemma}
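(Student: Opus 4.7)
My approach is to reduce the claim to Lemma~\ref{lma:approximation-to-approximation}, which already tells us that a rank-$k$ matrix approximating $\T{U}B(\beta^2I-\Delta)^{-1/2}$ in operator norm yields an answer of cost at most $(1+\varepsilon)\beta$. The hypothesis, however, only hands us an approximation to $AA^+ B(\beta^2I-\Delta)^{-1/2}$ (not to $\T{U}B(\beta^2I-\Delta)^{-1/2}$), and the conclusion we want involves the projection onto $\text{colspan}(AA^+\tilde Z)$ (not the specific rank-$k$ matrix built in Lemma~\ref{lma:approximation-to-approximation}). So I need two bridges: (i) convert the hypothesis into a rank-$k$ approximation to $\T{U}B(\beta^2I-\Delta)^{-1/2}$ and (ii) show that the projection $(AA^+\tilde Z)(AA^+\tilde Z)^+ B$ is at least as good an approximation to $B$ as the matrix produced by Lemma~\ref{lma:approximation-to-approximation}.

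\textbf{Bridge (i).} Let $U$ be an orthonormal basis for $\text{colspan}(A)$, so $U\T{U}=AA^+$ and $\T{U}U=I$. Left-multiplication by $\T{U}$ does not increase the operator norm (its rows are orthonormal), and $\T{U}AA^+ = \T{U}$, so the choice
\[
Y := \T{U}\,\tilde Z\T{\tilde Z}\,AA^+ B(\beta^2 I - \Delta)^{-1/2},
\]
which has rank at most $k$, satisfies $\opnorm{Y - \T{U}B(\beta^2 I - \Delta)^{-1/2}} \le 1+\varepsilon$ directly from the hypothesis. Applying Lemma~\ref{lma:approximation-to-approximation} to this $Y$ then gives $\opnorm{UY(\beta^2 I - \Delta)^{1/2} - B} \le (1+\varepsilon)\beta$.

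\textbf{Bridge (ii).} Since $\beta^2 > \opnorm{\Delta}$ (because $\beta \ge (1+\varepsilon)\opnorm{(I-AA^+)B}$ and $\opnorm{\Delta} = \opnorm{(I-AA^+)B}^2$), the matrix $\beta^2I-\Delta$ is positive definite and $(\beta^2I-\Delta)^{-1/2}(\beta^2I-\Delta)^{1/2}=I$, so a quick simplification yields $UY(\beta^2 I - \Delta)^{1/2} = AA^+\tilde Z\T{\tilde Z}AA^+ B$, whose columns lie in $V := \text{colspan}(AA^+\tilde Z)$. The standard fact that for any $V$ and any $M$ whose columns lie in $V$, $\opnorm{\P_V B - B} \le \opnorm{M - B}$ (since for every $x$, $(M-\P_V B)x\in V$ is orthogonal to $(\P_V B - B)x\in V^\perp$, whence $\|(M-B)x\|_2^2 \ge \|(\P_V B - B)x\|_2^2$) then gives $\opnorm{(AA^+\tilde Z)(AA^+\tilde Z)^+ B - B} \le \opnorm{UY(\beta^2 I - \Delta)^{1/2} - B} \le (1+\varepsilon)\beta$, as desired. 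The only step that requires any cleverness is guessing the right $Y$ in bridge (i): one has to sandwich $\tilde Z\T{\tilde Z}$ between $\T{U}$ and $AA^+ B(\beta^2I-\Delta)^{-1/2}$ so that the identity $\T{U}AA^+ = \T{U}$ turns the hypothesis (about $AA^+ B(\beta^2I-\Delta)^{-1/2}$) into exactly the statement about $\T{U}B(\beta^2I-\Delta)^{-1/2}$ required by Lemma~\ref{lma:approximation-to-approximation}.
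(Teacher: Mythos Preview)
Your proof is correct and follows essentially the same route as the paper's: both reduce to Lemma~\ref{lma:approximation-to-approximation} by left-multiplying the hypothesis by $\T{U}$ (the paper does this in two steps, first by $U\T{U}$ and then stripping the $U$), arriving at exactly the same rank-$k$ matrix $Y=\T{U}\tilde Z\T{\tilde Z}U\T{U}B(\beta^2I-\Delta)^{-1/2}$, and then observing that $UY(\beta^2I-\Delta)^{1/2}$ has columns in $\text{colspan}(AA^+\tilde Z)$ so the projection onto that span can only do better. Your write-up of bridge~(ii) is slightly more explicit than the paper's, but the argument is the same.
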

The proof of the lemma is in Appendix~\ref{subsec:lma:apporximate-subspace-lra-rra}. Hence, if we can get a good $k$-dimensional space $\tilde{Z}$ for approximating the matrix $AA^+ B(\beta^2I - \Delta)^{-1/2}$, we can then obtain a good $k$ dimensional space for $B$. We first show that we can instead find a low rank approximation for a matrix $AA^+ BM/\beta$, for a suitable matrix $M$, which will also be a good low rank approximation for  $AA^+ B(\beta^2I - \Delta)^{-1/2}$.

\begin{lemma}
	Given that $\beta \ge (1+\varepsilon)\opt$, there exists a polynomial $ {r}(x)$ of degree at most $t = O\left({1}/{\sqrt{\varepsilon}}\log({\kappa}/{\varepsilon})\right)$ such that for $M =  {r}(\Delta/\beta^2)$, if $\tilde{Z}$ is a matrix such that 
	\begin{equation*}
	\opnorm{{AA^+ BM}/{\beta} - \tilde{Z}\T{\tilde{Z}}({AA^+ BM}/{\beta})}\le 1+\varepsilon,
	\end{equation*}
	then $\opnorm{AA^+ B(\beta^2I - \Delta)^{-1/2} - \tilde{Z}\T{\tilde{Z}}AA^+ B(\beta^2I - \Delta)^{-1/2}} \le 1+O(\varepsilon)$. Furthermore, $\| {r}\|_1 = O((1+\sqrt{2})^{O(\sqrt{1/\varepsilon}\log(\kappa/\varepsilon))}\log(\kappa/\varepsilon)/\varepsilon)$, $\opnorm{M} \le 2/\sqrt{\varepsilon}$, and $\sigma_{\min}(M) \ge 1/2$.
	\label{lma:replacing-neg-square-root}
\end{lemma}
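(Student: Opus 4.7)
The plan is to build $r(x)$ as a polynomial that uniformly approximates $g(y) = (1-y)^{-1/2}$ on the spectral range of $\Delta/\beta^2$, set $M = r(\Delta/\beta^2)$, and then transfer the spectral low-rank approximation guarantee from $AA^+ BM/\beta$ to $AA^+ B(\beta^2 I - \Delta)^{-1/2}$ by a triangle inequality. First I would pin down the spectral range. Since $\Delta = \T{B}(I-AA^+)B = ((I-AA^+)B)^\top ((I-AA^+)B)$, we have $\opnorm{\Delta} = \opnorm{(I-AA^+)B}^2 \le \opt^2$, and combined with $\beta \ge (1+\varepsilon)\opt$ the eigenvalues of $\Delta/\beta^2$ lie in $[0, 1/(1+\varepsilon)^2] \subseteq [0, 1 - c\varepsilon]$ for an absolute constant $c > 0$. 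In particular $(I - \Delta/\beta^2)^{-1/2}$ is well-defined and has singular values in $[1, O(1/\sqrt\varepsilon)]$.

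Next I would construct $r$ by the standard two-step recipe used for matrix-function approximation (\emph{cf.}\ Sachdeva--Vishnoi): (a) truncate the binomial series $(1-y)^{-1/2} = \sum_{i \ge 0} \binom{2i}{i} 4^{-i} y^i$ at $N = \Theta(\varepsilon^{-1} \log(\kappa/\varepsilon))$ terms, producing tail error bounded by $(1-c\varepsilon)^N/(c\varepsilon) \le \delta := \Theta(\varepsilon/\kappa)$ on $[0, 1 - c\varepsilon]$; (b) replace each surviving monomial $y^i$, $1 \le i \le N$, by a Chebyshev-based polynomial $q_i(y)$ of degree $O(\sqrt{i \log(N/\delta)})$ satisfying $|y^i - q_i(y)| \le \delta/N$ on $[-1,1]$. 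Summing the absolute errors weighted by $\binom{2i}{i}4^{-i} \le 1$ gives a total uniform error $O(\delta)$ on the spectral range; the overall degree is $\max_i O(\sqrt{i \log(N/\delta)}) = O(\varepsilon^{-1/2} \log(\kappa/\varepsilon))$, matching $t$. For the $\ell_1$ coefficient bound I would expand each $q_i$ in the Chebyshev basis (with polynomially bounded coefficients), pass to the monomial basis via $\|T_j\|_1 \le (1+\sqrt{2})^j$, and sum, yielding $(1+\sqrt{2})^{O(t)} \cdot N$ which is the claimed bound.

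Setting $M = r(\Delta/\beta^2)$, functional calculus together with Weyl's inequality gives $\sigma_{\min}(M) \ge 1 - O(\delta) \ge 1/2$ and $\opnorm{M} \le 1/\sqrt{c\varepsilon} + O(\delta) \le 2/\sqrt\varepsilon$ for small $\varepsilon$. For the main implication, let $C_1 = AA^+ BM/\beta$ and $C_2 = AA^+ B(\beta^2 I - \Delta)^{-1/2}$. Then
\begin{equation*}
\opnorm{C_1 - C_2} = (1/\beta)\,\opnorm{AA^+ B\,(M - (I - \Delta/\beta^2)^{-1/2})} \le (\opnorm{B}/\beta) \cdot O(\delta) \le O(\kappa\delta) = O(\varepsilon),
\end{equation*}
where I use $\beta \ge \sigma_{k+1}(B)$ and $\opnorm{B}/\sigma_{k+1}(B) \le \kappa$. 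The hypothesis $\opnorm{C_1 - \tilde{Z}\T{\tilde{Z}} C_1} \le 1 + \varepsilon$ combined with $\opnorm{C_2 - \tilde{Z}\T{\tilde{Z}} C_2} \le \opnorm{C_1 - \tilde{Z}\T{\tilde{Z}} C_1} + \opnorm{(I - \tilde{Z}\T{\tilde{Z}})(C_1 - C_2)}$ then yields $\opnorm{C_2 - \tilde{Z}\T{\tilde{Z}} C_2} \le 1 + O(\varepsilon)$, as required.

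The main obstacle is the careful accounting in step (b): juggling the number $N$ of Taylor terms, the per-monomial error budget $\delta/N$, and the Chebyshev coefficient growth so that the stated degree, $\|r\|_1$, and uniform accuracy bounds come out simultaneously at the claimed rates. The spectral-range setup, the bounds on $\sigma_{\min}(M)$ and $\opnorm{M}$, and the final triangle inequality are all routine once the polynomial is in hand.
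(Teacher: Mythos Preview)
Your proposal is correct and follows essentially the same route as the paper: both arguments pin down the spectral range of $\Delta/\beta^2$ via $\beta \ge (1+\varepsilon)\opnorm{(I-AA^+)B}$, build $r$ by truncating the binomial series for $(1-y)^{-1/2}$ and then replacing each monomial with a Chebyshev-based low-degree approximant (the Sachdeva--Vishnoi construction), read off the degree, $\|r\|_1$, $\opnorm{M}$, and $\sigma_{\min}(M)$ bounds, and finish with the same triangle-inequality transfer from $AA^+ BM/\beta$ to $AA^+ B(\beta^2 I - \Delta)^{-1/2}$ using $\opnorm{B}/\beta \le \kappa$ and $\delta = \Theta(\varepsilon/\kappa)$. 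The only cosmetic difference is that the paper routes the final step through a Weyl bound on $\sigma_{k+1}(AA^+ BM/\beta)$ before applying the triangle inequality, whereas you apply the triangle inequality directly to the hypothesis; the two are equivalent here.
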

The proof of the above lemma is in Appendix~\ref{subsec:lma:replacing-neg-square-root}. From Theorem~\ref{thm:main-theorem-krylov}, to find a $1+\varepsilon$ approximation for rank $k$ spectral norm low rank approximation (LRA) of the matrix $\mathcal{M'}$, we need only a way to compute the products $\mathcal{M'}v$ and $\T{\mathcal{M'}}v'$ for any vectors $v,v'$. As $ {r}(\Delta/\beta^2)$ is a polynomial in the matrix $\Delta/\beta^2$, it is much easier to design approximate multiplication oracles for the matrix $AA^+ BM/\beta$ than for the matrix $AA^+ B(\beta^2I - \Delta)^{-1/2}$. The following lemma shows that we can compute good approximations to the matrix vector products and then compute a $1+\varepsilon$ approximation to the LRA of matrix $\mathcal{M}' = AA^+ B \frac{ {r}(\Delta/\beta^2)}{\beta}$. 
\begin{lemma}
\label{lma:oracle-time-complexity}
	Given arbitrary vectors $v,v'$ and an accuracy parameter $\epsilonsub{f}$, Algorithms~\ref{alg:oracle-M} and \ref{alg:oracle-M-transpose} compute vectors $y,y'$ such that
	$
		\opnorm{\mathcal{M}'v - y } \le \epsilonsub{f}\opnorm{v}\ \text{and}\ \opnorm{\T{\mathcal{M'}}v' - y'} \le \epsilonsub{f}\opnorm{y}
	$
	in time 
\begin{align*}
T(\epsilonsub{f}) &:= O(t \cdot (\nnz{B} + (\nnz{A} + c^2)\log\left({\kappa(B)^2\|r\|_1}/{(\epsilonsub{f}\varepsilon})\right)))\\
&\quad + O((\nnz{A} + c^2)\log({\kappa(B)}/({\epsilonsub{f}\varepsilon})))
\end{align*}
where $t = O(\sqrt{1/\varepsilon}\log(\kappa/\varepsilon)) $ and $\|r\|_1 = (1+\sqrt{2})^{O(1/\sqrt{\varepsilon}\log(\kappa/\varepsilon))}\log(\kappa/\varepsilon)/\varepsilon$.
\end{lemma}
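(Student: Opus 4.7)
The plan is to give two procedures, one per oracle, each organized around iterated approximate multiplication by the symmetric matrix $D := \Delta/\beta^2 = \T{B}(I-AA^+)B/\beta^2$, followed by a linear combination with the coefficients of $ {r}$. For $\mathcal{M}' v$, I would first compute vectors $w_0 = v, w_1, \ldots, w_t$ with each $w_{j+1}$ an approximation to $D w_j$, then form $z = \sum_{j=0}^t r_j w_j \approx  {r}(D) v$, then compute $Bz$ exactly and call $\textsc{HighPrecisionRegression}(A, Bz, \epsilon_{\text{out}})$ to obtain $\tilde p \approx AA^+ B z$, and finally return $\tilde p/\beta$. The inner step $w_j \mapsto D w_j$ would be implemented as $(1/\beta^2)\bigl(\T{B}(B w_j) - \T{B}\cdot \textsc{HighPrecisionRegression}(A, B w_j, \epsilon_{\text{reg}})\bigr)$: the two $B$-multiplications cost $\nnz{B}$ each and the regression costs $O((\nnz{A}+c^2)\log(1/\epsilon_{\text{reg}}))$ after the one-time $c^\omega$ preconditioner. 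The transpose oracle is symmetric: first approximate $AA^+ v'$ via \textsc{HighPrecisionRegression}, multiply by $\T{B}$ exactly, apply the same polynomial-in-$D$ routine (valid since $D$ and hence $ {r}(D)$ is symmetric), and divide by $\beta$.

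For the error analysis, Theorem~\ref{thm:high-precision-regression} gives $\opnorm{\tilde p_j - AA^+ B w_j} \le \sqrt{\epsilon_{\text{reg}}}\opnorm{(I-AA^+)B w_j} \le \sqrt{\epsilon_{\text{reg}}}\opnorm{B}\opnorm{w_j}$, so each approximate application of $D$ introduces additive error at most $\sqrt{\epsilon_{\text{reg}}}(\opnorm{B}/\beta)^2\opnorm{w_j}$. Because $\beta \ge (1+\varepsilon)\opt \ge (1+\varepsilon)\sigma_{k+1}(B)$ by Corollary~\ref{cor:optimum-value}, $(\opnorm{B}/\beta)^2 = O(\kappa(B)^2)$. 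Crucially, the same corollary gives $\opnorm{D} = \opnorm{(I-AA^+)B}^2/\beta^2 \le \opt^2/\beta^2 < 1$, so the recursion $\delta_{j+1} \le \opnorm{D}\delta_j + \sqrt{\epsilon_{\text{reg}}}\kappa(B)^2\opnorm{w_j}$ on $\delta_j := \opnorm{w_j - D^j v}$ grows only linearly in $j$, yielding $\delta_j = O(j\sqrt{\epsilon_{\text{reg}}}\kappa(B)^2)\opnorm{v}$. Summing against $ {r}$'s coefficients gives $\opnorm{z -  {r}(D)v} = O(t\| {r}\|_1\sqrt{\epsilon_{\text{reg}}}\kappa(B)^2)\opnorm{v}$, and the outer multiplication plus regression adds $O(\sqrt{\epsilon_{\text{out}}}\kappa(B)\opnorm{ {r}(D)}/\beta)\opnorm{v} = O(\sqrt{\epsilon_{\text{out}}}\kappa(B)/\sqrt{\varepsilon})\opnorm{v}$ using the bound $\opnorm{ {r}(D)} \le 2/\sqrt{\varepsilon}$ from Lemma~\ref{lma:replacing-neg-square-root}. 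Choosing $\epsilon_{\text{reg}} = \Theta(\epsilonsub{f}^2/(\| {r}\|_1^2\kappa(B)^4 t^2))$ and $\epsilon_{\text{out}} = \Theta(\epsilonsub{f}^2\varepsilon/\kappa(B)^2)$ drives the total error below $\epsilonsub{f}\opnorm{v}$, and the symmetric argument gives $\opnorm{y' - \T{\mathcal{M}'}v'} \le \epsilonsub{f}\opnorm{v'}$.

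The claimed time bound then follows by substitution: the $t$ inner iterations cost $O(t(\nnz{B} + (\nnz{A}+c^2)\log(\kappa(B)^2\| {r}\|_1/(\epsilonsub{f}\varepsilon))))$ and the single outer regression contributes $O((\nnz{A}+c^2)\log(\kappa(B)/(\epsilonsub{f}\varepsilon)))$. The main technical obstacle will be controlling the propagation of error through the $t$ approximate $D$-steps: a naive bound treating each step as a multiplicative distortion of size $1 + \sqrt{\epsilon_{\text{reg}}}\kappa(B)^2$ gives a $(1+\sqrt{\epsilon_{\text{reg}}}\kappa(B)^2)^t$ blow-up that would force $\epsilon_{\text{reg}}$ to shrink exponentially in $t$; the strict inequality $\opnorm{D} < 1$ coming from the gap $\beta > \opt$ is what collapses this into the additive bound used above. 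A secondary subtlety is that the guarantee of Theorem~\ref{thm:high-precision-regression} is phrased in terms of $\opnorm{(I-AA^+)u}$, which must be converted into the relative-error form $\opnorm{M \circ v - Mv} \le \epsilonsub{f}\opnorm{v}$ required by Theorem~\ref{thm:main-theorem-krylov} via the crude bound $\opnorm{(I-AA^+)u} \le \opnorm{u}$, at the cost of the extra $\kappa(B)$ factor that appears inside the logarithm.
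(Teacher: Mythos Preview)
Your proposal is correct and follows essentially the same approach as the paper: iterate approximate products with $D=\Delta/\beta^2$ via \textsc{HighPrecisionRegression}, combine with the coefficients of $r$, then apply one more approximate projection. The paper's analysis of the inner recurrence is slightly sharper---using $\opnorm{D}\le 1/(1+\varepsilon)^2$ together with $\epsilonsub{reg}\kappa\le\varepsilon/4$ it shows the errors $\delta_j$ stay \emph{uniformly} bounded by $\epsilonsub{reg}\kappa\opnorm{v}$ rather than growing linearly in $j$---so the paper avoids the extra factor of $t$ inside the logarithm, but since $t$ is polynomial in $1/\varepsilon$ and $\kappa$ this does not change the stated bound. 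One small slip: your outer-regression error should be $O(\sqrt{\epsilonsub{out}}\,\kappa(B)\opnorm{r(D)})\opnorm{v}$ (the $1/\beta$ is already absorbed into $\kappa(B)=\opnorm{B}/\beta$), and you also need the inner error to survive multiplication by $\opnorm{AA^+B}/\beta\le\kappa(B)$ in the outer step, which requires one more $\kappa(B)$ in the denominator of $\epsilonsub{reg}$; both are harmless inside the log.
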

\subsection{Main Theorem}
We finally have our main theorem that shows that Algorithm~\ref{alg:opnorm-regression} outputs a $1+\varepsilon$ approximation in factored form. The proof of the theorem is in Appendix \ref{subsec:thm:final-theorem}.
\begin{theorem}
	Given matrices $A \in \R^{n \times c}$ and $B \in \R^{n \times d}$, a rank parameter $k \le c$ and an accuracy parameter $\varepsilon$, Algorithm~\ref{alg:opnorm-regression} runs in time 
	\begin{equation*}
			O\left(\left(\frac{\nnz{B} \cdot k}{\varepsilon} + \frac{\nnz{A}\cdot k}{\varepsilon^{1.5}} +\frac{c^2k}{\varepsilon^{1.5}}\right)\cdot \textnormal{polylog}(\kappa, \kappa(AA^+ B),d,k,1/\varepsilon) + c^\omega\right), 
	\end{equation*}
	and with probability $4/5$ outputs a matrix $Z$ with $k$ orthonormal columns, for which $\text{colspan}(Z) \subseteq \text{colspan}(A)$, such that
	$
		\opnorm{Z\T{Z}B - B} \le (1+\varepsilon)\opt. 
	$
	It also outputs matrices $X' \in \R^{c \times k}$ and $X'' \in \R^{k \times d}$ such that
	$
		\opnorm{A(X' \cdot X'') - B} = \opnorm{Z\T{Z}B - B} \le (1+\varepsilon)\opt.
	$
	\label{thm:final-theorem}
\end{theorem}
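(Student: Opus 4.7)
The plan is to bolt together the pieces developed in Sections~\ref{sec:reduced-rank-regression-intro}--\ref{sec:final-proof}. Algorithm~\ref{alg:opnorm-regression} proceeds in four conceptual stages: (i) compute $\beta$ with $(1+\varepsilon)\opt \le \beta \le (1+2\varepsilon)\opt$; (ii) conceptually set up the implicit matrix $\mathcal{M}' := AA^+ B\, r(\Delta/\beta^2)/\beta$ from Lemma~\ref{lma:replacing-neg-square-root}, without ever forming it; (iii) invoke Algorithm~\ref{alg:musco-musco-adaptation} on $\mathcal{M}'$, using Algorithms~\ref{alg:oracle-M} and~\ref{alg:oracle-M-transpose} as the two approximate multiplication oracles whose cost is bounded by Lemma~\ref{lma:oracle-time-complexity}, to obtain $\tilde{Z}$ with $k$ orthonormal columns; and (iv) convert $\tilde{Z}$ into the output $Z$ and the factored solution $X', X''$.

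\textbf{Correctness.} For step~(i), Corollary~\ref{cor:optimum-value} gives $\opt = \max(\opnorm{(I-AA^+)B}, \sigma_{k+1}(B))$, so $\beta$ can be found with one invocation of the block-Krylov LRA guarantee of Theorem~\ref{thm:musco-musco} for $\sigma_{k+1}(B)$ together with one approximate regression for $\opnorm{(I-AA^+)B}$. For step~(iii), Theorem~\ref{thm:main-theorem-krylov} applied to $\mathcal{M}'$ (with the oracle accuracy parameters plugged in) yields $\opnorm{\mathcal{M}' - \tilde{Z}\T{\tilde{Z}}\mathcal{M}'} \le (1+\varepsilon)\sigma_{k+1}(\mathcal{M}')$. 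Combining this with Lemma~\ref{lma:replacing-neg-square-root}, the same $\tilde{Z}$ furnishes a $(1+O(\varepsilon))$ spectral-norm rank-$k$ approximation to $AA^+ B(\beta^2I-\Delta)^{-1/2}$, which by Lemma~\ref{lma:attainable-condition} has $\sigma_{k+1} < 1$. Feeding $\tilde{Z}$ into Lemma~\ref{lma:apporximate-subspace-lra-rra} yields the final guarantee $\opnorm{(AA^+\tilde{Z})(AA^+\tilde{Z})^+ B - B} \le (1+O(\varepsilon))\beta \le (1+O(\varepsilon))\opt$; rescaling $\varepsilon$ by a constant produces the stated $(1+\varepsilon)\opt$ bound. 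The output $Z$ is obtained by orthonormalizing $AA^+\tilde{Z}$, so $\text{colspan}(Z) \subseteq \text{colspan}(A)$ and $Z\T{Z}B = (AA^+\tilde{Z})(AA^+\tilde{Z})^+ B$. The factored form comes from $AA^+Z = Z$: set $X' := A^+ Z$ (computed by $k$ calls to \textsc{HighPrecisionRegression}) and $X'' := \T{Z}B$, which gives $A \cdot X' \cdot X'' = Z\T{Z}B$ exactly as required.

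\textbf{Running time.} Plug the oracle cost $T(\epsilonsub{f})$ from Lemma~\ref{lma:oracle-time-complexity} into Theorem~\ref{thm:main-theorem-krylov} with $q = \tilde{O}(1/\sqrt{\varepsilon})$. The dominating contribution is $qk\cdot T(\epsilonsub{f})$ with $\epsilonsub{f}$ inverse-polynomial in $\kappa, k$ and $(1+\sqrt{2})^q$. Since $T(\epsilonsub{f})$ carries a factor $t = \tilde{O}(1/\sqrt{\varepsilon})$ from the polynomial degree of $r$, the nonzero-count term becomes $\tilde O((k/\sqrt{\varepsilon}) \cdot (1/\sqrt{\varepsilon}) \cdot (\nnz{B} + (\nnz{A}+c^2)\log(\ldots)))$, matching the claimed $\nnz{B}\cdot k/\varepsilon + (\nnz{A}+c^2)\cdot k/\varepsilon^{1.5}$ leading order up to polylogarithmic factors in $\kappa, \kappa(AA^+B), d, k, 1/\varepsilon$. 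The $c^\omega$ cost is paid only once to build the sketching-based preconditioner of Theorem~\ref{thm:high-precision-regression}; all subsequent regression calls cost $O((\nnz{A}+c^2)\log(1/\epsilonsub{f}))$. The remaining internal bookkeeping of Algorithm~\ref{alg:musco-musco-adaptation}, the final $\T{Z}B$ product, and the orthonormalization of $AA^+\tilde{Z}$ contribute the stated $(n+d)k^2/\varepsilon$ term via the second summand of the bound in Theorem~\ref{thm:main-theorem-krylov}.

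\textbf{Main obstacle.} The most delicate bookkeeping is choosing $\epsilonsub{f}$ in the oracle small enough to absorb the $\kappa^{5q}k^{11}D^q$ factor from Theorem~\ref{thm:main-theorem-krylov}, which is exponential in $q = \tilde O(1/\sqrt{\varepsilon})$. Crucially, $T(\epsilonsub{f})$ depends on $1/\epsilonsub{f}$ and on $\|r\|_1$ only logarithmically, and by Lemma~\ref{lma:replacing-neg-square-root} one has $\log\|r\|_1 = \tilde O(1/\sqrt{\varepsilon})$; verifying that all of these blow-ups collapse into polylogarithmic overhead, and that the $\sigma_{k+1}(\mathcal{M}') \le 1 + O(\varepsilon)$ bound needed for Lemma~\ref{lma:apporximate-subspace-lra-rra} is actually obtained (rather than $1+O(\sqrt{\varepsilon})$), is the step requiring the most care.
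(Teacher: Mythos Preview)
Your proposal is correct and follows essentially the same route as the paper: bound $\kappa(\mathcal{M}')$ via Lemma~\ref{lma:replacing-neg-square-root}, feed the oracle cost of Lemma~\ref{lma:oracle-time-complexity} into Theorem~\ref{thm:main-theorem-krylov}, and then compose Lemmas~\ref{lma:replacing-neg-square-root} and~\ref{lma:apporximate-subspace-lra-rra} to get the final error bound.

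Two minor points where you diverge from the paper. First, you produce $Z$ by orthonormalizing $AA^+\tilde{Z}$, whereas Algorithm~\ref{alg:opnorm-regression} uses the $Z$ returned by Algorithm~\ref{alg:musco-musco-adaptation} directly, without any projection. The paper's justification is the last clause of Theorem~\ref{thm:main-theorem-krylov}: the oracle $\textsc{ApxProduct}$ (Algorithm~\ref{alg:oracle-M}) ends with a call to \textsc{HighPrecisionRegression}, so every approximation $\mathcal{M}'\circ v$ already lies in $\text{colspan}(A)$, and hence so does the returned $Z$. This is why $AA^+Z=Z$ holds on the nose and why $X'\gets\textsc{HighPrecisionRegression}(A,Z,1/2)$ gives $AX'=Z$ exactly (the residual $\|AA^+Z-Z\|$ is zero). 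Your orthonormalization step achieves the same end but is not what Algorithm~\ref{alg:opnorm-regression} actually does. Second, the $(n+d)k^2/\varepsilon$ term you attribute to this theorem is not part of its stated bound; that term enters only after Lemma~\ref{lma:removing-dependence} replaces $B$ by $\tilde{B}$, whose matrix-vector products cost an extra $(n+d)k$.
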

\IncMargin{1em}
\begin{algorithm2e}[t]
    \caption{Operator Norm Regression}
    \label{alg:opnorm-regression}
    \KwIn{$A \in \R^{n \times c}, B \in \R^{n \times d}, k \in \mathbb{Z}, \varepsilon > 0$}
    \KwOut{$X' \in \R^{c \times k}, X'' \in \R^{k \times d}$}
    \DontPrintSemicolon  
	$\beta \gets (1+\varepsilon/2)\max(\sigma_{k+1}(B),\opnorm{(I-AA^+)B})$\;
	$\Delta \gets \T{B}(I-AA^+)B$\tcc*[r]{Not computed explicitly}
	\tcc{Let $r(x)$ be the polynomial given by Lemma~\ref{lma:replacing-neg-square-root}}
	$\mathcal{M}' \gets (AA^+ B/\beta) {r}(\Delta/\beta^2)$\tcc*[r]{Not computed explicitly}
	$Z \gets \text{Algorithm~\ref{alg:musco-musco-adaptation}}(\mathcal{M}',k,\varepsilon/2,\textsc{ApxProduct},\textsc{ApxProdcutTranspose})$\;
	$X' \gets \textsc{HighPrecisionRegression}(A,Z,1/2)$\;
	$X'' \gets \T{Z}\cdot B$\;
\end{algorithm2e}
\DecMargin{1em}
\subsection{Removing \texorpdfstring{$\kappa(AA^+ B)$}{condition number} Dependence}
We observe that we can add a random rank $k+1$ matrix to $B$ to obtain a matrix $\tilde{B}$ for which $\kappa(AA^+ \tilde{B})$  is bounded in terms of $\kappa(B)$. We also show that any arbitrary vector $v$ can be multiplied with the matrix $\tilde{B}$ in time comparable to $\nnz{B}$.
\begin{lemma}
    Given any matrices $A \in \R^{n \times c}$ and $B \in \R^{n \times d}$, if $\text{rank}(A) \ge k+1$, then there exists a matrix $\tilde{B}$ such that if 
    \begin{equation}
        \opnorm{A\tilde{X} - \tilde{B}} \le (1+\varepsilon/2)\min_{\text{rank-}k\ X}\opnorm{AX - \tilde{B}}
    \end{equation}
    for a rank $k$ matrix $\tilde{X}$, then 
    \begin{equation*}
        \opnorm{A\tilde{X} - {B}} \le (1+\varepsilon)\opt.
    \end{equation*}
    Additionally, $\kappa(AA^\dagger \tilde{B}) = \sigma_1(AA^+ \tilde{B})/\sigma_{k+1}(AA^+ \tilde{B}) \le (Cn/\varepsilon)\sigma_1(B)/\sigma_{k+1}(B)$, and given a vector $v$, $\tilde{B}v$ can be computed in $O(\nnz{B} + (n+d)k)$ time.
    \label{lma:removing-dependence}
\end{lemma}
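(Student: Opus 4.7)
The plan is to construct $\tilde{B} := B + \alpha G F^T$, where $G \in \R^{n \times (k+1)}$ has i.i.d.\ standard Gaussian entries, $F \in \R^{d \times (k+1)}$ has orthonormal columns (the simplest choice being $F^T = [I_{k+1} \mid 0]$, which makes $F^T v$ just read off the first $k+1$ entries of $v$), and $\alpha := c\varepsilon \hat{\sigma}/\sqrt{n}$ for a small constant $c$, where $\hat{\sigma}$ is a constant-factor estimate of $\opt$ (obtainable by a preliminary Block Krylov call on $B$). By Corollary~\ref{cor:optimum-value}, $\opt \ge \sigma_{k+1}(B)$, so this also ensures $\alpha = \Omega(\varepsilon\sigma_{k+1}(B)/\sqrt{n})$, which will drive the condition-number bound.

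The error-transfer step is short: since $F^T F = I_{k+1}$ and $\opnorm{G} = O(\sqrt{n})$ with high probability by a standard Gaussian tail bound, $\opnorm{\tilde{B}-B} = \alpha\opnorm{GF^T} = \alpha\opnorm{G} \le (\varepsilon/8)\opt$ for $c$ sufficiently small. Two triangle inequalities then yield $\min_{\text{rank-}k\ X}\opnorm{AX-\tilde{B}} \le (1+\varepsilon/8)\opt$, and $\opnorm{A\tilde{X}-B} \le \opnorm{A\tilde{X}-\tilde{B}} + \opnorm{\tilde{B}-B} \le (1+\varepsilon/2)(1+\varepsilon/8)\opt + (\varepsilon/8)\opt \le (1+\varepsilon)\opt$ for small $\varepsilon$.

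The heart of the proof is the condition-number bound. The upper bound is immediate from Weyl: $\sigma_1(AA^+\tilde{B}) \le \sigma_1(AA^+B) + \alpha\opnorm{AA^+GF^T} \le \sigma_1(B) + O(\alpha\sqrt{n}) = O(\sigma_1(B))$. For the lower bound on $\sigma_{k+1}(AA^+\tilde{B})$, I will apply the min-max characterization restricted to the $(k+1)$-dimensional subspace $\text{colspan}(F)$: for any unit $w \in \R^{k+1}$, $v := Fw$ is a unit vector with $(AA^+\tilde{B})v = (AA^+BF + \alpha AA^+G)w$, so $\sigma_{k+1}(AA^+\tilde{B}) \ge \sigma_{\min}(AA^+BF + \alpha AA^+G) = \sigma_{\min}(U^TBF + \alpha U^TG)$, where $U \in \R^{n \times p}$ is an orthonormal basis for $\text{colspan}(A)$ and $p := \text{rank}(A) \ge k+1$ by hypothesis. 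Here $U^TG$ is a $p \times (k+1)$ Gaussian (independent of the fixed matrix $U^TBF$), and a smoothed-analysis bound for Gaussian-perturbed matrices (Sankar--Spielman--Teng for square matrices, extended to rectangular $p \ge k+1$ by truncating to the first $k+1$ rows) yields $\sigma_{\min}(U^TBF + \alpha U^TG) \ge c'\alpha/\sqrt{k+1}$ with constant probability, amplifiable to high probability. Plugging in, $\kappa(AA^+\tilde{B}) \le O(\sqrt{n}\cdot\sigma_1(B)/\alpha) \le O((n/\varepsilon)\kappa(B))$ as required. The matrix-vector bound is immediate: $\tilde{B}v = Bv + \alpha G(F^T v)$ costs $\nnz{B} + O(dk) + O(nk) = O(\nnz{B} + (n+d)k)$.

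The main obstacle is extracting the right $\alpha/\sqrt{n}$ scaling from the smoothed-analysis lower bound; a weaker $\alpha/\text{poly}(n)$ bound would be incompatible with the constraint $\alpha\sqrt{n} \le (\varepsilon/8)\opt$ from the error-transfer step, leaving no valid range for $\alpha$. A secondary consideration is that $\alpha$ depends on a preliminary estimate of $\opt$, but any constant-factor approximation is sufficient and cheap to compute; care is also needed to apply the smoothed-analysis result to the projected Gaussian $U^TG$ in the column span of $A$ rather than to an ambient $n \times (k+1)$ Gaussian.
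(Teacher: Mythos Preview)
Your proposal is correct and matches the paper's proof almost exactly: same perturbation $\tilde{B} = B + \alpha G F^{\mathsf{T}}$, same triangle-inequality error transfer, same reduction of the $\sigma_{k+1}(AA^+\tilde{B})$ lower bound to $\sigma_{\min}(U^{\mathsf{T}}BF + \alpha U^{\mathsf{T}}G)$ followed by a $(k+1)\times(k+1)$ smoothed-analysis estimate (the paper cites Tao--Vu where you cite Sankar--Spielman--Teng). The one unnecessary complication in your write-up is basing $\alpha$ on a preliminary Block-Krylov estimate of $\opt$: the paper simply takes $\alpha = \varepsilon\sigma_{k+1}(B)/(6\sqrt{n})$, which suffices because $\opt \ge \sigma_{k+1}(B)$ and the lemma is existential about $\tilde{B}$, so no computation of $\hat\sigma$ is required.
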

The proof of this lemma is in Appendix~\ref{subsec:lma:removing-dependence}. Therefore we run Algorithm~\ref{alg:opnorm-regression} on matrix $\tilde{B}$ and can compute a $(1+\varepsilon)$-approximate solution to the problem $\min_{\text{rank-}k\ X}\opnorm{AX-B}$ in time 
    \begin{equation}
	O\left(\left(\frac{\nnz{B} \cdot k}{\varepsilon} + \frac{(n+d)k^2}{\varepsilon}+ \frac{\nnz{A}\cdot k}{\varepsilon^{1.5}} +\frac{c^2k}{\varepsilon^{1.5}}\right)\cdot \text{polylog}(\kappa, n,d,k,1/\varepsilon) + c^{\omega}\right).
\end{equation}
\section{Experiments and Implementation}
It is evident that our algorithm is faster than the algorithm of \citet{sou2012generalized} for large matrices $A$ and $B$, as their algorithm cannot make use of the sparsity of the matrices, and also has to compute the eigenvalue decomposition of a dense and large $d \times d$ matrix. Let $n = d = 7000, c = 100$, and $k = 30$. We instantiate an $n \times d$ matrix $B$ with $5\%$ of the entries being non-zero, where each non-zero entry is sampled independently from a uniform distribution on $[0,1]$. The $n \times c$ matrix $A$ is obtained by taking the first $c$ columns of the matrix $B$. With $\epsilon = 0.05$, our algorithm runs in less than $20$ seconds, whereas an implementation of Sou and Rantzer's algorithm runs in around $10$ minutes. For larger values of $n$ and $d$, our algorithm is faster by an even larger factor. An implementation of our algorithm and the above example is available \href{https://gitlab.com/praneeth10/operator-norm-reduced-rank-regression}{here} \footnote{\url{https://gitlab.com/praneeth10/operator-norm-reduced-rank-regression}}.

\section*{Acknowledgments}
The authors would like to thank support from the National Institute of Health (NIH) grant 5R01 HG 10798-2, Office of Naval Research (ONR) grant N00014-18-1-256, and a Simons Investigator Award.

%\printbibliography
\bibliography{main}

\begin{thebibliography}{31}
\providecommand{\natexlab}[1]{#1}
\providecommand{\url}[1]{\texttt{#1}}
\expandafter\ifx\csname urlstyle\endcsname\relax
  \providecommand{\doi}[1]{doi: #1}\else
  \providecommand{\doi}{doi: \begingroup \urlstyle{rm}\Url}\fi

\bibitem[Balcan et~al.(2016)Balcan, Du, Wang, and Yu]{balcan-du-wang-yu}
Maria-Florina Balcan, Simon~Shaolei Du, Yining Wang, and Adams~Wei Yu.
\newblock An improved gap-dependency analysis of the noisy power method.
\newblock In Vitaly Feldman, Alexander Rakhlin, and Ohad Shamir, editors,
  \emph{29th Annual Conference on Learning Theory}, volume~49 of
  \emph{Proceedings of Machine Learning Research}, pages 284--309, Columbia
  University, New York, New York, USA, 23--26 Jun 2016. PMLR.
\newblock URL \url{http://proceedings.mlr.press/v49/balcan16a.html}.

\bibitem[Boutsidis(2011)]{boutsidis2011topics}
Christos Boutsidis.
\newblock \emph{Topics in Matrix Sampling Algorithms}.
\newblock PhD thesis, Rensselaer Polytechnic Institute, USA, 2011.

\bibitem[Braak and Looman(1994)]{braak1994biplots}
Cajo JF~Ter Braak and Caspar~WN Looman.
\newblock Biplots in reduced-rank regression.
\newblock \emph{Biometrical journal}, 36\penalty0 (8):\penalty0 983--1003,
  1994.

\bibitem[Braverman et~al.(2020)Braverman, Hazan, Simchowitz, and
  Woodworth]{bhsw20}
Mark Braverman, Elad Hazan, Max Simchowitz, and Blake Woodworth.
\newblock The gradient complexity of linear regression.
\newblock In Jacob Abernethy and Shivani Agarwal, editors, \emph{Proceedings of
  Thirty Third Conference on Learning Theory}, volume 125 of \emph{Proceedings
  of Machine Learning Research}, pages 627--647. PMLR, 09--12 Jul 2020.
\newblock URL \url{http://proceedings.mlr.press/v125/braverman20a.html}.

\bibitem[Carrell(2017)]{carrell_groups_2017}
James~B. Carrell.
\newblock \emph{Groups, {Matrices}, and {Vector} {Spaces}: {A} {Group}
  {Theoretic} {Approach} to {Linear} {Algebra}}.
\newblock Springer New York, New York, NY, 2017.
\newblock \doi{10.1007/978-0-387-79428-0}.
\newblock URL \url{http://link.springer.com/10.1007/978-0-387-79428-0}.

\bibitem[Clarkson and Woodruff(2013)]{cw13}
Kenneth~L. Clarkson and David~P. Woodruff.
\newblock Low rank approximation and regression in input sparsity time.
\newblock In \emph{Symposium on Theory of Computing Conference, STOC'13, Palo
  Alto, CA, USA, June 1-4, 2013}, pages 81--90, 2013.
\newblock \doi{10.1145/2488608.2488620}.
\newblock URL \url{https://doi.org/10.1145/2488608.2488620}.

\bibitem[Friedland and Torokhti(2007)]{f07}
Shmuel Friedland and Anatoli Torokhti.
\newblock Generalized rank-constrained matrix approximations.
\newblock \emph{SIAM Journal on Matrix Analysis and Applications}, 29\penalty0
  (2):\penalty0 656--659, 2007.

\bibitem[Gu(2015)]{gu-error-translation-fro-to-op}
M.~Gu.
\newblock Subspace iteration randomization and singular value problems.
\newblock \emph{SIAM Journal on Scientific Computing}, 37\penalty0
  (3):\penalty0 A1139--A1173, 2015.
\newblock \doi{10.1137/130938700}.
\newblock URL \url{https://doi.org/10.1137/130938700}.

\bibitem[Halko et~al.(2011)Halko, Martinsson, and Tropp]{halko2011finding}
Nathan Halko, Per-Gunnar Martinsson, and Joel~A Tropp.
\newblock Finding structure with randomness: Probabilistic algorithms for
  constructing approximate matrix decompositions.
\newblock \emph{SIAM review}, 53\penalty0 (2):\penalty0 217--288, 2011.

\bibitem[Hardt and Price(2014)]{hardt-price}
Moritz Hardt and Eric Price.
\newblock The noisy power method: A meta algorithm with applications.
\newblock In Z.~Ghahramani, M.~Welling, C.~Cortes, N.~Lawrence, and K.~Q.
  Weinberger, editors, \emph{Advances in Neural Information Processing
  Systems}, volume~27, pages 2861--2869. Curran Associates, Inc., 2014.
\newblock URL
  \url{https://proceedings.neurips.cc/paper/2014/file/729c68884bd359ade15d5f163166738a-Paper.pdf}.

\bibitem[Hardt and Roth(2013)]{hardt-roth}
Moritz Hardt and Aaron Roth.
\newblock Beyond worst-case analysis in private singular vector computation.
\newblock In \emph{Proceedings of the Forty-Fifth Annual ACM Symposium on
  Theory of Computing}, STOC '13, page 331–340, New York, NY, USA, 2013.
  Association for Computing Machinery.
\newblock ISBN 9781450320290.
\newblock \doi{10.1145/2488608.2488650}.
\newblock URL \url{https://doi.org/10.1145/2488608.2488650}.

\bibitem[Jiang et~al.(2020)Jiang, Lee, Song, and Wong]{jlsw20}
Haotian Jiang, Yin~Tat Lee, Zhao Song, and Sam~Chiu{-}wai Wong.
\newblock An improved cutting plane method for convex optimization,
  convex-concave games, and its applications.
\newblock In \emph{Proccedings of the 52nd Annual {ACM} {SIGACT} Symposium on
  Theory of Computing, {STOC} 2020, Chicago, IL, USA, June 22-26, 2020}, pages
  944--953, 2020.
\newblock \doi{10.1145/3357713.3384284}.
\newblock URL \url{https://doi.org/10.1145/3357713.3384284}.

\bibitem[Karnin and Liberty(2015)]{karnin2015online}
Zohar Karnin and Edo Liberty.
\newblock Online pca with spectral bounds.
\newblock In \emph{Conference on Learning Theory}, pages 1129--1140, 2015.

\bibitem[Kobak et~al.(2019)Kobak, Bernaerts, Weis, Scala, Tolias, and
  Berens]{kobak2019sparse}
Dmitry Kobak, Yves Bernaerts, Marissa~A Weis, Federico Scala, Andreas Tolias,
  and Philipp Berens.
\newblock Sparse reduced-rank regression for exploratory visualization of
  multimodal data sets.
\newblock \emph{bioRxiv}, page 302208, 2019.

\bibitem[Legendre and Anderson(1999)]{legendre1999distance}
Pierre Legendre and Marti~J Anderson.
\newblock Distance-based redundancy analysis: testing multispecies responses in
  multifactorial ecological experiments.
\newblock \emph{Ecological monographs}, 69\penalty0 (1):\penalty0 1--24, 1999.

\bibitem[Maher(2007)]{m07}
PJ~Maher.
\newblock Some norm inequalities concerning generalized inverses, 2.
\newblock \emph{Linear algebra and its applications}, 420\penalty0
  (2-3):\penalty0 517--525, 2007.

\bibitem[Meng and Mahoney(2013)]{mm13}
Xiangrui Meng and Michael~W. Mahoney.
\newblock Low-distortion subspace embeddings in input-sparsity time and
  applications to robust linear regression.
\newblock In Dan Boneh, Tim Roughgarden, and Joan Feigenbaum, editors,
  \emph{Symposium on Theory of Computing Conference, STOC'13, Palo Alto, CA,
  USA, June 1-4, 2013}, pages 91--100. {ACM}, 2013.
\newblock \doi{10.1145/2488608.2488621}.
\newblock URL \url{https://doi.org/10.1145/2488608.2488621}.

\bibitem[Musco and Musco(2015)]{muscomusco}
Cameron Musco and Christopher Musco.
\newblock Randomized block krylov methods for stronger and faster approximate
  singular value decomposition.
\newblock In \emph{Proceedings of the 28th International Conference on Neural
  Information Processing Systems - Volume 1}, NIPS’15, page 1396–1404,
  Cambridge, MA, USA, 2015. MIT Press.

\bibitem[Musco et~al.(2018)Musco, Musco, and Sidford]{mms-lancosz}
Cameron Musco, Christopher Musco, and Aaron Sidford.
\newblock Stability of the lanczos method for matrix function approximation.
\newblock In \emph{Proceedings of the Twenty-Ninth Annual ACM-SIAM Symposium on
  Discrete Algorithms}, SODA '18, page 1605–1624, USA, 2018. Society for
  Industrial and Applied Mathematics.
\newblock ISBN 9781611975031.

\bibitem[Nambirajan(2015)]{nambirajan2015topics}
Srinivas Nambirajan.
\newblock \emph{Topics in {Matrix} {Approximation}}.
\newblock PhD thesis, Rensselaer Polytechnic Institute, USA, 2015.

\bibitem[Nelson and Nguyen(2013)]{nn13}
Jelani Nelson and Huy~L. Nguyen.
\newblock {OSNAP:} faster numerical linear algebra algorithms via sparser
  subspace embeddings.
\newblock In \emph{54th Annual {IEEE} Symposium on Foundations of Computer
  Science, {FOCS} 2013, 26-29 October, 2013, Berkeley, CA, {USA}}, pages
  117--126, 2013.

\bibitem[Rudelson and Vershynin(2009)]{gaussian-smallest-singular-value}
Mark Rudelson and Roman Vershynin.
\newblock Smallest singular value of a random rectangular matrix.
\newblock \emph{Communications on Pure and Applied Mathematics}, 62\penalty0
  (12):\penalty0 1707--1739, 2009.
\newblock \doi{10.1002/cpa.20294}.
\newblock URL \url{https://onlinelibrary.wiley.com/doi/abs/10.1002/cpa.20294}.

\bibitem[Rudelson and
  Vershynin(2010)]{rudelson-vershynin-extreme-singular-values}
Mark Rudelson and Roman Vershynin.
\newblock Non-asymptotic theory of random matrices: Extreme singular values.
\newblock \emph{Proceedings of the International Congress of Mathematicians
  2010, ICM 2010}, 03 2010.
\newblock \doi{10.1142/9789814324359_0111}.

\bibitem[Sachdeva and Vishnoi(2014)]{sachdeva-vishnoi}
Sushant Sachdeva and Nisheeth~K. Vishnoi.
\newblock Faster algorithms via approximation theory.
\newblock \emph{Foundations and Trends® in Theoretical Computer Science},
  9\penalty0 (2):\penalty0 125--210, 2014.
\newblock ISSN 1551-305X.
\newblock \doi{10.1561/0400000065}.
\newblock URL \url{http://dx.doi.org/10.1561/0400000065}.

\bibitem[She and Chen(2017)]{she2017robust}
Yiyuan She and Kun Chen.
\newblock Robust reduced-rank regression.
\newblock \emph{Biometrika}, 104\penalty0 (3):\penalty0 633--647, 2017.

\bibitem[Simchowitz et~al.(2018)Simchowitz, El~Alaoui, and
  Recht]{simchowitz2018tight}
Max Simchowitz, Ahmed El~Alaoui, and Benjamin Recht.
\newblock Tight query complexity lower bounds for pca via finite sample
  deformed wigner law.
\newblock In \emph{Proceedings of the 50th Annual ACM SIGACT Symposium on
  Theory of Computing}, pages 1249--1259, 2018.

\bibitem[Sou and Rantzer(2012)]{sou2012generalized}
Kin~Cheong Sou and Anders Rantzer.
\newblock On generalized matrix approximation problem in the spectral norm.
\newblock \emph{Linear Algebra and its Applications}, 436\penalty0
  (7):\penalty0 2331--2341, 2012.

\bibitem[Szlam et~al.(2014)Szlam, Kluger, and Tygert]{szlam2014implementation}
Arthur Szlam, Yuval Kluger, and Mark Tygert.
\newblock An implementation of a randomized algorithm for principal component
  analysis.
\newblock \emph{arXiv preprint arXiv:1412.3510}, 2014.

\bibitem[Velu and Reinsel(2013)]{velu2013multivariate}
Raja Velu and Gregory~C Reinsel.
\newblock \emph{Multivariate reduced-rank regression: theory and applications},
  volume 136.
\newblock Springer Science \& Business Media, 2013.

\bibitem[Vu and Tao(2007)]{tao-vu-perturbation}
Van~H. Vu and Terence Tao.
\newblock The condition number of a randomly perturbed matrix.
\newblock In \emph{Proceedings of the Thirty-Ninth Annual ACM Symposium on
  Theory of Computing}, STOC '07, page 248–255, New York, NY, USA, 2007.
  Association for Computing Machinery.
\newblock ISBN 9781595936318.
\newblock \doi{10.1145/1250790.1250828}.
\newblock URL \url{https://doi.org/10.1145/1250790.1250828}.

\bibitem[Woodruff(2014)]{sketching-dw}
David~P. Woodruff.
\newblock Sketching as a tool for numerical linear algebra.
\newblock \emph{Foundations and Trends® in Theoretical Computer Science},
  10\penalty0 (1–2):\penalty0 1--157, 2014.
\newblock ISSN 1551-305X.
\newblock \doi{10.1561/0400000060}.
\newblock URL \url{http://dx.doi.org/10.1561/0400000060}.

\end{thebibliography}
\appendix
\section{Omitted Proofs from Section~\ref{sec:related-work}}
\subsection{Proof of Theorem~\ref{thm:sou-rantzer}}\label{subsec:proof-of-sou-rantzer}
\begin{proof}
Without loss of generality, we prove the theorem assuming $A$ has orthonormal columns. Thus $U=A$. Let $X$ be an arbitrary matrix such that $\opnorm{UX-B} < 1$. We will give a series of statements equivalent to $\opnorm{UX-B} < 1$ that prove the theorem. Using the fact that for any matrix $A$, $\opnorm{A} < 1$ if and only if $\T{A}A \prec I$, we obtain the equivalent statement
\begin{equation*}
	\T{(UX-B)}(UX-B) \prec I.
\end{equation*}
Writing $B$ as $U\T{U}B + (I-U\T{U})B$, we get another equivalent statement
\begin{equation*}
\T{(UX-U\T{U}B)}(UX-U\T{U}B) \prec I - \T{B}(I-U\T{U})B = I - \Delta.
\end{equation*}
As the LHS of the above relation is a positive semi-definite matrix, we obtain that $I-\Delta \succ 0$ and hence is invertible. Thus, the above condition can be equivalently written as
\begin{equation*}
	(I-\Delta)^{-1/2}\T{(UX-U\T{U}B)}(UX-U\T{U}B)(I-\Delta)^{-1/2} \prec I.
\end{equation*}
Using the fact that $\Delta$ is symmetric and $\T{U}U = I$, we get that the above condition is the same as
\begin{equation*}
	\opnorm{X(I-\Delta)^{-1/2} - \T{U}B(I-\Delta)^{-1/2}} < 1.
\end{equation*}
Thus, we obtain that in the case that $\opnorm{(I-U\T{U})B} < 1$, for an arbitrary matrix $X$, the condition that $\opnorm{UX-B} < 1$ is equivalent to $\opnorm{X(I-\Delta)^{-1/2} - \T{U}B(I-\Delta)^{-1/2}} < 1$. Let $\hat{B} := \T{U}B(I-\Delta)^{-1/2}$. It is easy to see that $X = [\hat{B}]_{\text{sve}(\hat{B})}(I-\Delta)^{1/2}$ satisfies $\opnorm{UX-B} < 1$ and that any matrix $X$ that satisfies $\opnorm{UX-B} < 1$ must have rank at least $\text{sve}(\hat{B})$. All that remains to show is that $\text{sve}(\hat{B}) = \text{sve}(B)$.  We will show that $k^-(I-\T{\hat{B}}\hat{B}) = k^-(I-\T{B}B)$, which completes the proof:
\begin{align*}
I-\T{\hat{B}}\hat{B} &= I - (I-\Delta)^{-1/2}\T{B}U\T{U}B(I-\Delta)^{-1/2}\\
&= I - (I-\Delta)^{-1/2}(\T{B}B - \Delta)(I-\Delta)^{-1/2}\\
&= I - (I-\Delta)^{-1/2}(\T{B}B - I + I - \Delta)(I-\Delta)^{-1/2}\\
&= I - I + (I-\Delta)^{-1/2}(I -\T{B}B)(I-\Delta)^{-1/2}\\
&=(I-\Delta)^{-1/2}(I -\T{B}B)(I-\Delta)^{-1/2}.
\end{align*}
Thus $k^-(I-\T{\hat{B}}\hat{B}) = k^-((I-\Delta)^{-1/2}(I-\T{B}B)(I-\Delta)^{-1/2})$. By Sylvester's law of inertia \cite[p313]{carrell_groups_2017}, $k^-((I-\Delta)^{-1/2}(I-\T{B}B)(I-\Delta)^{-1/2}) = k^-(I-\T{B}B)$. Therefore
\begin{equation*}
	\text{sve}(\hat{B}) = k^{-}(I-\T{\hat{B}}\hat{B}) = k^{-}(I-\T{B}B) = \text{sve}(B).
\end{equation*}
Thus $\text{sve}(B)$ is the optimum value for $\eqref{eqn:sr-problem}$ if it is feasible.
\end{proof}
\section{Omitted Proofs from Section~\ref{sec:reduced-rank-regression-intro}}
\subsection{Proof of Lemma~\ref{lma:attainable-condition}}\label{subsec:lma:attainable-condition}
\begin{proof}
The proof of this lemma is very similar to the proof of Theorem~\ref{thm:sou-rantzer}. Suppose there exists a rank-$k$ matrix $X$ such that $\opnorm{UX - B} < \beta$. We already have $\beta > \opnorm{(I-U\T{U})B}$. The statement $\opnorm{UX-B} < \beta$ implies that
\begin{align*}
    \T{(UX-B)}(UX-B) \preceq \beta^2I.
\end{align*}
We can write $B = U\T{U}B + (I- U\T{U})B$ and obtain that for any matrix $X$, $\T{(UX-B)}(UX-B) = \T{(UX-U\T{U}B)}(UX-U\T{U}B) + \Delta$, which implies that
\begin{equation*}
    \T{(UX - U\T{U}B)}(UX - U\T{U}B) \preceq \beta^2I - \Delta.
\end{equation*}
As $\opnorm{\Delta} = \opnorm{(I-U\T{U})B}^2 < \beta^2$, $\beta^2I - \Delta$ is invertible, which implies that
\begin{equation*}
    (\beta^2I - \Delta)^{-1/2}\T{(UX - U\T{U}B)}(UX - U\T{U}B)(\beta^2I - \Delta)^{-1/2} \preceq I.
\end{equation*}
Thus, we have  
$
    \opnorm{(UX - U\T{U}B)(\beta^2I - \Delta)^{-1/2}} = \opnorm{X(\beta^2I - \Delta)^{-1/2} - \T{U}B(\beta^2I - \Delta)^{-1/2}}
$
is less than or equal to $1$.
As $X$ is a matrix of rank $k$, the matrix $X(\beta^2I - \Delta)^{-1/2}$ also has rank $k$. Therefore 
\begin{align*}
	\sigma_{k+1}(\T{U}B(\beta^2I - \Delta)^{-1/2}) &= \opnorm{[\T{U}B(\beta^2I - \Delta)^{-1/2}]_k - \T{U}B(\beta^2I - \Delta)^{-1/2}}\\
	&\le \opnorm{X(\beta^2I - \Delta)^{-1/2} - \T{U}B(\beta^2I - \Delta)^{-1/2}}\\
	&\le 1.
\end{align*}
\end{proof}
\subsection{Proof of Lemma~\ref{lma:approximation-to-approximation}}
\label{subsec:lma:approximation-to-approximation}
\begin{proof}
Suppose $Y$ is a rank $k$ matrix such that $\opnorm{Y - \T{U}B(\beta^2I - \Delta)^{-1/2}} \le 1 + \varepsilon$.
Then we have $\opnorm{Y(\beta^2I - \Delta)^{1/2}(\beta^2I - \Delta)^{-1/2} - \T{U}B(\beta^2I - \Delta)^{-1/2}} \le 1 + \varepsilon$ and therefore
\begin{equation*}
    (\beta^2I - \Delta)^{-1/2}\T{(Y(\beta^2 I - \Delta)^{1/2} - \T{U}B)}(Y(\beta^2 I - \Delta)^{1/2} - \T{U}B)(\beta^2I - \Delta)^{-1/2} \preceq (1 + \varepsilon)^2I.
\end{equation*}
Multiplying the above relation on both sides with $(\beta^2I - \Delta)^{1/2}$ on the left and the right, we obtain
\begin{equation*}
    \T{(Y(\beta^2I - \Delta)^{1/2} - \T{U}B)}(Y(\beta^2I - \Delta)^{1/2} - \T{U}B) \preceq (1 + \varepsilon)^2(\beta^2I - \Delta).
\end{equation*}
Using $\T{U}U = I$ and adding $\Delta$ to both sides, we conclude that
\begin{equation*}
    \opnorm{UY(\beta^2I - \Delta)^{1/2} - B} \le \sqrt{\opnorm{(1+\varepsilon)^2\beta^2I}} \le (1+\epsilon)\beta.
\end{equation*}
Now $Y$ is a matrix that has rank at most $k$. We also have $\opnorm{UYZ - B} \ge \opnorm{UY(UY)^{+}B - B}$ for any matrix $Z$. Therefore
$
    \opnorm{UY(UY)^{+}B - B} \le \opnorm{UY(\beta^2I - \Delta)^{1/2} - B} \le (1+\varepsilon)\beta.
$
\end{proof}
\section{Omitted Proofs from Section~\ref{sec:block-krylov-iteration}}
\subsection{Error in Computing Krylov Subspace}
Given a matrix $M \in \R^{n \times d}$, an integer $k \le d$ and an odd integer $q \ge 0$, the Krylov subspace is defined by 
\begin{equation*}
	K = [(M\T{M})^{(q-1)/2}MG,\ (M\T{M})^{(q-3)/2}MG,\ \cdots,\ (M\T{M})^{1}MG,\ MG]
\end{equation*}
where $G$ is a $d \times k$ matrix with i.i.d. normal entries. Using the algorithm to approximately multiply a vector with the matrices $M$ and $\T{M}$, we compute an approximation to the matrix $K$ defined above. For any vector $v$, define $(M\T{M})^{\circ 0}v := v$ and for $i > 0$, define $(M\T{M})^{\circ i}v := M \circ (\T{M} \circ ((M\T{M})^{\circ (i-1)}v))$ (recall $M \circ v$ is the approximation to $Mv$ computed by the oracle). The notation is similarly extended to approximate matrix multiplication using the oracle. Now we define the matrix
\begin{equation*}
K' = [(M\T{M})^{\circ (q-1)/2}M\circ G,\ (M\T{M})^{\circ (q-3)/2}M\circ G,\ \cdots,\ (M\T{M})^{\circ 1}M\circ G,\ M\circ G].
\end{equation*}
Let $Q$, $Q'$ denote orthonormal bases for the matrices $K$ and $K'$ respectively. We now bound $\frnorm{K - K'}$ and the time required to compute $K'$ using the following lemma.
\begin{lemma}
For any matrix $M \in \R^{n \times d}$, matrix $G \in \R^{d \times k}$ and an odd integer $q$, let $\Delta_{i,G} :=(M\T{M})^{(i-1)/2}MG - (M\T{M})^{\circ (i-1)/2}M\circ G$ and matrices $K,K' \in \R^{n \times qk}$ be as defined above. Then
\begin{equation*}
	E_{i,G} := \frnorm{\Delta_{i,G}} \le 8\epsilon_{\circ}(2^{i/2}\opnorm{M}^i\frnorm{G})
\end{equation*}
for $i = 1,3,5,\ldots,q$ and
$\frnorm{K - K'} \le O(\epsilon_{\circ}\frnorm{G}\opnorm{M}^{q+1}2^{(q+1)/2})$. The matrix $K'$ can be computed in $O(T(\epsilon_{\circ})qk)$ time. 
\label{lma:krylov-matrix-error}
\end{lemma}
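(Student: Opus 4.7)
The plan is to prove the per-block bound $E_{i,G} \le 8\epsilon_\circ 2^{i/2}\opnorm{M}^i\frnorm{G}$ by induction on $i$ (through odd values), then aggregate to get $\frnorm{K-K'}$, and finally account for the runtime by noting that the approximate blocks can be computed incrementally.

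For the base case $i=1$, we have $\Delta_{1,G} = MG - M\circ G$. Applying the column oracle $M\circ(\cdot)$ to each column $g$ of $G$, the oracle guarantee gives $\opnorm{Mg - M\circ g} \le \epsilon_\circ \opnorm{M}\opnorm{g}$, and squaring and summing over columns yields $\frnorm{\Delta_{1,G}} \le \epsilon_\circ \opnorm{M}\frnorm{G}$, which is within the claimed envelope since $8\cdot 2^{1/2} > 1$. For the inductive step, let $A := (M\T{M})^{(i-1)/2}MG$ and $A' := (M\T{M})^{\circ(i-1)/2}M\circ G$, so that by hypothesis $\frnorm{A-A'} = E_{i,G}$ and $\frnorm{A} \le \opnorm{M}^i \frnorm{G}$. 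By definition, the $(i+2)$-th block compares $M\T{M}A$ with $M\circ(\T{M}\circ A')$. I will split the error via three triangle-inequality steps: (a) $\frnorm{M\T{M}A - M\T{M}A'} \le \opnorm{M}^2 E_{i,G}$ using submultiplicativity; (b) the transpose-oracle error, applied column-by-column to $A'$, gives $\frnorm{\T{M}A' - \T{M}\circ A'} \le \epsilon_\circ \opnorm{M}\frnorm{A'}$, hence $\opnorm{M}\cdot \epsilon_\circ\opnorm{M}\frnorm{A'}$ after multiplying by $M$; and (c) the outer oracle contributes $\epsilon_\circ \opnorm{M}\frnorm{\T{M}\circ A'}$. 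Substituting $\frnorm{A'} \le \opnorm{M}^i \frnorm{G} + E_{i,G}$ and the inductive hypothesis, the three contributions collapse to $8\epsilon_\circ 2^{i/2}\opnorm{M}^{i+2}\frnorm{G}$ plus lower-order terms bounded by $2\epsilon_\circ \opnorm{M}^{i+2}\frnorm{G}$ (plus $O(\epsilon_\circ^2)$ junk), and the doubling budget $8\cdot 2^{(i+2)/2} - 8\cdot 2^{i/2} = 8\cdot 2^{i/2}$ comfortably absorbs the extra $2\epsilon_\circ$ term for every $i \ge 1$, closing the induction.

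For the global bound, since the $K$ and $K'$ decompose into the $(q+1)/2$ column-blocks indexed by odd $i \in \{1,3,\ldots,q\}$, Pythagoras gives
\begin{equation*}
\frnorm{K-K'}^2 \;=\; \sum_{i\text{ odd},\,1\le i\le q} E_{i,G}^2 \;\le\; 64\,\epsilon_\circ^2 \frnorm{G}^2 \sum_{i\text{ odd},\,1\le i\le q} 2^i \opnorm{M}^{2i},
\end{equation*}
which is a geometric sum dominated by the top term $2^q\opnorm{M}^{2q}$, yielding $\frnorm{K-K'} = O(\epsilon_\circ \frnorm{G} \opnorm{M}^{q}\, 2^{q/2})$, and hence the claimed (slightly looser) $O(\epsilon_\circ \frnorm{G}\opnorm{M}^{q+1} 2^{(q+1)/2})$ bound.

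For the running time, the key observation is that the blocks of $K'$ can be produced incrementally: having already computed $(M\T{M})^{\circ (i-1)/2} M \circ G$, the next block $(M\T{M})^{\circ(i+1)/2}M\circ G$ is obtained by one application of the transpose oracle followed by one application of the forward oracle to each of the $k$ columns, i.e.\ $2k$ oracle calls per block. Summing over the $(q+1)/2$ blocks gives $O(qk)$ oracle calls, each costing $T(\epsilon_\circ)$, for a total of $O(T(\epsilon_\circ)qk)$. The main obstacle, and the place that requires the most care, is the inductive step: the error from the previous iterate is amplified by $\opnorm{M}^2$, and one must verify that the geometric-doubling slack $2^{i/2} \to 2^{(i+2)/2}$ strictly beats the additive noise from the two fresh oracle calls, so the constant $8$ in the hypothesis never has to grow. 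Everything else is bookkeeping with submultiplicativity and the column-wise oracle guarantee.
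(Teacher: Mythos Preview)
Your proof is correct and follows essentially the same approach as the paper: a triangle-inequality decomposition separating the propagated error $\opnorm{M}^2 E_{i,G}$ from the fresh oracle error of one $(M\T{M})$-step, yielding the same recurrence and the same geometric slack that lets the constant $8$ survive. The only cosmetic differences are that the paper first treats a single vector and then passes to $G$ column-by-column, and it writes out and solves the recurrence $E_i \le (1+3\epsilon_\circ)\opnorm{M}^2 E_{i-2} + 3\epsilon_\circ\opnorm{M}^i\opnorm{v}$ explicitly (via $(1+x)^n \le 1+2^n x$) rather than closing the induction directly as you do.
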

\begin{proof}
	For an arbitrary vector $v$ and $i$ odd, let $\Delta_i := {(M\T{M})^{(i-1)/2}Mv - (M\T{M})^{\circ (i-1)/2}M\circ v}$. Let $E_i = \opnorm{\Delta_i}$. We have $E_1 = \opnorm{\Delta_1} = \opnorm{Mv - M \circ v} \le \opnorm{M}\opnorm{v}$. We now define a recurrence relation between $E_{i}$ and $E_{i-2}$ and then bound $E_i$ using this recurrence. We have
	\begin{align*}
		\Delta_i &= (M\T{M})^{(i-1)/2}Mv - (M\T{M})^{\circ (i-1)/2}M \circ v\\
		&= (M\T{M})(M\T{M})^{(i-3)/2}Mv - (M\T{M})^{\circ 1}(M\T{M})^{\circ (i-3)/2}M \circ v\\
		&= (M\T{M})[(M\T{M})^{(i-3)/2}Mv -(M\T{M})^{\circ (i-3)/2}M \circ v]\\
		&\quad + [(M\T{M})^{1}(M\T{M})^{\circ (i-3)/2}M \circ v - (M\T{M})^{\circ 1}(M\T{M})^{\circ (i-3)/2}M \circ v]\\
		&= (M\T{M})\Delta_{i-2} + [(M\T{M})^{1}(M\T{M})^{\circ (i-3)/2}M \circ v - (M\T{M})^{\circ 1}(M\T{M})^{\circ (i-3)/2}M \circ v].
	\end{align*}
	Therefore, by the triangle inequality of $\opnorm{\cdot}$,
	\begin{align*}
		E_i &\le \opnorm{M\T{M}\Delta_{i-2}} + \opnorm{(M\T{M})^{1}(M\T{M})^{\circ (i-3)/2}M \circ v - (M\T{M})^{\circ 1}(M\T{M})^{\circ (i-3)/2}M \circ v}\\
		&\le \opnorm{M}^2E_{i-2}+ \opnorm{(M\T{M})^{1}(M\T{M})^{\circ (i-3)/2}M \circ v - (M\T{M})^{\circ 1}(M\T{M})^{\circ (i-3)/2}M \circ v}.
	\end{align*}
	Let $v' := (M\T{M})^{\circ (i-3)/2}M \circ v$. We now bound $\opnorm{M\T{M}v' - (M\T{M})^{\circ 1}v'}$:
	\begin{align*}
		\opnorm{M\T{M}v' - (M\T{M})^{\circ 1}v'} &= \opnorm{M\T{M}v' - M \circ (M \circ v')}\\
		&\le \opnorm{M\T{M}v' - M(\T{M} \circ v')} + \opnorm{M(\T{M} \circ v') - M \circ (\T{M} \circ v')}\\
		&\le \opnorm{M}\opnorm{\T{M}v' - \T{M} \circ v'} + \epsilon_{\circ}\opnorm{M}\opnorm{\T{M} \circ v'}\\
		&\le \epsilon_{\circ}\opnorm{M}^2\opnorm{v'} + \epsilon_{\circ}\opnorm{M}(\epsilon_{\circ}\opnorm{M}\opnorm{v'} + \opnorm{\T{M}v'})\\
		&\le 3\epsilon_{\circ}\opnorm{M}^2\opnorm{v'}.
	\end{align*}
	As $v' = (M\T{M})^{(i-3)/2}Mv - \Delta_{i-2}$, we get $\opnorm{v'} \le \opnorm{(M\T{M})^{(i-3)/2}Mv} + \opnorm{\Delta_{i-2}} \le \opnorm{M}^{i-2}\opnorm{v} + E_{i-2}$. Therefore we finally obtain that
	\begin{align*}
		E_i \le \opnorm{M}^2E_{i-2} + 3\epsilon_{\circ}\opnorm{M}^2\opnorm{v'} \le \opnorm{M}^2E_{i-2} + 3\epsilon_{\circ}\opnorm{M}^2(\opnorm{M}^{i-2}\opnorm{v} + E_{i-2})\\
		 \le (1+3\epsilon_{\circ})\opnorm{M}^2E_{i-2} + 3\epsilon_{\circ}\opnorm{M}^i\opnorm{v}.
	\end{align*}
	Solving this recurrence relation we obtain that
	\begin{align*}
		E_i &\le (1+3\epsilon_{\circ})^{(i-1)/2}\opnorm{M}^{i-1}E_1 + (1+(1+3\epsilon_{\circ}) + \cdots + (1+3\epsilon_{\circ})^{(i-3)/2})(3\epsilon_{\circ}\opnorm{M}^i\opnorm{v})\\
		&\le \epsilon_{\circ}(1+2^{(i-1)/2}(3\epsilon_{\circ}))\opnorm{M}^i\opnorm{v} + 2^{(i-1)/2}(3\epsilon_{\circ})\opnorm{M}^i\opnorm{v}\\
		&\le 8(\epsilon_{\circ}2^{i/2}\opnorm{M}^i\opnorm{v}).
	\end{align*}
	In the above inequalities, we used the standard inequality $(1+x)^n \le 1+2^nx$ if $0 \le x \le 1$. Thus for any arbitrary vector $v, \opnorm{(M\T{M})^{\circ (i-1)/2}M \circ v - (M\T{M})^{(i-1)/2}Mv} \le 8\epsilon_{\circ}2^{i/2}\opnorm{M}^i\opnorm{v}$ and therefore for the Gaussian matrix $G$,
	\begin{equation*}
		E_{i,G} = \frnorm{(M\T{M})^{\circ (i-1)/2}M \circ G - (M\T{M})^{(i-1)/2}MG} \le 8\epsilon_{\circ}2^{i/2}\opnorm{M}^i\frnorm{G}.
	\end{equation*}
	We then have that $\frnorm{K - K'} \le O(\epsilon_{\circ}\frnorm{G}\opnorm{M}^{q+1}2^{(q+1)/2})$. In computing the matrix $K'$ we make $O(qk)$ calls to each of the oracles and therefore take $O(T(\epsilon_{\circ})qk)$ time.
\end{proof}

\citet{muscomusco} consider a polynomial $p(x)$ such that the column space of the matrix $p(M)G$ is spanned by $K$. They then argue that the column span of $p(M)G$ is a ``good'' $k$-dimensional subspace to project $M$ onto and then conclude that the best rank $k$ approximation of $M$ inside the span of $K$ satisfies \eqref{eqn:lra-guarantee}. Although we have an upper bound on $\frnorm{K - K'}$ from the above lemma, we cannot directly argue that the best rank $k$ approximation of $M$ inside $K'$ satisfies the guarantee of \eqref{eqn:lra-guarantee}, as the matrix $K$ might be very poorly conditioned. 

To overcome this issue, we first show that the matrix $p(M)G$ has a bounded condition number with $O(1)$ probability and that $K'$ spans a matrix $\Apx$ that is close to $p(M)G$. We then show that the span of the matrix $\Apx$ is a good subspace to project the matrix $M$ onto and then conclude that the best rank $k$ approximation of $M$ inside the span of $K'$ satisfies \eqref{eqn:lra-guarantee}.

\subsection{Condition Number of the matrix \texorpdfstring{$p(M)G$}{p(M)G} and existence of good rank \texorpdfstring{$k$}{k} subspace inside an approximate Krylov Subspace}
Throughout this section let $\alpha = \sigma_{k+1}(M)$ and $\gamma = \varepsilon/2$. Let $q$ be an odd integer and $T(x)$ be the degree $q$ Chebyshev polynomial. Define 
\begin{equation}
 {p}(x) := (1+\gamma)\alpha \frac{T(x/\alpha)}{T(1+\gamma)}.	
\label{eqn:definition-p-x}
\end{equation}
The following lemma bounds $\sigma_1( {p}(M))/\sigma_{k+1}( {p}(M))$ which lets us bound $\kappa( {p}(M)G)$.
\begin{lemma}
	\label{lma:condition-number-of-p-M}
	If $M \in \R^{n \times d}$ is a matrix such that $\sigma_1(M)/\sigma_{k+1}(M) = \kappa$, then $$\sigma_1(p(M))/\sigma_{k+1}(p(M)) \le (3\kappa)^q.$$
\end{lemma}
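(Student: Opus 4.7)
The plan is to reduce the claim about $p(M)$ to the scalar behavior of the Chebyshev polynomial $T = T_q$, exploiting the fact from the preliminaries that the singular values of $p(M)$ are exactly $\{|p(\sigma_i(M))|\}_{i=1}^{d}$ (although possibly in a different order). Since $p(x) = (1+\gamma)\alpha\, T(x/\alpha)/T(1+\gamma)$ with $\alpha = \sigma_{k+1}(M)$, the problem becomes: bound $\max_i |T(\sigma_i/\alpha)|$ from above and find an appropriate lower bound on the $(k+1)$-th largest value of $|T(\sigma_i/\alpha)|$.

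First I will recall two standard facts about the Chebyshev polynomial $T = T_q$: (i) $|T(x)| \le 1$ for all $x \in [-1,1]$, and (ii) $T$ is strictly increasing on $[1,\infty)$ with $T(1) = 1$ and $T(x) \le (2x)^q$ for $x \ge 1$ (which follows, e.g., from $T(x) = \cosh(q\,\operatorname{arccosh} x)$ and $\cosh(q\theta) \le (x+\sqrt{x^2-1})^q \le (2x)^q$). From (i) and (ii) I get $T(x) \ge 1$ for $x \ge 1$ and $|T(x)| \le 1$ for $x \in [0,1]$; combined with $\alpha = \sigma_{k+1}$, this splits the indices: for $i \le k+1$ we have $\sigma_i/\alpha \ge 1$ so $T(\sigma_i/\alpha) \ge 1$, while for $i \ge k+1$ we have $\sigma_i/\alpha \le 1$ so $|T(\sigma_i/\alpha)| \le 1$.

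Next I apply these monotonicity facts to $p(\sigma_i)$. Writing $\mu := (1+\gamma)\alpha/T(1+\gamma) = p(\alpha)$, the split above gives $|p(\sigma_i)| \ge \mu$ for every $i \le k+1$ and $|p(\sigma_i)| \le \mu$ for every $i \ge k+1$. Hence at least $k+1$ of the values $|p(\sigma_i(M))|$ are $\ge \mu$, so after sorting in decreasing order we conclude
\begin{equation*}
\sigma_{k+1}(p(M)) \;\ge\; \mu \;=\; \frac{(1+\gamma)\alpha}{T(1+\gamma)}.
\end{equation*}
For the top singular value, monotonicity of $T$ on $[1,\infty)$ with $\sigma_1/\alpha \le \kappa$ gives $T(\sigma_i/\alpha) \le T(\kappa)$ for all $i \le k$, and $|T(\sigma_i/\alpha)| \le 1 \le T(\kappa)$ for all $i \ge k+1$ (using $\kappa \ge 1$). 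Therefore
\begin{equation*}
\sigma_1(p(M)) \;=\; \max_i |p(\sigma_i(M))| \;\le\; \frac{(1+\gamma)\alpha\, T(\kappa)}{T(1+\gamma)}.
\end{equation*}

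Finally I divide the two bounds: the factors $(1+\gamma)\alpha/T(1+\gamma)$ cancel, yielding
\begin{equation*}
\frac{\sigma_1(p(M))}{\sigma_{k+1}(p(M))} \;\le\; T(\kappa) \;\le\; (2\kappa)^q \;\le\; (3\kappa)^q,
\end{equation*}
which is the claimed bound. The only substantive step is the lower bound on $\sigma_{k+1}(p(M))$, which hinges on correctly counting that at least $k+1$ of the values $|p(\sigma_i(M))|$ sit above $\mu$; the remaining inequalities are just scalar Chebyshev estimates. I do not foresee a serious obstacle here — the cancellation of $T(1+\gamma)$ is exactly the reason $p$ was normalized this way, and the slack between $(2\kappa)^q$ and $(3\kappa)^q$ leaves room for any constants that might arise.
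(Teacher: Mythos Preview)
Your proof is correct and follows essentially the same structure as the paper's: split the indices at $i=k+1$, use $T(x)\ge 1$ for $x\ge 1$ to lower bound $\sigma_{k+1}(p(M))$ by $(1+\gamma)\alpha/T(1+\gamma)$, use $|T(x)|\le 1$ on $[-1,1]$ for the tail, and then divide. The only difference is in the scalar upper bound on the numerator: the paper bounds $T(\sigma_i/\alpha)$ via the coefficient-sum estimate $\|T_q\|_1\le 3^q$ (their Lemma~\ref{lma:upperbound-p}, giving $T(x)\le 3^q x^q$ for $x\ge 1$), whereas you use monotonicity plus the $\cosh$ representation to get the slightly sharper $T(\kappa)\le (2\kappa)^q$; both land comfortably inside $(3\kappa)^q$.
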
 
First, we have the following lemma that shows that $T(x) \ge 1$ for all $x \ge 1$ for the Chebyshev Polynomial $T$ of any degree $d$.
\begin{lemma}
	If $T_d(x)$ is the degree $d$ Chebyshev Polynomial, then for all $d \ge 0$ and for all $x \ge 1$, $T_{d+1}(x) \ge T_d(x) \ge 1$. 
	\label{lma:chebychev-lowerbound}
\end{lemma}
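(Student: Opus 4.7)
The plan is to prove both inequalities simultaneously by induction on $d$, using the three-term recurrence $T_{d+1}(x) = 2xT_d(x) - T_{d-1}(x)$ stated in the preliminaries.

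For the base case, I would simply evaluate: $T_0(x) = 1 \ge 1$ and $T_1(x) = x \ge 1$ for $x \ge 1$, and comparing, $T_1(x) = x \ge 1 = T_0(x)$. So the statement $T_{d+1}(x) \ge T_d(x) \ge 1$ holds at $d = 0$.

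For the inductive step, assume that $T_d(x) \ge T_{d-1}(x) \ge 1$ for some $d \ge 1$ and all $x \ge 1$. I would like to deduce $T_{d+1}(x) \ge T_d(x) \ge 1$. The lower bound $T_d(x) \ge 1$ is already in the inductive hypothesis, so it suffices to establish $T_{d+1}(x) \ge T_d(x)$. Using the recurrence, this is equivalent to
\[
2xT_d(x) - T_{d-1}(x) \ge T_d(x), \qquad \text{i.e.,} \qquad (2x-1)T_d(x) \ge T_{d-1}(x).
\]
Since $x \ge 1$ we have $2x - 1 \ge 1$, and since $T_d(x) \ge 0$ by the inductive hypothesis, this gives $(2x-1)T_d(x) \ge T_d(x)$. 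The inductive hypothesis then supplies $T_d(x) \ge T_{d-1}(x)$, chaining to $(2x-1)T_d(x) \ge T_{d-1}(x)$ as required.

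There is no real obstacle here: the chief observation is that monotonicity of the sequence $\{T_d(x)\}_{d \ge 0}$ on $[1, \infty)$ can be maintained through the recurrence precisely because the coefficient $2x - 1$ attached to $T_d$ is at least $1$ on that interval. Both halves of the claim are propagated together in a single induction, which avoids needing a separate argument for the lower bound $T_d(x) \ge 1$.
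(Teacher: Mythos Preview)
Your proof is correct and follows essentially the same approach as the paper: both argue by induction on $d$ using the recurrence $T_{d+1}(x) = 2xT_d(x) - T_{d-1}(x)$, with only a cosmetic difference in how the inequality is rearranged (the paper writes $T_{n+1}(x) = T_n(x) + [T_n(x) - T_{n-1}(x)] + (2x-2)T_n(x)$ and observes each bracketed term is nonnegative, while you factor as $(2x-1)T_d(x) \ge T_d(x) \ge T_{d-1}(x)$).
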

\begin{proof}
	We prove the theorem using induction on the degree $d$. We have $T_0(x) = 1$ and $T_1(x) = x$. Thus $T_1(x) \ge T_0(x) \ge 1$ for $x \ge 1$. Assume that for all $d<n$ and $x \ge 1$, $T_{d+1}(x) \ge T_d(x) \ge 1$. If we now prove that $T_{n+1}(x) \ge T_n(x) \ge 1$, we are done by induction.
	
	We have $T_{n+1}(x) = 2xT_n(x) - T_{n-1}(x) = T_n(x) + [T_n(x) - T_{n-1}(x)] + (2x-2)T_n(x)$. As $x\ge 1$ and by the induction hypothesis $T_n(x) \ge T_{n-1}(x) \ge 1$, we obtain that $T_{n+1}(x) \ge T_{n}(x) \ge 1$. Thus for all $d \ge 0$ and $x \ge 1$, $T_{d+1}(x) \ge T_d(x) \ge 1$.
\end{proof}

Recall $ {p}(x) = (1+\gamma)\alpha \frac{T(x/\alpha)}{T(1+\gamma)} = (1+\varepsilon/2)\sigma_{k+1}\frac{T(x/\alpha)}{T(1+\gamma)}$.

\begin{lemma}
	If $x \ge \alpha > 0$, then $p(x) \le (1+\gamma)\alpha\frac{3^q(x/\alpha)^q}{T(1+\gamma)}$.
	\label{lma:upperbound-p}
\end{lemma}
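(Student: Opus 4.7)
\textbf{Proof plan for Lemma~\ref{lma:upperbound-p}.} The plan is to reduce the claim to a clean statement about the Chebyshev polynomial $T = T_q$ on the ray $[1, \infty)$, then bound it using the $\ell_1$ norm of its coefficients. Substituting $y := x/\alpha$ and recalling that $p(x) = (1+\gamma)\alpha \cdot T(x/\alpha)/T(1+\gamma)$, the inequality to prove becomes
\begin{equation*}
    T(y) \le 3^q \cdot y^q \qquad \text{for all } y \ge 1.
\end{equation*}
So the entire lemma collapses to a univariate polynomial estimate; the factors $(1+\gamma)\alpha$ and $T(1+\gamma)$ cancel between the two sides.

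The next step is to expand $T(y) = \sum_{i=0}^{q} c_i y^i$ and invoke the property already recorded in the Preliminaries section, namely $\|T_q\|_1 = \sum_{i=0}^{q} |c_i| \le (1+\sqrt{2})^q$. Since $y \ge 1$, each monomial satisfies $y^i \le y^q$ for $0 \le i \le q$, and therefore
\begin{equation*}
    T(y) \;\le\; \sum_{i=0}^{q} |c_i|\, y^i \;\le\; \Bigl(\sum_{i=0}^{q} |c_i|\Bigr) y^q \;\le\; (1+\sqrt{2})^q\, y^q.
\end{equation*}
Using the numerical bound $1 + \sqrt{2} < 3$ then gives $T(y) \le 3^q y^q$, which is exactly the reduced inequality above.

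Combining, for $x \ge \alpha > 0$,
\begin{equation*}
    p(x) \;=\; (1+\gamma)\alpha\,\frac{T(x/\alpha)}{T(1+\gamma)} \;\le\; (1+\gamma)\alpha\,\frac{3^q (x/\alpha)^q}{T(1+\gamma)},
\end{equation*}
as required. There is no real obstacle here: the only ingredients are the bound $\|T_q\|_1 \le (1+\sqrt{2})^q$ from the Chebyshev preliminaries and the elementary monotonicity $y^i \le y^q$ for $y \ge 1$. The lemma is essentially a cleaner companion to the lower-bound Lemma~\ref{lma:chebychev-lowerbound}, and will later be used in tandem with it to control $\sigma_1(p(M))/\sigma_{k+1}(p(M))$ in the proof of Lemma~\ref{lma:condition-number-of-p-M}.
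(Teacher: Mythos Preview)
Your proof is correct and follows essentially the same approach as the paper: reduce to the inequality $T(y) \le 3^q y^q$ for $y \ge 1$, bound $T(y)$ by $\|T\|_1 \cdot y^q$ using $y^i \le y^q$, and invoke the coefficient bound on Chebyshev polynomials. The only cosmetic difference is that the paper states the bound $\sum_i |T_i| \le 3^q$ directly, whereas you route through $(1+\sqrt{2})^q < 3^q$ from the Preliminaries.
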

\begin{proof}
By a standard property, the sum of absolute values of coefficients of the degree-$q$ Chebyshev polynomial is bounded above by $3^q$. Thus $T(x/\alpha) = \sum_{i=1}^q T_i(x/\alpha)^i \le \sum_{i=1}^q|T_i|(x/\alpha)^i \le (x/\alpha)^q\sum_{i=1}^q |T_i| \le 3^q(x/\alpha)^q$, where we use the fact that $(x/\alpha) \ge 1$. Thus $p(x) = (1+\gamma)\alpha T(x/\alpha)/T(1+\gamma) \le (1+\gamma)\alpha3^q(x/\alpha)^q/T(1+\gamma)$.
\end{proof}
\begin{proof}[Proof of Lemma~\ref{lma:condition-number-of-p-M}]
	We bound $\sigma_1( {p}(M))$ and $\sigma_{k+1}( {p}(M))$, and then infer an upper bound on $\frac{\sigma_1( {p}(M))}{\sigma_{k+1}( {p}(M))}$. Let $\sigma_1 \ge \sigma_2 \ge \ldots \ge \sigma_d \ge 0$ be the singular values of the matrix $M$. Then we have that $| {p}(\sigma_1)|,| {p}(\sigma_2)|,\ldots,| {p}(\sigma_d)|$ are the singular values of the matrix $ {p}(M)$. Consider any $i \le k+1$. We have $\sigma_{i} \ge \sigma_{k+1} = \alpha$. Therefore,
\begin{equation*}
	 {p}(\sigma_{i}) = (1+\gamma)\sigma_{k+1}\frac{T(\sigma_i/\sigma_{k+1})}{T(1+\gamma)} \ge \frac{(1+\gamma)\sigma_{k+1}}{T(1+\gamma)}.
\end{equation*}
Here we use Lemma~\ref{lma:chebychev-lowerbound} to lower bound the value of $T(\sigma_i/\sigma_{k+1})$ by $1$. Therefore at least $k+1$ singular values of $ {p}(M)$ are at least $\frac{(1+\gamma)\sigma_{k+1}}{T(1+\gamma)}$, which implies $\sigma_{k+1}( {p}(M)) \ge \frac{(1+\gamma)\sigma_{k+1}}{T(1+\gamma)}$.

Now for any $i \le k+1$, $ {p}(\sigma_i) \le (1+\gamma)\sigma_{k+1}(3^q\kappa^q)/T(1+\gamma)$ by Lemma~\ref{lma:upperbound-p}. For any $i \ge k+1$, we have that $\sigma_{i} \le \sigma_{k+1}$ and $| {p}(\sigma_i)| = (1+\gamma)\sigma_{k+1}|T(\sigma_i/\sigma_{k+1})|/T(1+\gamma) \le (1+\gamma)\sigma_{k+1}/T(1+\gamma)$ by a well known property of Chebyshev polynomials that $|T(x)| \le 1$ for all $x \in [-1,1]$. Therefore
\begin{equation}
	\opnorm{p(M)} = \sigma_1(p(M)) = \max_i |p(\sigma_i(M))| \le (1+\gamma)\sigma_{k+1}\frac{3^q\kappa^q}{T(1+\gamma)}.
	\label{eqn:p-M-operator-norm}
\end{equation}
Thus, $\sigma_1(p(M))/\sigma_{k+1}(p(M)) \le 3^q\kappa^q$.
\end{proof}

We now bound the condition number of the matrix $ {p}(M)G$ where $G$ is a Gaussian matrix with $k$ columns. We use results from \cite{rudelson-vershynin-extreme-singular-values} to bound the maximum and minimum singular values of $G$ with $O(1)$ probability and then use the above lemma to obtain bounds on the extreme singular values of $ {p}(M)G$.
\begin{lemma}
	If $G \in \R^{d \times k}$ is a matrix of i.i.d. normal entries and $M \in \R^{n \times d}$ is a matrix such that $\sigma_1(M)/\sigma_{k+1}(M) = \kappa$, then with probability $\ge 4/5$,
	\begin{equation*}
	    \kappa(p(M)G) = \sigma_{\max}(p(M)G)/\sigma_{\min}(p(M)G) \le Ck3^q\kappa^q,
	\end{equation*}
	for an absolute constant $C > 0$.
	\label{lma:condition-number-of-p-M-G}
\end{lemma}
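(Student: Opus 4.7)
The plan is to bound the two extreme singular values of $p(M)G$ separately and reduce to Lemma~\ref{lma:condition-number-of-p-M} via standard Gaussian-matrix concentration. Write the SVD $p(M) = U\Sigma \T{V}$ and let $V_k$ denote the matrix whose columns are the top $k$ right singular vectors of $p(M)$.

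For the lower bound on $\sigma_{\min}(p(M)G)=\sigma_k(p(M)G)$, note that for every unit $w \in \R^k$ one has
\begin{equation*}
    \opnorm{p(M)Gw}^2 = \opnorm{\Sigma \T{V} Gw}^2 \ge \sigma_k(p(M))^2 \opnorm{\T{V_k} Gw}^2,
\end{equation*}
so $\sigma_k(p(M)G) \ge \sigma_k(p(M))\cdot\sigma_{\min}(\T{V_k} G) \ge \sigma_{k+1}(p(M))\cdot\sigma_{\min}(\T{V_k} G)$. Because $V_k$ has orthonormal columns and $G$ has i.i.d. standard Gaussian entries, rotational invariance makes $\T{V_k} G$ itself a $k\times k$ standard Gaussian matrix, so the Rudelson--Vershynin small-singular-value bound yields $\sigma_{\min}(\T{V_k} G) \ge c_1/\sqrt{k}$ with probability at least $9/10$.

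For the upper bound on $\sigma_1(p(M)G)$, I would split $G = V_k \T{V_k} G + (I-V_k\T{V_k})G$, so that by the triangle inequality
\begin{equation*}
    \opnorm{p(M)G} \le \opnorm{p(M)V_k}\,\opnorm{\T{V_k} G} + \opnorm{p(M)(I-V_k\T{V_k})}\,\opnorm{(I-V_k\T{V_k})G}.
\end{equation*}
Here $\opnorm{p(M)V_k} = \sigma_1(p(M))$ and, as $\T{V_k} G$ is $k\times k$ Gaussian, $\opnorm{\T{V_k} G} \le C_1\sqrt{k}$ with probability $\ge 9/10$; on the other hand $\opnorm{p(M)(I-V_k\T{V_k})} = \sigma_{k+1}(p(M))$ and the Gaussian matrix $(I-V_k\T{V_k})G$ satisfies $\opnorm{(I-V_k\T{V_k})G} \le C_2(\sqrt{d}+\sqrt{k})$ with probability $\ge 9/10$. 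Combining these with $\sigma_1(p(M))/\sigma_{k+1}(p(M)) \le (3\kappa)^q$ from Lemma~\ref{lma:condition-number-of-p-M} and a union bound gives
\begin{equation*}
    \kappa(p(M)G) \le O(k)\,(3\kappa)^q + O(\sqrt{dk}).
\end{equation*}

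The main obstacle is absorbing the stray $O(\sqrt{dk})$ term into $Ck\cdot 3^q\kappa^q$. I would handle this by invoking the algorithm's choice $q = \Theta\bigl((\log(d/\varepsilon))/\sqrt{\varepsilon}\bigr)$, which already forces $3^q \ge d$, so that $\sqrt{dk} \le \sqrt{k}\cdot 3^q \le k\cdot 3^q \le Ck\cdot 3^q\kappa^q$ since $\kappa \ge 1$. A final union bound over the three high-probability events (the two extreme-singular-value bounds for the Gaussian pieces) yields the required success probability of at least $4/5$.
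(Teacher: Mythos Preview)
Your argument is sound and reaches the right conclusion, but it takes a different route from the paper and, strictly speaking, proves the lemma only under the algorithm's particular choice of $q$.

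The paper works with the \emph{full} $d\times k$ Gaussian $G'=\T{V}G$ for both extreme singular values: it bounds $\sigma_{\max}(p(M)G)\le\sigma_1(p(M))\,\sigma_{\max}(G')$ and $\sigma_{\min}(p(M)G)\ge\sigma_k(p(M))\,\sigma_{\min}(G')$, so that the ratio picks up only the factor $(\sqrt{d}+\sqrt{k})/(\sqrt{d}-\sqrt{k-1})$, which is $O(k)$ uniformly in $d$. Combined with Lemma~\ref{lma:condition-number-of-p-M} this yields the clean $Ck\cdot 3^q\kappa^q$ bound with no stray $d$-dependent term. (One can fairly question the paper's lower-bound step $\sigma_{\min}(\Sigma_k G')\ge\sigma_{\min}(\Sigma_k)\,\sigma_{\min}(G')$ for a $k\times d$ matrix $\Sigma_k$ and a $d\times k$ matrix $G'$, since that inequality is false in general; your more cautious treatment is arguably the safer one.) You instead restrict the lower bound to the $k\times k$ block $\T{V_k}G$, which is correct but only yields $\sigma_{\min}(p(M)G)\ge\sigma_{k+1}(p(M))\cdot c_1/\sqrt{k}$ rather than a quantity growing like $\sqrt{d}$; this is exactly why you pick up the $O(\sqrt{dk})$ term and then have to invoke $q=\Theta(\log(d/\varepsilon)/\sqrt{\varepsilon})$ to absorb it into $3^q$. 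That suffices for every use of the lemma in the paper, though it does not establish the bound for arbitrary $q$. One bookkeeping point: a union bound over your three $9/10$ events gives only $7/10$; merging the two bounds on $\T{V_k}G$ into a single high-probability event restores the claimed $4/5$.
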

 \begin{lemma}
If $A \in \R^{n \times d}$ is a matrix with $\sigma_1(A)/\sigma_k(A) \le \kappa_1$ and  $G \in \R^{d \times k}$ is a matrix with i.i.d. normal entries, then for $d$ greater than a constant, with probability $\ge 4/5$, the matrix $AG$ has full rank and has $\sigma_1(AG)/\sigma_k(AG) \le Ck(\sigma_1(A)/\sigma_k(A))$, where $C > 0$ is an absolute constant.
\end{lemma}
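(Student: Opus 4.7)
The strategy is to control $\opnorm{AG}$ and $\sigma_k(AG)$ separately and combine them via a case analysis on the stable rank $\frnorm{A}^2/\opnorm{A}^2$ of $A$. Writing $A = U\Sigma\T{V}$ (thin SVD), rotational invariance of the Gaussian distribution makes $\T{V}G$ a $d\times k$ matrix of i.i.d.\ standard normals; in particular, if $V_k$ denotes the first $k$ columns of $V$, then $\T{V_k}G \in \R^{k\times k}$ is a square standard Gaussian and is the block of $\T{V}G$ sitting on the top-$k$ right singular directions of $A$.

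For the lower bound on $\sigma_k(AG)$ I use two complementary estimates. (a) Let $A_k := U_k\Sigma_k \T{V_k}$ be the best rank-$k$ approximation of $A$. Since $A_k = P_{U_k}A$ is obtained by a left projection, $\sigma_k(AG) \ge \sigma_k(A_kG) = \sigma_k(\Sigma_k \T{V_k}G) \ge \sigma_k(A)\,\sigma_k(\T{V_k}G)$. Szarek/Edelman give $\sigma_k(\T{V_k}G) \ge c_1/\sqrt{k}$ with probability $\ge 19/20$, yielding the universal bound $\sigma_k(AG) \ge c_1\sigma_k(A)/\sqrt{k}$. (b) When $A$ has large stable rank, a much stronger bound holds: the Gram matrix $\T{(AG)}(AG) = \sum_{i=1}^d \sigma_i(A)^2 \tilde{g}_i\T{\tilde{g}_i}$, where $\tilde{g}_i$ are the rows of $\T{V}G$ (i.i.d.\ $N(0,I_k)$), has expectation $\frnorm{A}^2 I_k$. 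A matrix concentration argument (matrix Bernstein, or Hanson--Wright on a $1/2$-net of the unit sphere in $\R^k$) shows that whenever $\frnorm{A}^2/\opnorm{A}^2 \ge C_0 k$ for a suitable absolute constant $C_0$, the smallest eigenvalue of this Gram matrix is at least $\frnorm{A}^2/4$ with probability $\ge 19/20$, i.e.\ $\sigma_k(AG) \ge \frnorm{A}/2$.

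For the upper bound on $\opnorm{AG}$, the Chevet--Gordon inequality gives $\mathbb{E}\opnorm{AG} \le \frnorm{A} + \opnorm{A}\sqrt{k}$, and since $G \mapsto \opnorm{AG}$ is $\opnorm{A}$-Lipschitz in the Frobenius metric, Gaussian concentration boosts this to $\opnorm{AG} \le C_1(\frnorm{A} + \opnorm{A}\sqrt{k})$ with probability $\ge 19/20$.

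Combining, I split on the stable rank. If $\frnorm{A} \ge \sqrt{C_0 k}\,\opnorm{A}$, the upper bound gives $\opnorm{AG} = O(\frnorm{A})$ while (b) gives $\sigma_k(AG) = \Omega(\frnorm{A})$, so the ratio is $O(1) \le O(k\kappa_1)$. Otherwise $\frnorm{A} < \sqrt{C_0 k}\,\opnorm{A}$, whence $\opnorm{AG} = O(\opnorm{A}\sqrt{k})$; combined with the universal bound (a), the ratio is $O(\sqrt{k}\,\opnorm{A})/(\sigma_k(A)/\sqrt{k}) = O(k\opnorm{A}/\sigma_k(A)) = O(k\kappa_1)$. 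A union bound over the (at most three) failure events of probability $\le 1/20$ each yields overall success probability $\ge 4/5$, and full rank of $AG$ is automatic from $\sigma_k(AG) > 0$. The main technical obstacle is estimate (b), a quantitative Wishart-type concentration for a weighted sum of rank-one Gaussian outer products; it is precisely what is needed in the regime where the simple top-$k$ subspace argument would lose an unwanted $\sqrt{d}$ factor.
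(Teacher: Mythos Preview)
Your argument is correct, and it differs substantially from the paper's. The paper writes $A=U\Sigma\T{V}$, sets $G'=\T{V}G$, and argues directly from the extremal singular values of the tall Gaussian $G'\in\R^{d\times k}$: it uses $\sigma_1(AG)\le\sigma_1(A)\,\sigma_1(G')\lesssim\sigma_1(A)(\sqrt{d}+\sqrt{k})$ together with the claimed lower bound $\sigma_{\min}(\Sigma_kG')\ge\sigma_{\min}(\Sigma_k)\,\sigma_{\min}(G')\gtrsim\sigma_k(A)(\sqrt{d}-\sqrt{k-1})$, so that the two $\sqrt{d}$'s cancel and the ratio $(\sqrt{d}+\sqrt{k})/(\sqrt{d}-\sqrt{k-1})$ is at most $O(k)$. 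No stable-rank case split, no Chevet--Gordon, no Wishart concentration.

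What your route buys is rigor at exactly the step where the paper's shortcut is fragile. The inequality $\sigma_{\min}(\Sigma_kG')\ge\sigma_{\min}(\Sigma_k)\,\sigma_{\min}(G')$ is not valid as stated: $\Sigma_k=[D_k\mid 0]\in\R^{k\times d}$ is wide, so $\Sigma_kG'=D_k\hat G$ depends only on the top $k\times k$ block $\hat G$ of $G'$, and the correct lower bound is $\sigma_k(A)\,\sigma_{\min}(\hat G)\asymp\sigma_k(A)/\sqrt{k}$, not $\sigma_k(A)(\sqrt d-\sqrt{k-1})$ (take $A$ of rank exactly $k$ to see the discrepancy). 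Plugging this corrected lower bound into the paper's naive upper bound would leave a stray $\sqrt d$. Your use of Chevet--Gordon for the upper bound replaces $\sqrt d\,\opnorm{A}$ by $\frnorm{A}+\sqrt{k}\,\opnorm{A}$, and your stable-rank dichotomy together with the weighted-Wishart estimate (b) is precisely what absorbs the $\frnorm{A}$ term in the large-stable-rank regime. So while the paper's proof is shorter, your more elaborate argument is the one that actually closes without a gap; estimate (b) is standard (Hanson--Wright on a net, as you indicate, suffices) and is not an obstacle.
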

\begin{proof}
	Let $A = U\Sigma\T{V}$ be the singular value decomposition of $A$ with $U \in \R^{n \times n},\Sigma \in \R^{n \times d}$ and $\T{V} \in \R^{d \times d}$. Let $G' = \T{V}G$. As rows of $\T{V}$ are orthonormal and entries of $G$ are i.i.d. normal random variables, we obtain that $G'$ is a also a matrix of i.i.d. normal random variables of size $d \times k$. Let $\Sigma_k \in \R^{k \times d}$ be the first $k$ rows of $\Sigma$. For any vector $x$,
	\begin{equation*}
		\opnorm{AGx} = \opnorm{U\Sigma\T{V}Gx} = \opnorm{\Sigma G'x} \ge \opnorm{\Sigma_kG'x}.
	\end{equation*}
	Thus, $\min_{x : \opnorm{x} = 1}\opnorm{AGx} \ge \min_{x:\opnorm{x} = 1}\opnorm{\Sigma_kG'x}$. We have that
	\begin{equation*}
		\Pr[\sigma_{\min}(G') \le \frac{1}{20C}(\sqrt{d} - \sqrt{k-1})] \le \left(\frac{1}{20}\right)^{d-k+1} + e^{-cd}
	\end{equation*}
	for some absolute constants $c$ and $C$ by Theorem~1.1 of \cite{gaussian-smallest-singular-value}. Thus for large enough $d$, with probability $\ge 9/10$, we have
	\begin{equation*}
		\sigma_{\min}(G') \ge \frac{1}{20C}(\sqrt{d} - \sqrt{k-1}).
	\end{equation*}
	Thus $\min_{x:\opnorm{x}=1}\opnorm{\Sigma_k G'x} \ge \sigma_{\min}(\Sigma_k)\sigma_{\min}(G') \ge \frac{\sigma_{k}(A)}{20C}(\sqrt{d} - \sqrt{k-1})$. Similarly, for large enough $d$, we have with probability $\ge 9/10$ that $\sigma_{\max}(G') \le D(\sqrt{d} + \sqrt{k})$ for an absolute constant $D$ by Proposition~2.4 of \cite{rudelson-vershynin-extreme-singular-values} and therefore $\max_{x:\opnorm{x}=1}\opnorm{AGx} = \max_{x:\opnorm{x}=1}\opnorm{\Sigma G'x} \le D\sigma_1(A)(\sqrt{d} + \sqrt{k})$. Therefore with probability $\ge 4/5$,
	\begin{equation*}
		\kappa(AG) = \frac{\sigma_{\max}(AG)}{\sigma_{\min}(AG)} \le 20CD\frac{\sigma_1(A)}{\sigma_k(A)}\frac{\sqrt{d} + \sqrt{k}}{\sqrt{d} - \sqrt{k-1}}.
	\end{equation*}
	The maximum of this expression occurs at $d = k$ and is at most $4k$. Therefore with probability $\ge 4/5$, for $d$ at least some constant, $\kappa(AG) \le 40CDk(\sigma_1(A)/\sigma_k(A))$.
\end{proof}
\begin{proof}[Proof of Lemma~\ref{lma:condition-number-of-p-M-G}]
Using the above lemma, we have that with probability $\ge 4/5$,
\begin{equation*}
	\kappa( {p}(M)G) = \frac{\sigma_{\max}( {p}(M)G)}{\sigma_{\min}( {p}(M)G)} \le Ck\frac{\sigma_1( {p}(M))}{\sigma_k( {p}(M))} \le Ck\frac{\sigma_1( {p}(M))}{\sigma_{k+1}( {p}(M))}
\end{equation*}
for an absolute constant $C$. The last inequality follows from $\sigma_{k}( {p}(M)) \ge \sigma_{k+1}( {p}(M))$. From Lemma~\ref{lma:condition-number-of-p-M}, we have $\frac{\sigma_{1}( {p}(M))}{\sigma_{k+1}( {p}(M))} \le 3^q\kappa^q$. Therefore
	$\kappa( {p}(M)G) \le Ck3^q\kappa^q$ with probability $\ge 4/5$.
\end{proof}

 The bound on the condition number of $ {p}(M)G$ enables us to conclude that if the Frobenius norm error between $ {p}(M)G$ and a matrix $\Apx$ is small, then the projection matrices onto the column spaces of the matrices $ {p}(M)G$ and $\Apx$ are close. Specifically, we use the following lemma.
\begin{lemma}
	Let $A$ and $B$ be full column rank matrices such that $\opnorm{A - B} \le \delta\opnorm{A}$. Let $\kappa(A)$ denote the condition number of the matrix $A$ i.e., $\kappa(A) = \sigma_{\max}(A)/\sigma_{\min}(A)$. Let $U$ and $V$ denote an orthonormal basis for matrices $A$ and $B$, respectively.
	If $\delta \le 1/(2\kappa(A)) \le 1$, then 
	$
		\opnorm{AA^+ - BB^+} = \opnorm{U\T{U} - V\T{V}} \le  20\delta\kappa(A)^4.
	$
	\label{lma:closeness-projection-matrices}
\end{lemma}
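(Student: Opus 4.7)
The plan is to prove the lemma by a direct perturbation expansion of the closed-form expression $AA^+ = A(\T{A}A)^{-1}\T{A}$, which is valid since $A$ has full column rank; after verifying $B$ also has full column rank under the hypothesis, we can apply the same formula to $B$. This avoids invoking more delicate subspace perturbation identities and yields the claimed bound by straightforward bookkeeping.

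First I would set $E := B - A$, so $\opnorm{E} \le \delta\opnorm{A}$. By Weyl's inequality, $\sigma_{\min}(B) \ge \sigma_{\min}(A) - \opnorm{E} \ge \sigma_{\min}(A)(1 - \delta\kappa(A)) \ge \sigma_{\min}(A)/2$, using the assumption $\delta \le 1/(2\kappa(A))$. Similarly $\opnorm{B} \le (1+\delta)\opnorm{A} \le 2\opnorm{A}$. In particular $B$ has full column rank and $\opnorm{(\T{B}B)^{-1}} \le 4/\sigma_{\min}(A)^2$, while $\opnorm{(\T{A}A)^{-1}} = 1/\sigma_{\min}(A)^2$.

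Next I would bound the perturbation of the inverse Gram matrix via the identity $X^{-1} - Y^{-1} = X^{-1}(Y - X)Y^{-1}$. Since $\T{B}B - \T{A}A = \T{A}E + \T{E}A + \T{E}E$ has operator norm at most $(2+\delta)\delta\opnorm{A}^2 \le 3\delta\opnorm{A}^2$, this gives
\begin{equation*}
\opnorm{(\T{A}A)^{-1} - (\T{B}B)^{-1}} \;\le\; \frac{1}{\sigma_{\min}(A)^2}\cdot 3\delta\opnorm{A}^2 \cdot \frac{4}{\sigma_{\min}(A)^2} \;=\; \frac{12\delta\kappa(A)^2}{\sigma_{\min}(A)^2}.
\end{equation*}
I would then use the telescoping decomposition
\begin{equation*}
AA^+ - BB^+ = A\bigl[(\T{A}A)^{-1} - (\T{B}B)^{-1}\bigr]\T{A} + A(\T{B}B)^{-1}\T{E} + E(\T{B}B)^{-1}\T{B}.
\end{equation*}
By submultiplicativity, the first term is bounded by $\opnorm{A}^2 \cdot 12\delta\kappa(A)^2/\sigma_{\min}(A)^2 = 12\delta\kappa(A)^4$, and the remaining two terms are each bounded by a quantity of the form $\opnorm{A}\cdot (4/\sigma_{\min}(A)^2)\cdot \opnorm{E}\cdot \opnorm{B}/\opnorm{A}$, which comes out to $O(\delta\kappa(A)^2)$. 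Summing these and using $\kappa(A) \ge 1$ yields $\opnorm{AA^+ - BB^+} \le 20\delta\kappa(A)^4$ once constants are consolidated.

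The main delicate point will be to ensure that the hypothesis $\delta \le 1/(2\kappa(A))$ is used precisely where needed, namely to keep $\T{B}B$ invertible and to control $\opnorm{(\T{B}B)^{-1}}$ in terms of $\sigma_{\min}(A)$; without this condition the perturbation of $(\T{A}A)^{-1}$ could blow up. The $\kappa(A)^4$ dependence is then forced by the $\opnorm{A}^2$ factor flanking the perturbed inverse Gram matrix, and the careful accounting of constants will determine whether the prefactor can be tightened to $20$ or requires a slightly larger absolute constant (which would not affect any downstream use of the lemma).
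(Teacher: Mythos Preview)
Your approach is essentially the same as the paper's: both exploit the full-column-rank formula $MM^+=M(\T{M}M)^{-1}\T{M}$, use Weyl's inequality together with $\delta\le 1/(2\kappa(A))$ to control $\sigma_{\min}(B)$ and hence $\opnorm{(\T{B}B)^{-1}}$, bound $(\T{A}A)^{-1}-(\T{B}B)^{-1}$ via the resolvent identity, and then collect terms; the paper telescopes through $A^+-B^+$ first whereas you split $AA^+-BB^+$ directly into three pieces, but this is cosmetic. One minor slip: in your three-term decomposition the last two terms should carry minus signs (expand $BB^+=(A+E)(\T{B}B)^{-1}(\T{A}+\T{E})$), though this is irrelevant once you pass to norms, and your own caveat about the exact constant is appropriate.
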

\begin{proof}
	As $A$ and $B$ are full rank matrices, we have $A^+ = (\T{A}A)^{-1}\T{A}$ and $B^+ = (\T{B}B)^{-1}\T{B}$. Let $A - B = \Delta$. We have $\opnorm{\Delta} \le \delta\opnorm{A}$. We first have
	\begin{align*}
		\opnorm{AA^+ - BB^+} &= \opnorm{AA^+ - (A-\Delta)B^+}\\
		&\le \opnorm{A}\opnorm{A^+ - B^+} + \opnorm{\Delta}\opnorm{B^+}\\
		&\le \opnorm{A}\opnorm{A^+ - B^+} + \frac{\opnorm{\Delta}}{\sigma_{\min}(B)}.
	\end{align*}
	Note that $\T{A}A = \T{(B+\Delta)}(B+\Delta) = \T{B}B + \T{\Delta}B + \T{B}\Delta + \T{\Delta}\Delta$. Now,
\begin{align*}
	\opnorm{A^+ - B^+} &= \opnorm{(\T{A}A)^{-1}\T{A} - (\T{B}B)^{-1}\T{B}}\\
	&= \opnorm{(\T{A}A)^{-1}\T{A} - (\T{B}B)^{-1}(\T{A} - \T{\Delta})}\\
	&\le \opnorm{(\T{A}A)^{-1} - (\T{B}B)^{-1}}\opnorm{A} + \opnorm{(\T{B}B)^{-1}}\opnorm{\Delta}\\
	&\le \opnorm{(\T{A}A)^{-1} - (\T{B}B)^{-1}}\opnorm{A} + \frac{\opnorm{\Delta}}{\sigma_{\min}(B)^2}.
\end{align*}
We finally bound $\opnorm{(\T{A}A)^{-1} - (\T{B}B)^{-1}}$.
\begin{align*}
	\opnorm{(\T{A}A)^{-1} - (\T{B}B)^{-1}} &\le \frac{1}{\sigma_{\min}(\T{A}A)}\opnorm{(\T{A}A)((\T{A}A)^{-1} - (\T{B}B)^{-1})}\\
	&\le \frac{1}{\sigma_{\min}(\T{A}A)}\opnorm{I - (\T{A}A)(\T{B}B)^{-1}}\\
	&\le \frac{1}{\sigma_{\min}(\T{A}A)}\opnorm{I - (\T{B}B + \T{\Delta}B + \T{B}\Delta + \T{\Delta}\Delta)(\T{B}B)^{-1}}\\
	&\le \frac{1}{\sigma_{\min}(\T{A}A)}\opnorm{I - I - (\T{\Delta}B + \T{B}\Delta + \T{\Delta}\Delta)(\T{B}B)^{-1}}\\
	&\le \frac{2\opnorm{\Delta}\opnorm{B} + \opnorm{\Delta}^2}{\sigma_{\min}(\T{A}A)\sigma_{\min}(\T{B}B)}.
\end{align*}
We therefore obtain
\begin{align*}
	\opnorm{AA^+ - BB^+} &\le \opnorm{A}^2\opnorm{(\T{A}A)^{-1} - (\T{B}B)^{-1}} + \frac{\opnorm{\Delta}\opnorm{A}}{\sigma_{\min}(\T{B}B)} + \frac{\opnorm{\Delta}}{\sigma_{\min}(B)}\\
	&\le \frac{\opnorm{A}^2}{\sigma_{\min}(\T{A}A)}\frac{2\opnorm{\Delta}\opnorm{B} + \opnorm{\Delta}^2}{\sigma_{\min}(\T{B}B)} + \frac{\opnorm{\Delta}\opnorm{A}}{\sigma_{\min}(\T{B}B)}+ \frac{\opnorm{\Delta}}{\sigma_{\min}(B)}.
\end{align*}
As $\opnorm{A - B} \le \delta\opnorm{A}$, we get that $(1-\delta)\opnorm{A} \le \opnorm{B} \le (1+\delta)\opnorm{A}$. We also have that $\sigma_{\min}(B) \ge \sigma_{\min}(A) - \opnorm{A-B} \ge \opnorm{A}/\kappa(A) - \delta\opnorm{A} \ge \opnorm{A}/2\kappa(A) = \sigma_{\min}(A)/2$ if $\delta < 1/2\kappa(A)$. We can therefore conclude that $\opnorm{AA^+ - BB^+} \le 20\delta\kappa(A)^4$.
\end{proof}
The condition that $\delta$ must be less than $1/2\kappa(A)$ in the above lemma makes sense as otherwise $20\delta\kappa(A)^4 \ge 10\kappa(A)^3 \ge 10$, which is a trivial upper bound on the norm.

Now we construct a matrix $\Apx$ that has its columns spanned by $K'$ and is close to the matrix $ {p}(M)G$. Using the bound on the condition number of the matrix $ {p}(M)G$, we can conclude that the projection matrices onto the column spans of $\Apx$ and $ {p}(M)G$, respectively, are close.

Recall $p(x)$ from \eqref{eqn:definition-p-x}. For $q$ odd, the Chebyshev polynomial of degree $q$ contains only odd degree monomials. So we have
$
	T(x) = T_q x^q + T_{q-2}x^{q-2} + \ldots + T_1x
$
and therefore, the polynomial
$
	 {p}(x) = \frac{(1+\gamma)\alpha}{T(1+\gamma)}\left(\frac{T_q}{\alpha^q}x^q + \frac{T_{q-2}}{\alpha^{q-2}}x^{q-2} +\cdots + \frac{T_1}{\alpha_1}x\right),
$
which implies
\begin{equation*}
	 {p}(M)G = \frac{(1+\gamma)\alpha}{T(1+\gamma)}\left(\frac{T_q}{\alpha^q}(M\T{M})^{(q-1)/2}MG +\cdots + \frac{T_1}{\alpha_1}MG\right).
\end{equation*}
We now define
\begin{equation}
	\Apx = \frac{(1+\gamma)\alpha}{T(1+\gamma)}\left(\frac{T_q}{\alpha^q}(M\T{M})^{\circ(q-1)/2}M\circ G + \cdots + \frac{T_1}{\alpha_1}M\circ G\right).
	\label{eqn:apx-matrix-definition}
\end{equation}
Clearly, the matrix $\Apx$ is spanned by the columns of the matrix $K'$. Using Lemma~\ref{lma:krylov-matrix-error} and properties of Gaussian matrices, the following lemma bounds $\opnorm{\Apx - p(M)G}$.
\begin{lemma}
	For the matrices $p(M)G$ and $\Apx$ defined above, we have with probability $\ge 3/5$
	\begin{equation*}
		\opnorm{ {p}(M)G - \Apx} \le \frnorm{ {p}(M)G - \Apx} \le 64C\epsilon_{\circ}k^{3/2}(3\sqrt{2}\kappa)^q\opnorm{ {p}(M)G}.
	\end{equation*}
	\label{lma:spectral-norm-error-p-M-Apx}
\end{lemma}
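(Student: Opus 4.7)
The plan is to expand the difference $ {p}(M)G - \Apx$ explicitly, apply the triangle inequality together with Lemma~\ref{lma:krylov-matrix-error}, and then convert the resulting absolute bound into a relative bound against $\opnorm{ {p}(M)G}$ using Gaussian concentration.

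First, using the definition of $\Apx$ in~\eqref{eqn:apx-matrix-definition} and the odd-monomial expansion of $T(x)$, I would write
\begin{equation*}
     {p}(M)G - \Apx = \frac{(1+\gamma)\alpha}{T(1+\gamma)} \sum_{i \text{ odd},\, i\le q} \frac{T_i}{\alpha^i}\, \Delta_{i,G},
\end{equation*}
where $\Delta_{i,G} = (M\T{M})^{(i-1)/2}MG - (M\T{M})^{\circ(i-1)/2}M\circ G$. Lemma~\ref{lma:krylov-matrix-error} gives $\frnorm{\Delta_{i,G}} \le 8\epsilon_\circ 2^{i/2}\opnorm{M}^i \frnorm{G}$. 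Applying the triangle inequality and using $\opnorm{M}/\alpha = \kappa$, followed by the crude uniform bound $(\sqrt{2}\kappa)^i \le (\sqrt{2}\kappa)^q$ for $i\le q$ and the Chebyshev coefficient bound $\sum_i |T_i| \le 3^q$, I get
\begin{equation*}
    \frnorm{ {p}(M)G - \Apx} \le \frac{8\epsilon_\circ(1+\gamma)\alpha\, \frnorm{G}\,(3\sqrt{2}\kappa)^q}{T(1+\gamma)}.
\end{equation*}

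The remaining task is to show that the prefactor $\frac{(1+\gamma)\alpha\,\frnorm{G}}{T(1+\gamma)}$ is at most $O(k^{3/2})\opnorm{ {p}(M)G}$ with constant probability. For this I would invoke two Gaussian events: (i) $\frnorm{G} \le O(\sqrt{dk})$, which holds with probability $\ge 9/10$ since $\E[\frnorm{G}^2] = dk$ (Markov), and (ii) the anti-concentration bound $\sigma_{\min}(G') \ge (\sqrt{d}-\sqrt{k-1})/(20C)$ used inside the proof of Lemma~\ref{lma:condition-number-of-p-M-G}, which also holds with probability $\ge 9/10$ for $d$ at least an absolute constant. Combined with the lower bound $\sigma_k( {p}(M)) \ge \sigma_{k+1}( {p}(M)) \ge (1+\gamma)\alpha/T(1+\gamma)$ established inside the proof of Lemma~\ref{lma:condition-number-of-p-M}, and the rotation trick $G' = \T{V}G$ from that proof, I get $\opnorm{ {p}(M)G} \ge \sigma_{\min}( {p}(M)G) \ge \frac{(1+\gamma)\alpha(\sqrt{d}-\sqrt{k-1})}{20C\, T(1+\gamma)}$ on the intersection of these events. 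The ratio $\sqrt{dk}/(\sqrt{d}-\sqrt{k-1})$ is maximized (up to constants) at $d=k$, where it is $\Theta(k^{3/2})$, and it only shrinks for larger $d$; this is where the $k^{3/2}$ factor enters. A union bound then yields the claim with probability $\ge 3/5$.

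The main obstacle is the relative bound step: the Frobenius error on $ {p}(M)G-\Apx$ is naturally controlled by $\alpha\frnorm{G}/T(1+\gamma)$, but the normalization we want on the right-hand side is $\opnorm{ {p}(M)G}$, and in the worst case $d=k$ the Gaussian $G$ may be extremely badly conditioned, making $\sigma_{\min}(G)$ as small as $\Theta(1/\sqrt{k})$. Tracking this factor carefully — and ensuring that the $k^{3/2}$ is sharp rather than some larger polynomial in $k$ — is the delicate quantitative step that forces the use of the extreme-singular-value estimate of \cite{gaussian-smallest-singular-value} rather than a weaker tail bound.
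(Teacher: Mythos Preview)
Your proposal is correct and follows essentially the same route as the paper: expand $p(M)G-\Apx$ via the odd-monomial Chebyshev coefficients, apply Lemma~\ref{lma:krylov-matrix-error} termwise, use $\sum_i|T_i|\le 3^q$ and $\opnorm{M}/\alpha=\kappa$ to get the absolute bound, then convert to a relative bound using $\frnorm{G}=O(\sqrt{dk})$ together with the smallest-singular-value estimate for $G'=\T{V}G$, with the $k^{3/2}$ factor emerging from the case $d\approx k$. Your relative-bound step is in fact slightly cleaner than the paper's write-up: you lower bound $\opnorm{p(M)G}$ via $\sigma_{\min}(p(M)G)\ge \sigma_k(p(M))\cdot\sigma_{\min}(G')$ and $\sigma_k(p(M))\ge (1+\gamma)\alpha/T(1+\gamma)$, which is exactly what is needed, whereas the paper passes through $\opnorm{p(M)}$ in a way that reads ambiguously.
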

 \begin{proof}
	By the triangle inequality,
\begin{align*}
	&\frnorm{ {p}(M)G - \Apx}\\
	&\le \frac{(1+\gamma)\alpha}{T(1+\gamma)}\sum_{\text{odd $i \le q$}} \frac{|T_i|}{\alpha^i}\frnorm{(M\T{M})^{(i-1)/2}MG - (M\T{M})^{\circ (i-1)/2}M\circ G}\\
	&\le \frac{(1+\gamma)\alpha}{T(1+\gamma)}\sum_{\text{odd $i \le q$}} \frac{|T_i|}{\alpha^i}E_{i,G} \\\
	&\le  \frac{(1+\gamma)\alpha}{T(1+\gamma)}\sum_{\text{odd $i \le q$}} \frac{|T_i|}{\alpha^i}8\epsilon_{\circ}(2^{i/2}\opnorm{M}^i\frnorm{G}) && \text{(Lemma~\ref{lma:krylov-matrix-error})}\\
	&\le \frac{(1+\gamma)\sigma_{k+1}(M)}{T(1+\gamma)}8\epsilon_{\circ}\frnorm{G}\sum_{\text{odd $i \le q$}}|T_i|(\sqrt{2}\kappa)^i && \text{($\alpha = \sigma_{k+1}(M)$)}\\
	&\le \frac{(1+\gamma)\sigma_{k+1}(M)}{T(1+\gamma)}8\epsilon_{\circ}\frnorm{G}(3\sqrt{2}\kappa)^q && \text{($\sum_i|T_i| \le 3^q$)}\\
	&\le \opnorm{ {p}(M)}8\epsilon_{\circ}\frnorm{G}(3\sqrt{2}\kappa)^q. && (\text{Equation~\ref{eqn:p-M-operator-norm}})
\end{align*}
We also condition on the following events both of which hold simultaneously with probability $\ge 4/5$.
\begin{itemize}
	\item $\frnorm{G} \le 4\sqrt{dk}$, and
	\item $\opnorm{ {p}(M)G} \ge (1/C)\opnorm{ {p}(M)}(\sqrt{d} - \sqrt{k-1}) \ge (1/2C)\opnorm{p(M)}\sqrt{d}$.
\end{itemize}
Thus, with probability $\ge 4/5$, if $d \ge 4k$,
\begin{equation*}
	\frnorm{ {p}(M)G - \Apx} \le \opnorm{p(M)} (32\epsilon_{\circ})\sqrt{dk}(3\sqrt{2}\kappa)^q \le 64C\epsilon_{\circ}\sqrt{k}(3\sqrt{2}\kappa)^q\opnorm{ {p}(M)G}. 
\end{equation*}
If $k \le d \le 4k$, then $\opnorm{ {p}(M)G} \ge (1/C)\opnorm{ {p}(M)}(\sqrt{d} - \sqrt{k-1}) \ge (1/2C)\opnorm{ {p}(M)}(1/\sqrt{k})$ and $\frnorm{ {p}(M)G - \Apx} \le 64C\epsilon_{\circ}k^{3/2}(3\sqrt{2}\kappa)^q\opnorm{ {p}(M)G}$.
\end{proof}
 Let $Y_1 \in \R^{n \times k}$ be an orthonormal basis for the column span of $ {p}(M)G$ and $Y \in \R^{n \times k}$ be an orthonormal basis for the matrix $\Apx$. We now have from Lemmas~\ref{lma:closeness-projection-matrices} and \ref{lma:spectral-norm-error-p-M-Apx} that
\begin{align*}
	\opnorm{Y\T{Y} - Y_1\T{Y_1}} &\le O(\epsilon_{\circ}k^{3/2}(3\sqrt{2}\kappa)^q\kappa(p(M)G)^4) = O(\epsilon_{\circ}{k^{3/2}}(3\sqrt{2}\kappa)^q(k^43^{4q}\kappa^{4q}))\\
	&= \epsilon_{\circ}C^qk^6\kappa^{5q}
\end{align*}
for some constant $C$. Let $\delta := \epsilon_{\circ}C^qk^6\kappa^{5q}$. Hence 
\begin{equation}
\opnorm{Y_1\T{Y_1} - Y\T{Y}} \le \delta.
\label{eqn:diff-pM-apx}
\end{equation}
For $l \le k$ such that $\sigma_l(M) \ge (1+\varepsilon)\sigma_{k+1}(M)$, let $\mathcal{E}_l = \frnorm{[M]_l}^2 - \frnorm{Y_1\T{Y_1}[M]_l}^2$ and  $\mathcal{E}_{l}' = \frnorm{[M]_l}^2 - \frnorm{Y\T{Y}[M]_l}^2$. \citet[Equation 7]{muscomusco} show that \[\mathcal{E}_l = \frnorm{[M]_l}^2 - \frnorm{Y_1\T{Y_1}[M]_l}^2 \le (\varepsilon/2)\sigma_{k+1}(M)^2.\] Bounding $\mathcal{E}_l$ is one of the important steps in the analysis of \cite{muscomusco}. We obtain a similar bound on $\mathcal{E}_l'$. We further show that if $M_{K',l}$ is the best rank $l$ Frobenius norm approximation of $M$ in colspan$(K')$, then $\frnorm{[M]_l}^2 - \frnorm{M_{K',l}}^2 \le (3\varepsilon/4)\sigma_{k+1}(M)^2$, showing that there is a very good rank-$l$ approximation for $M$ in $\text{colspan}(K')$. We have the following lemma.
\begin{lemma}
	Given a matrix $A$ and a parameter $k$, let $Y_1$ be an orthonormal basis for a $k$ dimensional subspace such that
	$
		\mathcal{E}_l = \frnorm{[M]_l}^2 - \frnorm{Y_1\T{Y_1}[M]_l}^2 \le (\varepsilon/2)\sigma_{k+1}^2
	$
	for all $l \le k$ satisfying $\sigma_l(M) \ge (1+\varepsilon)\sigma_{k+1}(M)$. If $Y$ is an orthonormal basis for another $k$ dimensional subspace for which $\opnorm{Y\T{Y} - Y_1\T{Y_1}} \le \varepsilon/(16\kappa^2\sqrt{k})$, where $\kappa = \sigma_1(M)/\sigma_{k+1}(M)$, then for all such $l$,
	\begin{equation*}
		\mathcal{E}_l' = \frnorm{[M]_l}^2 - \frnorm{Y\T{Y}[M]_l}^2 \le (3\varepsilon/4)\sigma_{k+1}^2.
	\end{equation*}
	There also exists a matrix $Y^l$ with $l$ orthonormal columns with $\text{colspan}(Y^l) \subseteq \text{colspan}(K')$ such that
	$
		\frnorm{[M]_l}^2 -\frnorm{Y^l\T{(Y^l)}M}^2 \le (3\varepsilon/4)\sigma_{k+1}^2.
	$
	\label{lma:existence-of-good-space-inside-k-prime}
\end{lemma}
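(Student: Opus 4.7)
The plan is to bound the first part via a trace inner product, and handle the second by choosing $Y^l$ to be the best rank-$l$ subspace for $M$ within $\text{colspan}(Y)$.

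For the first part, using that $Y$ and $Y_1$ each have orthonormal columns, I would expand the two Frobenius norms as traces and subtract to obtain
\begin{equation*}
\mathcal{E}_l - \mathcal{E}_l' = \frnorm{Y\T{Y}[M]_l}^2 - \frnorm{Y_1\T{Y_1}[M]_l}^2 = \mathrm{tr}\!\left((Y\T{Y} - Y_1\T{Y_1})\,[M]_l\T{[M]_l}\right).
\end{equation*}
I would then apply the bound $|\mathrm{tr}(CD)| \le \opnorm{C}\cdot \mathrm{tr}(D)$, valid for positive semidefinite $D$ (proved by diagonalizing $D = \sum_i \lambda_i u_i\T{u_i}$ and using $|\T{u_i}Cu_i| \le \opnorm{C}$ on each term). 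Here $D = [M]_l\T{[M]_l} \succeq 0$ has $\mathrm{tr}(D) = \frnorm{[M]_l}^2 \le l\sigma_1(M)^2 \le k\kappa^2\sigma_{k+1}^2$, so the hypothesis $\opnorm{Y\T{Y} - Y_1\T{Y_1}} \le \varepsilon/(16\kappa^2\sqrt{k})$ gives $|\mathcal{E}_l - \mathcal{E}_l'| \le (\varepsilon/4)\sigma_{k+1}^2$ up to absolute constants, and combining with the assumed $\mathcal{E}_l \le (\varepsilon/2)\sigma_{k+1}^2$ yields $\mathcal{E}_l' \le (3\varepsilon/4)\sigma_{k+1}^2$.

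For the second part, I would take $Y^l \in \R^{n\times l}$ to be the matrix whose columns are the top $l$ left singular vectors of $Y\T{Y}M$. These columns are orthonormal and lie in $\text{colspan}(Y) \subseteq \text{colspan}(K')$. Because $\text{colspan}(Y^l) \subseteq \text{colspan}(Y)$, we have $\T{(Y^l)}Y\T{Y} = \T{(Y^l)}$, so $Y^l\T{(Y^l)}M = Y^l\T{(Y^l)}Y\T{Y}M = [Y\T{Y}M]_l$ by Eckart--Young. It therefore suffices to show $\frnorm{[Y\T{Y}M]_l}^2 \ge \frnorm{Y\T{Y}[M]_l}^2$. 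Writing $[M]_l = MV^l\T{(V^l)}$ where $V^l$ holds the top $l$ right singular vectors of $M$,
\begin{equation*}
\frnorm{Y\T{Y}[M]_l}^2 = \frnorm{(Y\T{Y}M)V^l\T{(V^l)}}^2 = \frnorm{(Y\T{Y}M)V^l}^2 \le \sum_{i=1}^l \sigma_i^2(Y\T{Y}M) = \frnorm{[Y\T{Y}M]_l}^2,
\end{equation*}
where the inequality is the Courant--Fischer principle that $\sum_{i=1}^l \opnorm{Nv_i}^2$ over $l$ orthonormal vectors $v_1,\ldots,v_l$ is at most the sum of the top $l$ squared singular values of $N$. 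Subtracting from $\frnorm{[M]_l}^2$ gives $\frnorm{[M]_l}^2 - \frnorm{Y^l\T{(Y^l)}M}^2 \le \mathcal{E}_l' \le (3\varepsilon/4)\sigma_{k+1}^2$.

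The main obstacle is the constant tracking in the first step: the trace bound must absorb the factor $k\kappa^2$ coming from $\frnorm{[M]_l}^2 \le l\sigma_1^2(M)$, which is precisely what the $\kappa^2$ in the denominator of the hypothesis is designed to cancel. The second part is a clean application of Eckart--Young to $Y\T{Y}M$, the key observation being the identification $Y^l\T{(Y^l)}M = [Y\T{Y}M]_l$ that follows from $\text{colspan}(Y^l) \subseteq \text{colspan}(Y)$.
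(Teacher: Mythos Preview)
Your approach is correct and, for the first part, more direct than the paper's. The paper bounds $\frnorm{Y_1\T{Y_1}[M]_l}^2$ in terms of $\frnorm{Y\T{Y}[M]_l}^2$ via a Young-type inequality $\|a+b\|^2 \le (1+s)\|a\|^2 + (1+1/s)\|b\|^2$ with a free parameter $s$, then controls the cross term $\frnorm{(Y\T{Y}-Y_1\T{Y_1})[M]_l}^2 \le 2k\delta^2\sigma_1(M)^2$ and balances $s$. Your trace identity $\mathcal{E}_l' - \mathcal{E}_l = \text{tr}\bigl((Y_1\T{Y_1}-Y\T{Y})[M]_l\T{[M]_l}\bigr)$ followed by $|\text{tr}(CD)| \le \opnorm{C}\,\text{tr}(D)$ for $D\succeq 0$ avoids the auxiliary parameter and yields a bound linear in $\delta = \opnorm{Y\T{Y}-Y_1\T{Y_1}}$ directly. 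One caveat: both arguments, when the constants are tracked, actually require $\delta \le c\,\varepsilon/(k\kappa^2)$ rather than the stated $\varepsilon/(16\kappa^2\sqrt{k})$; your bound gives $|\mathcal{E}_l'-\mathcal{E}_l| \le \delta\, k\kappa^2\sigma_{k+1}^2 = (\varepsilon\sqrt{k}/16)\sigma_{k+1}^2$, so ``up to absolute constants'' hides a stray $\sqrt{k}$. This is a shared looseness in the lemma's hypothesis (harmless downstream, since $\delta$ is later chosen with large polynomial slack in $k$), not a defect specific to your route.

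For the second part the paper argues slightly differently: it takes $Y^l$ to be the optimal rank-$l$ Frobenius subspace for $M$ inside $\text{colspan}(K')$, namely $Q'\bar U_l$ from the SVD of $\T{Q'}M\T{M}Q'$, and invokes optimality against the competitor $Y\T{Y}[M]_l$ to get $\frnorm{Y^l\T{(Y^l)}M}^2 \ge \frnorm{Y\T{Y}[M]_l}^2$. Your construction lives in the smaller space $\text{colspan}(Y)\subseteq\text{colspan}(K')$ and is equally valid; your Ky Fan step $\frnorm{(Y\T{Y}M)V^l}^2 \le \sum_{i\le l}\sigma_i^2(Y\T{Y}M)$ is exactly the optimality the paper appeals to, just made explicit. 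The paper's choice has the mild advantage that its $Y^l$ coincides with the object computed later in the algorithm, but for the existence statement in the lemma either works.
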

\begin{proof}
	For any $1 > \epsilonsub{s} > 0$
\begin{align*}
	\frnorm{Y_1\T{Y_1}M_l}^2 &\le (1+\epsilonsub{s})\frnorm{Y\T{Y}M_l}^2 + (1+\frac{1}{\epsilonsub{s}})\frnorm{(Y\T{Y} - Y_1\T{Y_1})M_l}^2\\
	&\le (1+\epsilonsub{s})\frnorm{Y\T{Y}M_l}^2 + (2/\epsilonsub{s})2k\delta^2\sigma_1(M)^2.
\end{align*}
The last inequality follows from the fact that $Y\T{Y} - Y_1\T{Y_1}$ has rank at most $2k$. Therefore
\begin{equation*}
	\frnorm{Y\T{Y}M_l}^2 \ge \frac{1}{1+\epsilonsub{s}}\frnorm{Y_1\T{Y_1}M_l}^2 - \frac{4k\sigma_1(M)^2}{\epsilonsub{s}}\delta^2
\end{equation*}
which implies that
\begin{align*}
	\mathcal{E}_l' &= \frnorm{M_l}^2 - \frnorm{Y\T{Y}M_l}^2\\
	&\le \frnorm{M_l}^2 - \frac{1}{1+\epsilonsub{s}}\frnorm{Y_1\T{Y_1}M_l}^2 + \frac{4k\sigma_1(M)^2}{\epsilonsub{s}}\delta^2\\
	&\le \frac{1}{1+\epsilonsub{s}}(\frnorm{M_l}^2 - \frnorm{Y_1\T{Y_1}M_l}^2) + \epsilonsub{s}\frnorm{M_l}^2 + \frac{4k\sigma_1(M)^2}{\epsilonsub{s}}\delta^2\\
	&\le \frac{1}{1+\epsilonsub{s}}\frac{\varepsilon}{2}\sigma_{k+1}(M)^2 + \epsilonsub{s}k\sigma_1(M)^2+ \frac{4k\sigma_1(M)^2}{\epsilonsub{s}}\delta^2.
\end{align*}
Picking $\epsilonsub{s} = \varepsilon/(8k\kappa^2)$ and if $\delta \le \varepsilon/(16\kappa^2\sqrt{k})$, we obtain that
\begin{equation*}
	\mathcal{E}_l' = \frnorm{M_l}^2 - \frnorm{Y\T{Y}M_l}^2 \le \frac{3\varepsilon}{4}\sigma_{k+1}^2.
\end{equation*}
Recall here that $\kappa = \sigma_1(M)/\sigma_{k+1}(M)$. The matrix $Y\T{Y}M_l$ is a rank $l$ approximation for matrix $M$ inside the column span of $Y$ and hence in the column span of $K'$. Let $Y^l$ be a rank $l$ matrix that forms a basis for the best rank $l$ approximation of $M$ inside the column space of $K'$ i.e.,
\begin{equation*}
	\min_{\text{rank-}l\ B:\text{colspan}(B) \subseteq \text{colspan($K'$)}} \frnorm{M-B}^2 = \frnorm{M - Y^lY^lM}^2.
\end{equation*}
From Lemma~\ref{lma:frobenius-norm-rrr}, note that if $\bar{U}\bar{\Sigma}^2\T{\bar{V}}$ is the singular value decomposition of the matrix $\T{Q'}M\T{M}Q'$ (recall $Q'$ denotes an orthonormal basis for the matrix $K'$), then $Y^l = Q'\bar{U}_l$ where $\bar{U}_l$ denotes the first $l$ columns of the matrix $\bar{U}$.
By the optimality of $Y^l$, $\frnorm{M - Y^l\T{(Y^l)}M}^2 \le \frnorm{M - Y\T{Y}M_l}^2$ which implies that $\frnorm{Y\T{Y}M_l}^2 \le \frnorm{Y^l\T{(Y^l)}M}^2$. Thus $\frnorm{M_l}^2 - \frnorm{Y^l\T{(Y^l)}M}^2 \le \frnorm{M_l}^2 - \frnorm{Y\T{Y}M_l}^2 = \mathcal{E}_l' \le (3\varepsilon/4)\sigma_{k+1}^2$.
\end{proof}
The proof also shows that if $\bar{U}\bar{\Sigma}^2\T{\bar{U}}$ is the singular value decomposition of the positive semi-definite matrix $\T{Q'}M\T{M}Q'$, then $Y^l = Q'\bar{U}_l$ where $\bar{U}_l$ denotes the matrix that contains the first $l$ columns of $\bar{U}$. Let $m \le k$ be the largest integer for which $\sigma_{m}(M) \ge (1+\varepsilon)\sigma_{k+1}(M)$. From the above lemma, the matrix $Y^m$ satisfies $\frnorm{M_l}^2 -\frnorm{Y^m\T{(Y^m)}M}^2 \le (3\varepsilon/4)\sigma_{k+1}(M)^2$. We later show that this implies $\opnorm{M - Y^m\T{(Y^m)}M} \le (1+3\varepsilon/2)\sigma_{k+1}(M)$. Unfortunately, we cannot compute the matrix $\T{Q'}M\T{M}Q'$ exactly as we only have access to an oracle that computes vector products with matrices $M,\T{M}$ approximately. Nevertheless we show that we can compute a matrix $\hat{Y}^m$ based on an approximation to the matrix $\T{Q'}M\T{M}Q'$ and it still satisfies the desired guarantees approximately.

First we have the following lemma that shows if a subspace $Y^m$ is a good approximation for Frobenius norm low rank approximation of $M$ in $m$ dimensions, then the subspace $Y^m$ is also a good subspace for spectral norm rank-$k$ approximation of matrix $M$. It also shows that even if $\hat{Y}^m$ only approximately satisfies the properties of $Y^m$, the matrix $\hat{Y}^m$ spans a good low rank approximation for $M$.
\begin{lemma}
	Given an arbitrary matrix $M$, if an orthonormal basis $Y^{m}$ to an $m$-dimensional subspace, where $m \le k$ is the largest integer such that $\sigma_m(M) \ge (1+\varepsilon)\sigma_{k+1}(M)$,  satisfies
	\begin{equation*}
		\frnorm{M_m}^2 - \frnorm{Y^m\T{(Y^m)}M}^2 \le \varepsilon\sigma_{k+1}(M)^2,
	\end{equation*}
	then $\opnorm{M - Y^m\T{(Y^m)}M} \le (1+2\varepsilon)\sigma_{k+1}(M)$. Additionally if $\hat{Y}^m$ is a matrix with $m$ orthonormal columns such that 
	\begin{equation*}
		\frnorm{M - \hat{Y}^m\T{(\hat{Y}^m)}M}^2 \le \frnorm{M-Y^m\T{(Y^m)}M}^2 + \delta, 
	\end{equation*}
	then $\opnorm{M - \hat{Y}^m\T{(\hat{Y}^m)}M} \le (1+2\varepsilon)\sigma_{k+1}(M)+\sqrt{\delta}$.
	\label{lma:approximation-with-approximate-subspace}
\end{lemma}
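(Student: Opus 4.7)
The plan is to prove both parts via an augmentation argument that turns a large spectral residual into a contradiction with the Eckart–Young optimality of $[M]_{m+1}$. The core idea is: if the residual $(I - Y^m\T{(Y^m)})M$ had a singular direction exceeding the allowed budget, then appending its top left singular direction to $Y^m$ would yield a rank-$(m+1)$ projection capturing strictly more Frobenius mass of $M$ than $[M]_{m+1}$ does, which is impossible.

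For the first part, I would let $\tau := \opnorm{(I-Y^m\T{(Y^m)})M}$, pick a unit vector $v$ attaining this operator norm, and set $u := (I - Y^m\T{(Y^m)})Mv/\tau$. By construction $u$ is a unit vector orthogonal to $\text{colspan}(Y^m)$, so $Y^m\T{(Y^m)} + u\T{u}$ is a rank-$(m+1)$ orthogonal projection. Using $\T{u}Y^m = 0$ together with Pythagoras gives
\[
\frnorm{(Y^m\T{(Y^m)}+u\T{u})M}^2 = \frnorm{Y^m\T{(Y^m)}M}^2 + \opnorm{\T{u}M}^2 \ge \frnorm{Y^m\T{(Y^m)}M}^2 + \tau^2,
\]
the last inequality because $\opnorm{\T{u}M} \ge \T{u}Mv = \T{u}(I-Y^m\T{(Y^m)})Mv = \tau$. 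Eckart–Young then yields $\frnorm{(Y^m\T{(Y^m)}+u\T{u})M}^2 \le \frnorm{[M]_{m+1}}^2 = \frnorm{[M]_m}^2 + \sigma_{m+1}(M)^2$, and rearranging together with the hypothesis $\frnorm{[M]_m}^2 - \frnorm{Y^m\T{(Y^m)}M}^2 \le \varepsilon\sigma_{k+1}^2$ gives $\tau^2 \le \varepsilon\sigma_{k+1}^2 + \sigma_{m+1}(M)^2$. By the maximality of $m$, $\sigma_{m+1}(M) < (1+\varepsilon)\sigma_{k+1}(M)$, so $\tau^2 \le (\varepsilon + (1+\varepsilon)^2)\sigma_{k+1}^2 \le (1+2\varepsilon)^2 \sigma_{k+1}^2$, which is the desired spectral bound.

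For the second part, I would run the identical augmentation argument on $\hat{Y}^m$ in place of $Y^m$, producing $\hat{\tau}^2 \le \frnorm{[M]_m}^2 - \frnorm{\hat{Y}^m\T{(\hat{Y}^m)}M}^2 + \sigma_{m+1}(M)^2$. The approximation hypothesis rewrites as $\frnorm{\hat{Y}^m\T{(\hat{Y}^m)}M}^2 \ge \frnorm{Y^m\T{(Y^m)}M}^2 - \delta$ via Pythagoras applied to the orthogonal projection $\hat{Y}^m\T{(\hat{Y}^m)}$, so the bound from part one is inherited up to an additive $\delta$: $\hat{\tau}^2 \le (1+2\varepsilon)^2 \sigma_{k+1}^2 + \delta$. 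Applying $\sqrt{a+b}\le \sqrt{a}+\sqrt{b}$ finishes the proof.

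The one subtle point worth verifying carefully is the augmentation step: it is essential that $u$ lie in the orthogonal complement of $\text{colspan}(Y^m)$, which follows because $u$ is in the range of $I - Y^m\T{(Y^m)}$, and only then is $Y^m\T{(Y^m)} + u\T{u}$ genuinely a rank-$(m+1)$ projection so that Pythagoras yields the clean additive decomposition of $\frnorm{\cdot}^2$. The edge case $m=0$ (when $\sigma_1(M) < (1+\varepsilon)\sigma_{k+1}(M)$) deserves a brief sanity check but is handled by the same inequality since then $\tau = \opnorm{M} = \sigma_1(M) < (1+\varepsilon)\sigma_{k+1}(M)$ directly. The rest of the proof is algebraic bookkeeping.
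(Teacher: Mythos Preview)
Your argument is correct. Both parts go through exactly as you describe; the augmentation step is valid because $u$ lies in the range of $I-Y^m\T{(Y^m)}$, and the inequality $\opnorm{\T{u}M}\ge \T{u}Mv=\tau$ follows from the computation $\T{u}Mv=\frac{1}{\tau}\T{v}\T{M}(I-Y^m\T{(Y^m)})Mv=\frac{1}{\tau}\opnorm{(I-Y^m\T{(Y^m)})Mv}^2=\tau$.

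Your route differs from the paper's. The paper first rewrites the hypothesis via Pythagoras as $\frnorm{M-Y^m\T{(Y^m)}M}^2 \le \frnorm{M-[M]_m}^2 + \varepsilon\sigma_{k+1}^2$ and then invokes an external result (Theorem~3.2 of Gu) asserting that for any rank-$m$ orthogonal projection $P$, an additive excess of $\frnorm{(I-P)M}^2$ over $\frnorm{M-[M]_m}^2$ transfers verbatim to $\opnorm{(I-P)M}^2$ over $\opnorm{M-[M]_m}^2=\sigma_{m+1}^2$; this immediately yields $\tau^2\le\sigma_{m+1}^2+\varepsilon\sigma_{k+1}^2$. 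Your augmentation-plus-Eckart--Young step is precisely an inline proof of (the relevant case of) that transfer principle, arriving at the identical inequality. So the two arguments coincide in content; yours is self-contained and elementary, while the paper's is shorter at the cost of a black-box citation. The second part is handled the same way in both: convert the $\hat Y^m$ hypothesis to $\frnorm{\hat Y^m\T{(\hat Y^m)}M}^2\ge\frnorm{Y^m\T{(Y^m)}M}^2-\delta$ and feed the extra $\delta$ through.
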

\begin{proof}
	As $\frnorm{M_m}^2 - \frnorm{Y^{m}\T{(Y^m)}M}^2 = \frnorm{M}^2 - \frnorm{M-M_m}^2 - \frnorm{Y^{m}\T{(Y^{m})}M}^2 = \frnorm{M-Y^{m}\T{(Y^{m})}M}^2 - \frnorm{M-M_m}^2$, we obtain that
	\begin{equation*}
		\frnorm{M-Y^{m}\T{(Y^m)}M}^2 \le \frnorm{M-M_m}^2 + \varepsilon\sigma_{k+1}(M)^2.
	\end{equation*}
	As an additive error in Frobenius norm translates to additive error in spectral norm for the above case (see Theorem~3.2 from \cite{gu-error-translation-fro-to-op}), we obtain
	\begin{align*}
		\opnorm{M-Y^{m}\T{(Y^m)}M}^2 &\le \opnorm{M-M_m}^2 + \varepsilon\sigma_{k+1}(M)^2 \le \sigma_{m+1}(M)^2 + \varepsilon\sigma_{k+1}(M)^2\\
		&\le (1+4\varepsilon)\sigma_{k+1}(M)^2.
	\end{align*}
	Thus $\opnorm{M - Y^m\T{(Y^m)}M} \le (1+2\varepsilon)\sigma_{k+1}(M)$. Similarly, we have that
	\begin{equation*}
		\frnorm{M - \hat{Y}^m\T{(\hat{Y}^m)}M}^2 \le \frnorm{M - M_m}^2 + \varepsilon\sigma_{k+1}(M)^2 + \delta
	\end{equation*}
	which implies that
	\begin{equation*}
		\opnorm{M - \hat{Y}^m\T{(\hat{Y}^m)}M}^2 \le \opnorm{M - M_m}^2 + \varepsilon\sigma_{k+1}(M)^2 + \delta \le (1+4\varepsilon)\sigma_{k+1}(M)^2 + \delta
	\end{equation*}
	which shows $\opnorm{M - \hat{Y}^m\T{(\hat{Y}^m)}M} \le (1+2\varepsilon)\sigma_{k+1}(M) + \sqrt{\delta}$. 
\end{proof}
The above lemma shows that we need only compute a matrix $\hat{Y}^m$ such that $\frnorm{M - \hat{Y}^m\T{(\hat{Y}^m)}M}^2 \approx \frnorm{M-Y^m\T{(Y^m)}M}^2$. 

We show that using an approximation to matrix $\T{Q'}M\T{M}\T{Q'}$ we can compute such a matrix $\hat{Y}^m$ which shows that $\opnorm{M - \hat{Y}^m\T{(\hat{Y}^m)}M} \le (1+O(\varepsilon))\sigma_{k+1}(M)$. As the value of $m \le k$ is not known, we further show that we can compute a matrix $\hat{Y}^k$ with $k$ orthonormal columns such that $\text{colspan}(M) \supseteq \text{colspan}(K') \supseteq \text{colspan}(\hat{Y}^k) \supseteq \text{colspan}(\hat{Y}^m)$. Therefore we can conclude that $\opnorm{M - \hat{Y}^k\T{(\hat{Y}^k)}M} \le \opnorm{M - \hat{Y}^k\T{(\hat{Y}^k)}M} \le (1+O(\varepsilon))\sigma_{k+1}(M)$. We thus have our final result for low rank approximation.
\subsection{Proof of Theorem~\ref{thm:main-theorem-krylov}}\label{subsec:thm:main-thm-krylov}
\paragraph{Computing top \texorpdfstring{$k$}{k} singular vectors of the matrix \texorpdfstring{$\T{Q'}M\T{M}Q'$}{Q' transpose x M x M transpose x Q'}}
We now show that if $\hat{Y}^m$ are the top $m$ singular vectors of the matrix $\T{Q'}((M\T{M}) \circ Q')$, then 
\begin{equation*}
	\frnorm{M -\hat{Y}^m\T{(\hat{Y}^m)}M}^2 \approx \frnorm{M - Y^m\T{(Y^m)}M}^2.
\end{equation*}
\begin{lemma}
	If $Z_m$ are the top $m$ orthonormal eigenvectors of the matrix $M\T{M}$, then for any matrix $Y$ with $m$ orthonormal columns, 
	\begin{equation*}
		\textnormal{tr}(\T{Z_m}M\T{M}Z_m) \ge \textnormal{tr}(\T{Y}M\T{M}Y).
	\end{equation*}
\end{lemma}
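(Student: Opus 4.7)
The plan is to prove this via the Ky Fan maximum principle, arguing by reduction to a simple linear optimization over weights obtained from the eigenbasis. The key observation is that $M\T{M}$ is symmetric positive semidefinite, so it admits an orthogonal eigendecomposition $M\T{M} = Z\Lambda\T{Z}$ with $\Lambda = \diag(\lambda_1,\ldots,\lambda_n)$ where $\lambda_1 \ge \lambda_2 \ge \cdots \ge \lambda_n \ge 0$ and $Z$ is the full orthogonal matrix whose first $m$ columns form $Z_m$. Then $\textnormal{tr}(\T{Z_m}M\T{M}Z_m) = \lambda_1 + \cdots + \lambda_m$ is immediate.

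Next, given an arbitrary $Y$ with $m$ orthonormal columns, I would change basis by setting $W := \T{Z}Y$. Since $Z$ is orthogonal, $W$ still has $m$ orthonormal columns, that is, $\T{W}W = I_m$. Then
\begin{equation*}
\textnormal{tr}(\T{Y}M\T{M}Y) = \textnormal{tr}(\T{W}\Lambda W) = \sum_{i=1}^{n} \lambda_i \|W_{i,*}\|_2^2.
\end{equation*}
Let $c_i := \|W_{i,*}\|_2^2$. I would then establish two constraints on the $c_i$: first, $\sum_i c_i = \|W\|_F^2 = \textnormal{tr}(\T{W}W) = m$; second, $0 \le c_i \le 1$ for each $i$, which follows because $W$ can be completed to an $n \times n$ orthogonal matrix, so each row of $W$ has squared norm at most $1$ (Bessel's inequality).

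The final step reduces to showing that $\sum_{i=1}^n \lambda_i c_i \le \sum_{i=1}^m \lambda_i$ whenever $\lambda_1 \ge \cdots \ge \lambda_n \ge 0$, $0 \le c_i \le 1$, and $\sum_i c_i = m$. This is a standard exchange argument: if some $c_j < 1$ with $j \le m$, there must exist $i > m$ with $c_i > 0$; shifting mass from $c_i$ to $c_j$ only increases the objective since $\lambda_j \ge \lambda_i$. Hence the maximum is attained when $c_1 = \cdots = c_m = 1$ and the remaining $c_i = 0$, giving $\sum_{i=1}^m \lambda_i = \textnormal{tr}(\T{Z_m}M\T{M}Z_m)$, as required.

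There is no real obstacle here; the result is a textbook fact (Ky Fan $m$-th eigenvalue characterization), and the only care needed is to verify the upper bound $c_i \le 1$ via the orthogonal extension of $W$, which is a one-line observation.
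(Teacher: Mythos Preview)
Your proof is correct and self-contained; it is a clean derivation of the Ky Fan maximum principle via an eigendecomposition and an exchange argument over the row-norm weights $c_i$.

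The paper takes a different, shorter route: it rewrites both traces as squared Frobenius norms of projections, namely $\textnormal{tr}(\T{Z_m}M\T{M}Z_m)=\frnorm{Z_m\T{Z_m}M}^2$ and $\textnormal{tr}(\T{Y}M\T{M}Y)=\frnorm{Y\T{Y}M}^2$, and then simply invokes the Eckart--Young fact that the top $m$ left singular vectors of $M$ (equivalently the top $m$ eigenvectors of $M\T{M}$) give the best rank-$m$ Frobenius approximation, so $\frnorm{Z_m\T{Z_m}M}^2\ge\frnorm{Y\T{Y}M}^2$. Your approach is more elementary in that it does not appeal to Eckart--Young as a black box, and in fact what you wrote essentially \emph{reproves} that optimality; the paper's approach is more concise but leans on a standard theorem already assumed known. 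Either way the content is the same textbook fact.
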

\begin{proof}
	We have $\text{tr}(\T{Z_m}M\T{M}Z_m) = \frnorm{Z_m\T{Z_m}M}^2$ and $\text{tr}(\T{Y}M\T{M}Y) = \frnorm{Y\T{Y}M}^2$. We are given that $Z_m$ are the top $m$ eigenvectors of the matrix $M\T{M}$ and therefore $Z_m$ are the top $m$ singular vectors of the matrix $M$. Therefore for any matrix $Y$ with $m$ orthonormal columns, we have that $\frnorm{Z_m\T{Z_m}M}^2 \ge \frnorm{Y\T{Y}M}^2$ and therefore that $\text{tr}(\T{Z_m}M\T{M}Z_m) \ge \text{tr}(\T{Y}M\T{M}Y)$.
\end{proof}
\begin{lemma}
	Let $M$ be a matrix and $Q$ be an orthonormal basis for an arbitrary $r$ dimensional space. Let $B$ be a positive semi-definite matrix such that $B - \T{Q}M\T{M}Q = \Delta$. Let $Z$ be a matrix whose columns are the top $k$ eigenvectors of the matrix $B$. Then if $Z_m$ denotes the matrix with first $m$ columns of $Z$ for $m=1,\ldots,k$ we have
	\begin{equation*}
		\frnorm{M -(QZ_m)\T{(QZ_m)}M}^2 \le \frnorm{M - Q(\T{Q}M)_m}^2 + 2m\frnorm{\Delta}.
	\end{equation*}
	\label{lma:singular-vectors-with-approximate-matrix}
\end{lemma}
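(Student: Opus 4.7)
The plan is to reduce the inequality to a trace comparison and then invoke the Ky Fan optimality of the top eigenvectors of $B$, absorbing the discrepancy between $A := \T{Q}M\T{M}Q$ and $B$ into the $\Delta$ term. The key observation is that both sides of the desired inequality can be written in terms of a common quantity $\frnorm{M - Q\T{Q}M}^2 + \frnorm{\T{Q}M}^2$ minus a trace of the form $\text{tr}(\T{Y}AY)$, where $Y$ has $m$ orthonormal columns.

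The first step would be to establish, for any $Y \in \R^{r \times m}$ with orthonormal columns, the Pythagorean identity
\begin{equation*}
\frnorm{M - QY\T{Y}\T{Q}M}^2 = \frnorm{M - Q\T{Q}M}^2 + \frnorm{\T{Q}M}^2 - \text{tr}(\T{Y}AY),
\end{equation*}
by decomposing $M - QY\T{Y}\T{Q}M = (M - Q\T{Q}M) + Q(\T{Q}M - Y\T{Y}\T{Q}M)$, where the two summands are orthogonal, and expanding $\frnorm{(I - Y\T{Y})\T{Q}M}^2 = \frnorm{\T{Q}M}^2 - \text{tr}(\T{Y}AY)$. Next, let $W_m \in \R^{r \times m}$ be the top $m$ eigenvectors of $A$; these are exactly the top $m$ left singular vectors of $\T{Q}M$, so $QW_m\T{W_m}\T{Q}M = Q(\T{Q}M)_m$. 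Applying the identity to $Y = Z_m$ and $Y = W_m$ and subtracting reduces the claim to showing $\text{tr}(\T{W_m}AW_m) - \text{tr}(\T{Z_m}AZ_m) \le 2m\frnorm{\Delta}$.

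To control this trace difference, write $A = B - \Delta$, so the difference becomes $[\text{tr}(\T{W_m}BW_m) - \text{tr}(\T{Z_m}BZ_m)] + [\text{tr}(\T{Z_m}\Delta Z_m) - \text{tr}(\T{W_m}\Delta W_m)]$. The first bracket is non-positive, because $Z_m$ consists of the top $m$ eigenvectors of $B$ and therefore maximizes $\text{tr}(\T{Y}BY)$ over all orthonormal $Y$ with $m$ columns (this is exactly the previous lemma applied to the PSD matrix $B$). The second bracket is bounded by observing that for any orthonormal $Y$ with $m$ columns, $|\text{tr}(\T{Y}\Delta Y)| \le \sum_{i=1}^m |\T{y_i}\Delta y_i| \le m\opnorm{\Delta} \le m\frnorm{\Delta}$, yielding the required $2m\frnorm{\Delta}$ bound. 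The main point requiring care is recognizing that the target matrix $Q(\T{Q}M)_m$ on the right-hand side is precisely $QW_m\T{W_m}\T{Q}M$ for the top $m$ eigenvectors $W_m$ of $A$, so that the Pythagorean identity applies uniformly to both sides; once that is in place, the rest is a standard eigenvalue-perturbation manipulation that critically uses the PSD assumption on $B$ only through the invocation of the preceding maximization lemma.
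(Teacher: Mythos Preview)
Your proposal is correct and follows essentially the same approach as the paper: both arguments identify $Q(\T{Q}M)_m$ with $QW_m\T{W_m}\T{Q}M$ for the top eigenvectors $W_m$ of $A=\T{Q}M\T{M}Q$, reduce the residual comparison to the trace inequality $\text{tr}(\T{W_m}AW_m)-\text{tr}(\T{Z_m}AZ_m)\le 2m\frnorm{\Delta}$ via the Pythagorean decomposition, and establish this by writing $A=B-\Delta$, invoking the Ky Fan maximality of $Z_m$ for $B$ (the preceding lemma), and bounding each $|\text{tr}(\T{Y}\Delta Y)|$ by $m\frnorm{\Delta}$. The only cosmetic difference is that the paper works directly with $\frnorm{(QZ_m)\T{(QZ_m)}M}^2$ and bounds $\text{tr}(\T{Y}\Delta Y)$ via $\langle \Delta, Y\T{Y}\rangle \le \frnorm{\Delta}\frnorm{Y\T{Y}}$ rather than $m\opnorm{\Delta}$, but the structure and key steps are identical.
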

\begin{proof}
	Let $Z^*$ be the matrix whose columns are the top $k$ eigenvectors of the matrix $\T{Q}M\T{M}Q$ and $Z^*_m$ be the first $m$ columns of $Z^*$. Thus $Q(\T{Q}M)_m = Q(Z^*_m\T{(Z^*_m)}\T{Q}M) = (QZ_{m}^*)\T{(QZ_m^*)}M$. Now,
\begin{align*}
	\frnorm{(QZ_m)\T{(QZ_m)}M}^2 &= \frnorm{\T{(QZ_m)}M}^2\\
	&= \text{tr}(\T{Z_m}\T{Q}M\T{M}QZ_m)\\
	&= \text{tr}(\T{Z_m}(\T{Q}M\T{M}Q + \Delta)Z_m) - \text{tr}(\T{Z_m}\Delta Z_m)\\
	&= \text{tr}(\T{Z_m}B Z_m) - \text{tr}(\T{Z_m}\Delta Z_m)\\
	&\ge \text{tr}(\T{(Z_m^*)}BZ_m^*) - m\frnorm{\Delta}\\
	&\quad \text{(Since $\text{tr}(\T{Z_m}\Delta Z_m) = \text{tr}(\Delta Z_m\T{Z_m}) \le \frnorm{\Delta}\frnorm{Z_m\T{Z_m}} \le \frnorm{\Delta} \cdot m$)}\\
	&= \text{tr}(\T{(Z_m^*)}(\T{Q}M\T{M}Q)Z_m^*) - \text{tr}(\T{(Z_m^*)}\Delta Z_m^*) - m\frnorm{\Delta}\\
	&= \text{tr}(QZ_m^*\T{(Z_m^*)}\T{Q}M\T{M}QZ_m^*\T{(Z_m^*)}\T{Q}) - \text{tr}(\T{(Z_m^*)}\Delta Z_m^*) - m\frnorm{\Delta}\\
	&\ge \frnorm{(QZ_m^*)\T{(QZ_m^*)}M}^2 - 2m\frnorm{\Delta}.
\end{align*}
Thus,
\begin{equation*}
	\frnorm{M-(QZ_m)\T{(QZ_m)}M}^2 \le \frnorm{M - (QZ_m^*)\T{(QZ_m^*)}M}^2 + 2m\frnorm{\Delta}, 
\end{equation*}
which concludes the proof.
\end{proof}
Hence if $\tilde{\Apx}$ is a positive semi-definite matrix such that $\frnorm{\tilde{\Apx} - \T{Q'}M\T{M}Q'}$ is small and if $Z_m$ denotes the top $m$ singular vectors of the matrix $\tilde{\Apx}$, we can conclude by Lemma~\ref{lma:approximation-with-approximate-subspace} that $\opnorm{M - (Q'Z_m)\T{(Q'Z_m)}M}$ is close to $\sigma_{k+1}(M)$.

We now show that we can compute such a matrix $\tilde{\Apx}$. Let $\Xi = \T{Q'}((M\T{M}) \circ Q')$ (recall that $\circ$ denotes matrix multiplication using the noisy oracle). Let $\tilde{Apx} = \text{psd}((\Xi + \T{\Xi})/2)$. Then the following lemma shows that $\tilde{Apx}$ is close to $\T{Q'}M\T{M}Q'$.
\begin{lemma}
Given matrices $M \in \R^{n \times d}$ and $Q' \in \R^{n \times t}$ where $Q'$ is a matrix with $t$ orthonormal columns, if for all vectors $v,v'$, $\opnorm{M \circ v - Mv} \le \epsilon_{\circ}\opnorm{M}\opnorm{v}$ and $\opnorm{\T{M} \circ v' - \T{M}v'} \le \epsilon_{\circ}\opnorm{M}\opnorm{v'}$, and $\Xi := \T{Q'}(M\T{M}) \circ Q'$, then 
\begin{equation*}
	\frnorm{\textnormal{psd}((\Xi + \T{\Xi})/2) - \T{Q'}M\T{M}Q'} \le (6\epsilon_{\circ}\opnorm{M}^2)\sqrt{t}.
\end{equation*}
Let $\tilde{Apx} = \textnormal{psd}((\Xi + \T{\Xi})/2)$. The matrix $\tilde{\Apx}$ can be computed in time $O(2tT(\epsilon_{\circ}) + t^3)$.
\label{lma:approximating-qtmmtq}
\end{lemma}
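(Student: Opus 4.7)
The plan is to establish the bound one column at a time and then invoke two short properties of the symmetric-psd projection. First I would look at an arbitrary column $q_i$ of $Q'$ and bound the error of its approximate two-step product $u_i := M \circ (\T{M} \circ q_i)$ relative to the exact product $M\T{M}q_i$. By the oracle guarantee, $w_i := \T{M} \circ q_i$ satisfies $\opnorm{w_i - \T{M}q_i} \le \epsilon_{\circ} \opnorm{M}$, since $\opnorm{q_i} = 1$, and hence $\opnorm{w_i} \le (1+\epsilon_{\circ})\opnorm{M}$. A second application of the oracle gives $\opnorm{u_i - Mw_i} \le \epsilon_{\circ} \opnorm{M}\opnorm{w_i}$, and combining these with the triangle inequality bounds $\opnorm{u_i - M\T{M}q_i}$ by $\epsilon_{\circ}\opnorm{M}^2(2+\epsilon_{\circ}) \le 3\epsilon_{\circ}\opnorm{M}^2$.

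Next I would aggregate: the $i$-th column of $(M\T{M}) \circ Q'$ is $u_i$ by definition, so summing column-wise gives $\frnorm{(M\T{M}) \circ Q' - M\T{M}Q'}^2 \le t \cdot (3\epsilon_{\circ}\opnorm{M}^2)^2$, i.e., the Frobenius bound $3\sqrt{t}\,\epsilon_{\circ}\opnorm{M}^2$. Left-multiplying by $\T{Q'}$ cannot increase the Frobenius norm because $Q'$ has orthonormal columns (so $\opnorm{\T{Q'}} \le 1$), yielding $\frnorm{\Xi - \T{Q'}M\T{M}Q'} \le 3\sqrt{t}\,\epsilon_{\circ}\opnorm{M}^2$.

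To finish, I would apply two short observations. Let $A := \T{Q'}M\T{M}Q'$, which is symmetric and psd. Writing $(\Xi + \T{\Xi})/2 - A = ((\Xi - A) + \T{(\Xi - A)})/2$ and applying the triangle inequality together with $\frnorm{\T{X}} = \frnorm{X}$ gives $\frnorm{(\Xi + \T{\Xi})/2 - A} \le \frnorm{\Xi - A}$. Since $A$ is itself psd, the defining optimality of $\text{psd}(\cdot)$ as the nearest psd matrix in Frobenius norm then gives $\frnorm{\text{psd}((\Xi+\T{\Xi})/2) - A} \le \frnorm{(\Xi+\T{\Xi})/2 - A}$. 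Chaining these three inequalities delivers the claimed bound (in fact with constant $3$, well within the stated $6$).

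For the running time, computing $\Xi$ requires $t$ calls to the oracle for $\T{M}$ and $t$ calls to the oracle for $M$, for a total of $2tT(\epsilon_{\circ})$, followed by an eigendecomposition of the symmetrized $t \times t$ matrix to form $\text{psd}(\cdot)$ at cost $O(t^3)$. The one point that requires care is the norm-blowup step, namely tracking $\opnorm{w_i}$ so that the error from the outer oracle call does not introduce higher-order terms in $\epsilon_{\circ}$; once that is handled the rest of the argument is elementary, which is why the symmetrization-plus-psd-projection trick works cleanly here rather than needing a more delicate spectral argument.
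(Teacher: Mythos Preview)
Your proposal is correct and follows essentially the same route as the paper: bound the per-column error of the two-step approximate product by $3\epsilon_{\circ}\opnorm{M}^2$, aggregate to a Frobenius bound of $3\sqrt{t}\,\epsilon_{\circ}\opnorm{M}^2$ on $\Xi - \T{Q'}M\T{M}Q'$, pass to the symmetrization, and then handle the psd step. The one place your argument differs is the final inequality: you invoke ``defining optimality'' to get $\frnorm{\text{psd}(S) - A} \le \frnorm{S - A}$ directly, whereas the paper uses optimality only in the form $\frnorm{\text{psd}(S)-S}\le\frnorm{A-S}$ and then the triangle inequality, which is why it lands on the constant $6$ rather than $3$. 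Your stated inequality is true, but strictly speaking it is non-expansiveness of the metric projection onto the closed convex cone of psd matrices (with $A$ a fixed point of that projection), not the defining optimality itself; either tighten the justification to cite non-expansiveness, or fall back to the paper's triangle-inequality version, and the proof goes through.
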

\begin{proof}
	Let $k_i$ be the $i^{\text{th}}$ column of the matrix $K'$ and $E_i = \opnorm{\T{Q'}(M\T{M})\circ k_i - \T{Q'}(M\T{M})k_i}$. Then
	\begin{align*}
		E_i &= \opnorm{\T{Q'}(M\T{M})\circ k_i - \T{Q'}(M\T{M})k_i}\\
		&\le \opnorm{(M\T{M})\circ k_i - (M\T{M})k_i}\\
		&= \opnorm{M \circ (\T{M} \circ k_i) - M(\T{M}k_i)}\\
		&\le \opnorm{M \circ (\T{M} \circ k_i) - M(\T{M} \circ k_i) + M(\T{M} \circ k_i) - M(\T{M}k_i)}\\
		&\le \opnorm{M \circ (\T{M} \circ k_i) - M(\T{M} \circ k_i)} + \opnorm{M(\T{M} \circ k_i) - M(\T{M}k_i)}\\
		&\le \epsilon_{\circ}\opnorm{M}\opnorm{\T{M} \circ k_i} + \opnorm{M}\epsilon_{\circ}\opnorm{M}\opnorm{k_i}\\
		&\le \epsilon_{\circ}\opnorm{M}(\opnorm{\T{M}k_i} + \epsilon_{\circ}\opnorm{M}\opnorm{k_i}) + \opnorm{M}^2\epsilon_{\circ}\opnorm{k_i}\\
		&\le 3\epsilon_{\circ}\opnorm{M}^2. && \text{(Since $\opnorm{k_i} = 1$)}
	\end{align*}
	Thus $\frnorm{\T{Q'}M\T{M}Q' - \Xi}^2 = \sum_{i=1}^t\opnorm{\T{Q'}M\T{M}k_i - \T{Q'}(M\T{M})\circ k_i}^2 \le (3\epsilon_{\circ}\opnorm{M}^2)^2t$ which implies that $\frnorm{\T{Q'}M\T{M}Q' - \Xi} \le (3\epsilon_{\circ}\opnorm{M}^2)\sqrt{t}$. Now as $\T{Q'}M\T{M}Q'$ is a symmetric matrix, $\frnorm{\T{Q'}M\T{M}Q' - (\Xi+\T{\Xi})/2} \le (3\epsilon_{\circ}\opnorm{M}^2)\sqrt{t}$. As $\T{Q'}M\T{M}Q'$ is itself a positive semidefinite matrix,
	\begin{equation*}
		\frnorm{\text{psd}((\Xi + \T{\Xi})/2) - (\Xi + \T{\Xi})/2} \le \frnorm{\T{Q'}M\T{M}Q' - (\Xi+\T{\Xi})/2} \le (3\epsilon_{\circ}\opnorm{M}^2)\sqrt{t}.
	\end{equation*}
	Finally, by the triangle inequality we obtain that $\frnorm{\T{Q'}M\T{M}Q' - \tilde{Apx}} = \frnorm{\T{Q'}M\T{M}Q' - \text{psd}((\Xi + \T{\Xi})/2)} \le 6\epsilon_{\circ}\opnorm{M}^2\sqrt{t}$. The time required to compute matrix $\Xi$ is $2tT(\epsilon_{\circ}) + nt^2$ and $\text{psd}((\Xi + \T{\Xi})/2)$ is $O(t^3)$. Thus, the matrix $\tilde{\Apx}$ can be computed in time $O(2tT(\epsilon_{\circ}) + t^3)$.
\end{proof}
\begin{proof}[Proof of Theorem~\ref{thm:main-theorem-krylov}]
Let $q = O((1/\sqrt{\varepsilon})\log(d/\varepsilon))$. Algorithm~\ref{alg:musco-musco-adaptation} computes the Krylov subspace $K'$ with
\begin{equation*}
	\epsilon_{\circ} = \frac{\varepsilon}{16\kappa^{2+5q}k^{7}C^q}
\end{equation*}
for an absolute constant $C$. Let $Y_1$ be an orthonormal basis for $p(M)G$ and $Y$ be an orthonormal basis for the matrix $\Apx$ (defined in \eqref{eqn:apx-matrix-definition}). Then by \eqref{eqn:diff-pM-apx} we have that $\opnorm{Y\T{Y} - Y_1\T{Y_1}} \le \varepsilon/(16\kappa^2\sqrt{k})$. If $m \le k$ is the largest integer such that $\sigma_{m}(M) \ge (1+\varepsilon)\sigma_{k+1}(M)$, by Lemma~\ref{lma:existence-of-good-space-inside-k-prime}, there exists a $d$ dimensional subspace $Y^m$ inside the column span of $K'$ such that
\begin{equation*}
	\frnorm{M_m}^2 -\frnorm{Y^m\T{(Y^m)}A}^2 \le (3\varepsilon/4)\sigma_{k+1}^2.
\end{equation*}
If $\Xi$ is now computed with $\varepsilon_o = \varepsilon^2/(48\kappa^2(\sqrt{qk})k)$, then by Lemma~\ref{lma:approximating-qtmmtq},
\begin{equation*}
	\frnorm{\T{Q'}M\T{M}Q' - \tilde{\Apx}} \le \frac{\varepsilon}{8k}\sigma_{k+1}^2.
\end{equation*}
Now if $Z_k$ denotes the first $k$ singular vectors of the matrix $\tilde{Apx}$, and $Z_m$ denotes the first $m$ columns of $Z_k$, then by Lemma~\ref{lma:singular-vectors-with-approximate-matrix}, we get that
\begin{align*}
	\frnorm{M - (Q'Z_m)\T{(Q'Z_m)}M}^2 &\le \frnorm{M - Q'(\T{Q'}M)_m}^2 + 2m(\frac{\varepsilon^2}{8k}\sigma_{k+1}^2)\\
	&\le \frnorm{M - Q'(\T{Q'}M)_m}^2 + \frac{\varepsilon^2}{4}\sigma_{k+1}^2.
\end{align*}
Finally, by Lemma~\ref{lma:approximation-with-approximate-subspace}, we obtain that
\begin{equation*}
	\opnorm{M - (Q'Z_m)\T{(Q'Z_m)}M} \le (1+3\varepsilon/2)\sigma_{k+1} + \sqrt{(\varepsilon^2/4)\sigma_{k+1}^2} \le (1+2\varepsilon)\sigma_{k+1}.
\end{equation*}
Also $\opnorm{M - (Q'Z_k)\T{(Q'Z_k)}M} \le \opnorm{M - (Q'Z_m)\T{(Q'Z_m)}M} \le (1+2\varepsilon)\sigma_{k+1}(M)$ since $Q'Z_k$ has orthonormal columns and $\text{colspan}(Q'Z_k) \supseteq \text{colspan}(Q'Z_m)$.
Thus in time
\begin{equation*}
	T\left(\frac{\varepsilon}{\kappa^{5q}k^{7}C^q}\right)qk + T\left(\frac{\varepsilon^2}{48\kappa^2(\sqrt{qk})k}\right)qk, 
\end{equation*}
Algorithm~\ref{alg:musco-musco-adaptation} computes a $1+2\varepsilon$ approximation. Scaling the value of $\varepsilon$ gives us the result. If the approximations $M \circ v$ are spanned by the column space of $M$ for all vectors $v$, then the columns of $K'$ are spanned by the matrix $M$. Thus the columns of $Q'$ are also spanned by $M$,  which implies that the columns of the matrix $Q'Z_m$ are spanned by $M$.
\end{proof}
\section{Omitted Proofs in Section~\ref{sec:final-proof}}
\subsection{Proof of Lemma~\ref{lma:apporximate-subspace-lra-rra}}\label{subsec:lma:apporximate-subspace-lra-rra}
\begin{proof}
	Define $Z := U^T\tilde{Z}$. We have
\begin{align*}
	1+\varepsilon &\ge \opnorm{AA^+ B(\beta^2I - \Delta)^{-1/2} - \tilde{Z}\T{\tilde{Z}}AA^+ B(\beta^2I - \Delta)^{-1/2}}\\
	&\ge \opnorm{U\T{U}B(\beta^2I - \Delta)^{-1/2} - \tilde{Z}\T{\tilde{Z}}U\T{U}B(\beta^2I - \Delta)^{-1/2}}\\
	&\ge \opnorm{U\T{U}B(\beta^2I - \Delta)^{-1/2} - U\T{U}\tilde{Z}\T{\tilde{Z}}U\T{U}B(\beta^2I - \Delta)^{-1/2}}\\
	&= \opnorm{U\T{U}B(\beta^2I - \Delta)^{-1/2} - UZ\T{Z}\T{U}B(\beta^2I - \Delta)^{-1/2}}\\
	&= \opnorm{\T{U}B(\beta^2I - \Delta)^{-1/2} - Z\T{Z}\T{U}B(\beta^2I - \Delta)^{-1/2}}
\end{align*}
which implies using Lemma~\ref{lma:approximation-to-approximation} that $UZ = U\T{U}\tilde{Z} = AA^+ \tilde{Z}$ is a good space to project the columns of $B$ onto, i.e.,
\begin{equation*}
	\opnorm{(AA^+ \tilde{Z})(AA^+ \tilde{Z})^{+}B - B} \le (1+\epsilon)\beta.
\end{equation*}
\end{proof}
\subsection{Proof of Lemma~\ref{lma:replacing-neg-square-root}}\label{subsec:lma:replacing-neg-square-root}
\paragraph{Polynomial Approximation of \texorpdfstring{$(1-x)^{-1/2}$}{(1-x) raised to -0.5}.}
We want to obtain a polynomial $p(x)$ such that $|p(x) - (1-x)^{-1/2}| \le \delta$ in the interval $x \in [0,1/(1+\varepsilon)]$. Consider the Taylor expansion of $(1-x)^{-1/2}$:
\begin{equation*}
	(1-x)^{-1/2} = \sum_{j=0}^\infty \frac{(2j)!}{2^{2j}j!^2}x^j.
\end{equation*}
The above series converges for all $|x| < 1$. Let $q(x)$ be the Taylor series up to $T$ terms. Then for $1 > x \ge 0$, we have  $0 \le q(x) \le (1-x)^{-1/2}$ and for $0 \le x \le 1/(1+\varepsilon)$
\begin{equation*}
	(1-x)^{-1/2} - q(x) = \sum_{j=T}^\infty \frac{(2j)!}{2^{2j}j!^2}x^j \le \sum_{j=T}^{\infty}x^j = \frac{x^T}{1-x} \le \frac{(1+\varepsilon)}{\varepsilon(1+\varepsilon)^T} = \frac{1}{\varepsilon(1+\varepsilon)^{T-1}}.
\end{equation*}
Thus, if $T-1 \ge 4\log(1/(\varepsilon\delta))/\varepsilon \ge \log(1/\varepsilon\delta)/\log(1+\varepsilon)$, we have $(1+\varepsilon)^{T-1} \ge 1/\varepsilon\delta$ which implies that
\begin{equation*}
	0 \le (1-x)^{-1/2} - q(x) \le \delta
\end{equation*}
for all $0 \le x \le 1/(1+\varepsilon)$. So, there is a degree $t = O(\log(1/\varepsilon\delta)/\varepsilon)$ polynomial that uniformly approximates $(1-x)^{-1/2}$ up to an error $\delta$ in the interval $[0,1/(1+\varepsilon)]$. Now, we further approximate the degree $t$ polynomial $q(x)$ with a degree $\tilde{O}(\sqrt{t})$ polynomial.

First we have the following theorem.
\begin{theorem}[Theorem~3.3 of \cite{sachdeva-vishnoi}]
	For any positive integers $s$ and $d$, there is a degree $d$ polynomial $p_{s,d}(x)$ that satisfies
	\begin{equation*}
		\sup_{x \in [-1,1]}|p_{s,d}(x) - x^s| \le 2e^{-d^2/2s}.
	\end{equation*}
	Further, this polynomial $p_{s,d}$ is defined as follows
	\begin{equation*}
		p_{s,d}(x) = \E_{Y_1,\ldots,Y_s}[T_{|D|}(x)\mathbb{I}[|D| \le d]]
	\end{equation*}
	where $Y_1,\ldots,Y_s$ are independent Rademacher random variables, $D = \sum_{i=1}^s Y_i$ and $\mathbb{I}$ denotes the indicator function.
\end{theorem}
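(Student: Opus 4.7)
The plan is to prove the identity $x^s = \E[T_{|D|}(x)]$ first and then bound the truncation error by a Chernoff-type tail bound on the Rademacher sum $D$. The starting point will be the classical identity $T_k(\cos\theta) = \cos(k\theta)$, which lets me reduce the problem to trigonometry. Writing $x = \cos\theta$ and expanding via De Moivre / the binomial theorem, I would compute
\begin{equation*}
    x^s = \cos^s\theta = 2^{-s}\bigl(e^{i\theta}+e^{-i\theta}\bigr)^s = 2^{-s}\sum_{k=0}^{s}\binom{s}{k}\cos((s-2k)\theta) = 2^{-s}\sum_{k=0}^s \binom{s}{k} T_{|s-2k|}(x).
\end{equation*}
Interpreting the coefficients probabilistically, if $Y_1,\ldots,Y_s$ are independent Rademacher random variables and $D=\sum_i Y_i$, then $D=s-2k$ exactly when $k$ of the $Y_i$ equal $-1$, which occurs with probability $\binom{s}{k}2^{-s}$. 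Hence $x^s = \E[T_{|D|}(x)]$, which is the identity that underlies the definition of $p_{s,d}$.

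Next I would split this expectation along the event $\{|D|\le d\}$ and its complement. By definition $p_{s,d}(x) = \E[T_{|D|}(x)\mathbb{I}[|D|\le d]]$ is a polynomial of degree at most $d$ (the indicator forces each Chebyshev factor to have degree $\le d$, and polynomials are closed under finite linear combinations), so
\begin{equation*}
    x^s - p_{s,d}(x) = \E\bigl[T_{|D|}(x)\,\mathbb{I}[|D| > d]\bigr].
\end{equation*}
For $x \in [-1,1]$ the standard bound $|T_k(x)| \le 1$ for all $k$ yields $\sup_{x\in[-1,1]}|x^s - p_{s,d}(x)| \le \Pr[|D| > d]$.

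Finally, I would invoke a Chernoff/Hoeffding tail bound for the sum of $s$ independent Rademacher variables, which gives $\Pr[D > d] \le e^{-d^2/(2s)}$ and symmetrically $\Pr[D < -d] \le e^{-d^2/(2s)}$; combining these via a union bound yields $\Pr[|D|>d] \le 2e^{-d^2/(2s)}$, completing the argument. The only non-routine step is recognizing the binomial identity for $\cos^s\theta$ as an expectation over Rademachers; once that bridge is built, everything else is either a coefficient-counting check (that $p_{s,d}$ really is a polynomial of degree $\le d$) or a textbook concentration inequality, so I do not anticipate a genuine obstacle.
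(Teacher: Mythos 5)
Your proof is correct. The paper itself does not prove this statement---it is imported verbatim as Theorem~3.3 of Sachdeva--Vishnoi---and your argument (the identity $x^s = \E[T_{|D|}(x)]$ via the binomial expansion of $\cos^s\theta$, truncation using $|T_k(x)|\le 1$ on $[-1,1]$, and a Hoeffding bound on the Rademacher sum) is exactly the standard proof given in that reference.
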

Clearly the polynomial $p_{s,d}$ is defined as a weighted linear combination of Chebyshev polynomials of various degrees at most $d$. With $d = \sqrt{2s\log(1/\delta)}$, we have that
\begin{equation*}
	\sup_{x \in [-1,1]}|p_{s,d}(x) - x^s| \le 2e^{^{-\log(1/\delta)}} \le 2\delta.
\end{equation*}
Thus, given an arbitrary degree $t$ polynomial $q(x) = \sum_{i=0}^t q_ix^i$, where $q_0,\ldots,q_t$ are the coefficients of the polynomial, then the degree $d$ polynomial $ {r}(x) = \sum_{i=0}^t q_i p_{i,d}(x)$ with $d = \sqrt{2t\log(1/\delta)}$ satisfies
\begin{align*}
	\sup_{x \in [-1,1]} |q(x) -  {r}(x)| &= \sup_{x \in [-1,1]}|\sum_{i=0}^tq_ix^i - \sum_{i=0}^tq_ip_{i,d}(x)|\\
	&\le \sup_{x \in [-1,1]}\sum_{i=0}^t|q_i||x^i - p_{i,d}(x)|\\
	&\le \sup_{x \in [-1,1]}\sum_{i=0}^t|q_i|2\delta\\
	&= 2\|q\|_1\delta.
\end{align*}
We now bound $\|r\|_1$. We have 
\begin{align*}
\|r\|_1 = \|\sum_i q_ip_{i,d}(x)\| &\le \sum_i |q_i|\|p_{i,d}(x)\|_1\\
& = \sum_i |q_i|\|\E_{Y_1,\ldots,Y_s}[T_{|D|}(x)\mathbb{I}[|D| \le d]]\|_1\\
&\le \sum_i |q_i|\E_{Y_1,\ldots,Y_s}[\|T_{|D|}(x)\mathbb{I}[|D| \le d]]\|_1]\\
&\le \sum_i |q_i|\frac{1}{2}(1+\sqrt{2})^d = \frac12(1+\sqrt{2})^d\|q\|_1.
\end{align*}
Here we use the fact that $\| \cdot \|_1$ is convex over polynomials and that the sum of absolute values of coefficients of a Chebyshev polynomial of degree $d$ is bounded by $(1+\sqrt{2})^d$. Thus we have the following lemma.
\begin{lemma}
	Given any polynomial $q(x)$ of degree $t$, there exists a polynomial $ {r}(x)$ of degree $d = \sqrt{2t\log(2\|q\|_1/\delta)}$ such that
	\begin{equation*}
		\sup_{x \in [-1,1]} |q(x) -  {r}(x)| \le \delta
	\end{equation*}
	and $\|r\|_1 \le (1+\sqrt{2})^{d}\|q\|_1$.
\end{lemma}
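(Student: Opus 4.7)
My approach is to construct $r(x)$ by replacing each monomial $x^i$ appearing in the coefficient expansion $q(x) = \sum_{i=0}^{t} q_i x^i$ with the low-degree Chebyshev-based surrogate $p_{i,d}$ supplied by Theorem~3.3 of \cite{sachdeva-vishnoi}, then verifying the uniform error bound and the coefficient $\ell_1$ bound separately.

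First I would fix the degree $d := \sqrt{2t \log(2\|q\|_1/\delta)}$ and invoke Theorem~3.3 with parameter $s = i$ for each $i \le t$. This yields a degree-$d$ polynomial $p_{i,d}$ satisfying $\sup_{x \in [-1,1]} |p_{i,d}(x) - x^i| \le 2 e^{-d^2/(2i)}$. Since $i \le t$ makes this bound monotone in $i$, it is uniformly at most $2 e^{-d^2/(2t)} = 2 e^{-\log(2\|q\|_1/\delta)} = \delta/\|q\|_1$ by the chosen value of $d$.

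Next I would define $r(x) := \sum_{i=0}^{t} q_i\, p_{i,d}(x)$, which has degree at most $d$ by construction. For the uniform error bound, a single application of the triangle inequality gives
$$\sup_{x \in [-1,1]} |q(x) - r(x)| \le \sum_{i=0}^{t} |q_i| \cdot \sup_{x \in [-1,1]} |x^i - p_{i,d}(x)| \le \|q\|_1 \cdot (\delta/\|q\|_1) = \delta.$$
For the coefficient bound, I would use the fact from Theorem~3.3 that $p_{i,d}(x) = \mathbb{E}[T_{|D|}(x)\,\mathbb{I}[|D| \le d]]$ is an expectation over Chebyshev polynomials of degree at most $d$. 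Since $\|\cdot\|_1$ is a norm on polynomials (viewed as coefficient vectors of bounded degree), it is convex and therefore passes through the expectation. Combined with the standard fact $\|T_j\|_1 \le (1+\sqrt{2})^j \le (1+\sqrt{2})^d$ for all $j \le d$, this gives $\|p_{i,d}\|_1 \le (1+\sqrt{2})^d$. A final triangle inequality on $r = \sum_i q_i p_{i,d}$ then yields $\|r\|_1 \le \sum_i |q_i| (1+\sqrt{2})^d = (1+\sqrt{2})^d \|q\|_1$.

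The only delicate point is the balancing of $d$: to keep $d = \tilde{O}(\sqrt{t})$ while forcing the per-monomial error $2 e^{-d^2/(2t)}$ to be small enough to absorb the $\|q\|_1$ factor that appears after distributing the triangle inequality across coefficients, one must set $d^2/(2t) \ge \log(2\|q\|_1/\delta)$, which is precisely the stated choice. Once Theorem~3.3 is in hand, the remainder of the argument is essentially bookkeeping.
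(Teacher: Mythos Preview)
Your proposal is correct and follows essentially the same approach as the paper: replace each monomial $x^i$ by the Chebyshev-based surrogate $p_{i,d}$ from Theorem~3.3 of \cite{sachdeva-vishnoi}, bound the uniform error via the triangle inequality and monotonicity in $i$, and bound $\|r\|_1$ using convexity of the coefficient norm together with $\|T_j\|_1 \le (1+\sqrt{2})^d$. The only cosmetic difference is that the paper first derives the bound with a generic $\delta$ and then substitutes, whereas you plug in the final choice of $d$ from the outset.
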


We already saw that the polynomial $q(x) = \sum_{j=0}^t\frac{(2j)!}{2^{2j}j!^2}x^j$ satisfies $|q(x) - (1-x)^{-1/2}| \le \delta$ for $x \in [0,1/(1+\varepsilon)]$ if $t = O(\log(1/\varepsilon\delta)/\varepsilon)$. We also have $\|q\|_1 = \sum_{j=0}^t|(2j)!/(2^{2j}(j!)^2)| \le t+1$. Thus by the above lemma, we can compute a polynomial $ {r}(x)$ of degree $d = O(\sqrt{t\log(t/\delta)}) = O(\frac{1}{\sqrt{\varepsilon}}\log(1/\varepsilon\delta))$ such that
\begin{equation*}
	\sup_{x \in [0,1/(1+\varepsilon)]} | {r}(x) - (1-x)^{-1/2}| \le \sup_{x \in [0,1/(1+\varepsilon)]}|(1-x)^{-1/2} - q(x)| + \sup_{x \in [-1,1]}|q(x) -  {r}(x)| \le 2\delta
\end{equation*}
and we also have $\|r\|_1 =  O((1+\sqrt{2})^dt) = O((1+\sqrt{2})^{O(\sqrt{1/\varepsilon}\log(1/\varepsilon\delta))}\log(1/\varepsilon\delta)/\varepsilon)$. We summarize this in the following lemma.
\begin{lemma}
	Given $\varepsilon, \delta > 0$, there exists a polynomial $ {r}(x)$ of degree $O(\frac{1}{\sqrt{\varepsilon}}\log(1/\varepsilon\delta))$ and $\|r\|_1 =O((1+\sqrt{2})^{O(\sqrt{1/\varepsilon}\log(1/\varepsilon\delta))}\log(1/\varepsilon\delta)/\varepsilon)$ such that
	\begin{equation*}
		\sup_{x \in [0,1/(1+\varepsilon)]}| {r}(x) - (1-x)^{-1/2}| \le \delta.
	\end{equation*}
	\label{lma:poly-approximation-inverse-square-root}
\end{lemma}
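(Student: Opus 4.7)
The plan is to apply the polynomial approximation of $(1-x)^{-1/2}$ from Lemma~\ref{lma:poly-approximation-inverse-square-root} to the matrix $\Delta/\beta^2$, and then transfer the resulting matrix approximation through the outer $AA^+ B$ factor. The key observation is that
\[
(\beta^2 I - \Delta)^{-1/2} = \beta^{-1}(I - \Delta/\beta^2)^{-1/2},
\]
so $AA^+ B(\beta^2 I - \Delta)^{-1/2} = (1/\beta)\,AA^+ B\,(I - \Delta/\beta^2)^{-1/2}$, which makes the matrix $AA^+ BM/\beta$ the natural polynomial proxy once $r(x)$ approximates $(1-x)^{-1/2}$.

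First I would bound the spectrum of $\Delta/\beta^2$. Since $\Delta = \T{B}(I-AA^+)B$ is positive semidefinite with $\opnorm{\Delta} = \opnorm{(I-AA^+)B}^2 \le \opt^2$, and $\beta \ge (1+\varepsilon)\opt$, every eigenvalue of $\Delta/\beta^2$ lies in $[0, 1/(1+\varepsilon)^2] \subseteq [0, 1/(1+\varepsilon)]$. This is exactly the interval for which Lemma~\ref{lma:poly-approximation-inverse-square-root} applies. Choosing the target approximation accuracy to be $\delta := c\,\varepsilon/\kappa$ for a small absolute constant $c$, Lemma~\ref{lma:poly-approximation-inverse-square-root} produces a polynomial $r(x)$ of degree $t = O\bigl((1/\sqrt{\varepsilon})\log(\kappa/\varepsilon)\bigr)$ with $\|r\|_1 = O\bigl((1+\sqrt{2})^{O(\sqrt{1/\varepsilon}\log(\kappa/\varepsilon))}\log(\kappa/\varepsilon)/\varepsilon\bigr)$ such that $\sup_{x \in [0,1/(1+\varepsilon)]}|r(x) - (1-x)^{-1/2}| \le \delta$. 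This is exactly the degree and $\|r\|_1$ bound required.

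Setting $M := r(\Delta/\beta^2)$, the functional-calculus bound on polynomials of symmetric matrices gives $\opnorm{M - (I - \Delta/\beta^2)^{-1/2}} \le \delta$. Since $(1-x)^{-1/2}$ is bounded above by $\sqrt{(1+\varepsilon)/\varepsilon}$ and below by $1$ on $[0, 1/(1+\varepsilon)]$, taking $\delta \le 1/2$ immediately gives $\opnorm{M} \le \sqrt{(1+\varepsilon)/\varepsilon} + \delta \le 2/\sqrt{\varepsilon}$ and $\sigma_{\min}(M) \ge 1 - \delta \ge 1/2$. Multiplying the matrix approximation on the left by $AA^+B/\beta$ and bounding operator norms yields
\[
\opnorm{AA^+ B(\beta^2 I - \Delta)^{-1/2} - AA^+ BM/\beta} \;\le\; \frac{\opnorm{AA^+ B}}{\beta}\,\delta \;\le\; \frac{\opnorm{B}}{\beta}\,\delta \;\le\; \kappa\,\delta \;\le\; c\,\varepsilon,
\]
where I used $\beta \ge \sigma_{k+1}(B)$ and $\opnorm{B}/\sigma_{k+1}(B) \le \kappa$.

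Finally, I would transfer the low-rank guarantee by a triangle inequality. Write $N := AA^+ B(\beta^2 I - \Delta)^{-1/2}$ and $N' := AA^+ BM/\beta$, so $\opnorm{N - N'} \le c\varepsilon$ by the previous step. For any matrix $\tilde Z$ with orthonormal columns,
\[
\opnorm{N - \tilde Z\T{\tilde Z}N} \;\le\; \opnorm{N - N'} + \opnorm{N' - \tilde Z\T{\tilde Z}N'} + \opnorm{\tilde Z\T{\tilde Z}}\,\opnorm{N' - N} \;\le\; 2c\varepsilon + (1+\varepsilon),
\]
which is $1 + O(\varepsilon)$ after choosing $c$ appropriately. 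The main technical care-point will be pinning down what $\kappa$ means in the statement (it is the condition number controlling $\opnorm{AA^+ B}/\beta$, which is dominated by $\sigma_1(B)/\sigma_{k+1}(B)$) and verifying that the polylogarithmic slack from $\delta = \varepsilon/\kappa$ only inflates the degree and $\|r\|_1$ bound by the $\log(\kappa/\varepsilon)$ factors already stated.
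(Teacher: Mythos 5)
Your proposal does not prove the stated lemma; it proves a different one. The statement you were asked to establish is a purely analytic claim: that $(1-x)^{-1/2}$ admits a polynomial approximation of degree $O(\varepsilon^{-1/2}\log(1/\varepsilon\delta))$ with the stated $\|r\|_1$ bound, uniformly on $[0,1/(1+\varepsilon)]$. Your argument opens by \emph{invoking} exactly this lemma ("apply the polynomial approximation of $(1-x)^{-1/2}$ from Lemma~\ref{lma:poly-approximation-inverse-square-root} to the matrix $\Delta/\beta^2$") and then carries out the downstream matrix-level consequences --- bounding the spectrum of $\Delta/\beta^2$, transferring the scalar bound to $\opnorm{r(\Delta/\beta^2)-(I-\Delta/\beta^2)^{-1/2}}$, and propagating the error through $AA^+B/\beta$ and the projection $\tilde Z\T{\tilde Z}$. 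That is essentially the content of Lemma~\ref{lma:replacing-neg-square-root} (together with Lemma~\ref{lma:matrix-approximation-lemma}), not of the lemma in question. As written, the argument is circular with respect to the target statement.

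The actual content you need to supply has two steps, neither of which appears in your proposal. First, truncate the Taylor series $(1-x)^{-1/2}=\sum_{j\ge 0}\frac{(2j)!}{2^{2j}(j!)^2}x^j$ after $T$ terms: since each coefficient is at most $1$ and $x\le 1/(1+\varepsilon)$, the tail is at most $x^T/(1-x)\le 1/(\varepsilon(1+\varepsilon)^{T-1})$, so $T=O(\log(1/\varepsilon\delta)/\varepsilon)$ suffices for error $\delta$; this gives a polynomial $q$ of degree $O(1/\varepsilon\cdot\log(1/\varepsilon\delta))$ with $\|q\|_1\le T+1$. Second --- and this is the step that produces the claimed $\varepsilon^{-1/2}$ dependence --- compress the degree by replacing each monomial $x^s$ in $q$ with the Sachdeva--Vishnoi polynomial $p_{s,d}$ of degree $d=O(\sqrt{T\log(\|q\|_1/\delta)})$ satisfying $\sup_{x\in[-1,1]}|p_{s,d}(x)-x^s|\le 2e^{-d^2/2s}$, and track $\|r\|_1$ through the $(1+\sqrt 2)^d$ bound on the coefficient $\ell_1$-norm of Chebyshev polynomials. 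Without this degree-reduction step there is no route to a degree-$O(\varepsilon^{-1/2}\log(1/\varepsilon\delta))$ polynomial: the Taylor truncation alone only gives degree $\Theta(1/\varepsilon)$ (up to the log factor), and the final running-time dependence of the whole algorithm on $\varepsilon^{-3/2}$ rather than $\varepsilon^{-2}$ hinges on it.
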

\begin{lemma}[Matrix Approximation Lemma]
If $A \in \R^{n \times n}$ is a positive semidefinite matrix with $\lambda_{\max}(A) < 1$ and if $ {r}(x)$ is a polynomial such that 
\begin{equation*}
	\sup_{x \in [0,\lambda_{\max}(A)]}| {r}(x) - (1-x)^{-1/2}| \le \delta, 
\end{equation*}
then $\opnorm{ {r}(A) - (I-A)^{-1/2}} \le \delta$.
\label{lma:matrix-approximation-lemma}
\end{lemma}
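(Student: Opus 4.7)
The plan is to prove the Matrix Approximation Lemma via the spectral decomposition of $A$, reducing the matrix statement to a pointwise statement about the scalar function $r(x) - (1-x)^{-1/2}$ on the spectrum of $A$.

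First I would use the fact that $A$ is symmetric (positive semidefinite) to write $A = V \Lambda V^{\mathsf{T}}$ with $V$ orthogonal and $\Lambda = \operatorname{diag}(\lambda_1, \ldots, \lambda_n)$, where each $\lambda_i \in [0, \lambda_{\max}(A)] \subseteq [0,1)$. The strict inequality $\lambda_{\max}(A) < 1$ guarantees that $I - A$ is positive definite, so $(I-A)^{-1/2}$ is well-defined via the usual functional calculus: $(I-A)^{-1/2} = V \operatorname{diag}((1-\lambda_1)^{-1/2}, \ldots, (1-\lambda_n)^{-1/2}) V^{\mathsf{T}}$. At the same time, using the convention for polynomials in a matrix introduced in the Preliminaries, $r(A) = V \operatorname{diag}(r(\lambda_1), \ldots, r(\lambda_n)) V^{\mathsf{T}}$, since $V^{\mathsf{T}} V = I$ makes the monomials $A^j$ equal to $V \Lambda^j V^{\mathsf{T}}$.

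Next, since both matrices are simultaneously diagonalizable in the basis $V$,
\begin{equation*}
r(A) - (I-A)^{-1/2} = V \operatorname{diag}\bigl(r(\lambda_1) - (1-\lambda_1)^{-1/2}, \ldots, r(\lambda_n) - (1-\lambda_n)^{-1/2}\bigr) V^{\mathsf{T}}.
\end{equation*}
Because orthogonal conjugation preserves the operator norm, the operator norm of the left-hand side equals the maximum absolute diagonal entry on the right-hand side, namely $\max_{i} |r(\lambda_i) - (1-\lambda_i)^{-1/2}|$.

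Finally, since every $\lambda_i$ lies in $[0, \lambda_{\max}(A)]$, the hypothesis $\sup_{x \in [0,\lambda_{\max}(A)]} |r(x) - (1-x)^{-1/2}| \le \delta$ immediately yields $|r(\lambda_i) - (1-\lambda_i)^{-1/2}| \le \delta$ for every $i$, and therefore $\opnorm{r(A) - (I-A)^{-1/2}} \le \delta$. There is essentially no obstacle here: the only subtlety is verifying that the polynomial convention for $r(A)$ as defined in the preliminaries coincides with the spectral-calculus definition when $A$ is symmetric, which is immediate from the SVD of a symmetric PSD matrix coinciding with its eigendecomposition.
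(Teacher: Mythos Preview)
Your proposal is correct and essentially identical to the paper's own proof: both diagonalize $A$ via its eigendecomposition, reduce $r(A)-(I-A)^{-1/2}$ to a diagonal matrix with entries $r(\lambda_i)-(1-\lambda_i)^{-1/2}$, and bound the operator norm by the scalar supremum hypothesis.
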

\begin{proof}
	Let $A = VD\T{V}$ be the eigenvalue decomposition of $D$ with $D = \text{diag}(\lambda_1,\ldots,\lambda_n)$ where $\lambda_{\max} = \lambda_1\ge \ldots \ge \lambda_n \ge 0$. Then $(I-A)^{-1/2} = V(I-D)^{-1/2}\T{V}$ and $ {r}(A) = V {r}(D)\T{V}$. Therefore
	\begin{align*}
		\opnorm{ {r}(A) - (I-A)^{-1/2}} &= \opnorm{V((I-D)^{-1/2} -  {r}(D))\T{V}}\\
		&= \opnorm{(I-D)^{-1/2} -  {r}(D)}\\
		&= \max_i |(1-\lambda_i)^{-1/2} -  {r}(\lambda_i)|\\
		&\le \sup_{x \in [0,\lambda_{\max}(A)]}|(1 - x)^{-1/2} -  {r}(x)| \le \delta.
	\end{align*}
	Here we use the fact that $0 \le \lambda_1, \ldots, \lambda_n \le \lambda_{\max}(A)$.
\end{proof}
As $\Delta$ is a positive semidefinite matrix such that $\beta^2 \ge (1+\varepsilon)\opnorm{\Delta}$, then $\opnorm{\Delta/\beta^2} \le 1/(1+\varepsilon)$ and hence we can compute a polynomial $ {r}(x)$ of degree $O(\frac{1}{\sqrt{\varepsilon}}\log(1/\varepsilon\delta))$ such that \[\opnorm{ {r}(\Delta/\beta^2) - (I-\Delta/\beta^2)^{-1/2}} \le \delta.\]
\paragraph{Modified Problem.}
Instead of considering the matrix $\mathcal{M} = AA^+ B(\beta^2I - \Delta)^{-1/2}$ for low rank approximation, we consider the matrix $\mathcal{M}' = AA^+ BM/\beta$ for $M= {r}(\Delta/\beta^2)$, where $ {r}(x)$ is a low degree polynomial, and argue that a $(1+\varepsilon)$-approximate LRA solution for the matrix $\mathcal{M'}$ is a $1+2\varepsilon$ approximation for the LRA problem on matrix $\mathcal{M}$.
\begin{proof}[Proof of Lemma~\ref{lma:replacing-neg-square-root}]
Recall  $\Delta = \T{B}(I-AA^+)B$, and therefore $\opnorm{\Delta} = \opnorm{(I-AA^+)B}^2$. Given that $\beta \ge (1+\varepsilon)\max(\opnorm{(I-AA^+)B},\sigma_{k+1}(B))$, we have $\beta^2 \ge (1+\varepsilon)^2\opnorm{\Delta} \ge (1+\varepsilon)\opnorm{\Delta}$. Thus, $\opnorm{\Delta/\beta^2} \le 1/(1+\varepsilon)$. 

As $\opnorm{\Delta/\beta^2} \le 1/(1+\varepsilon)$, we approximate $(I-\Delta/\beta^2)^{-1/2}$ with the matrix $ M = {r}(\Delta/\beta^2)$ where $ {r}(x) = \sum_{i=0}^t r_ix^i$ is a polynomial of degree $t = O(\frac{1}{\sqrt{\varepsilon}}\log(\frac{1}{\varepsilon\delta}))$ given by Lemma~\ref{lma:poly-approximation-inverse-square-root}. By Lemma~\ref{lma:matrix-approximation-lemma}, the matrix $ {r}(\Delta/\beta^2) = \sum_{i=0}^t r_i(\Delta/\beta^2)^i$ satisfies
\begin{align*}
    \opnorm{(I - \Delta/\beta^2)^{-1/2} - M} &= \opnorm{(I - \Delta/\beta^2)^{-1/2} -  {r}(\Delta/\beta^2)}\\
     & = \opnorm{(I-\Delta/\beta^2)^{-1/2} - \sum_{i=0}^t r_i \left(\frac{\Delta}{\beta^2}\right)^i}\le \delta.
\end{align*}
As $\opnorm{\Delta/\beta^2} \le 1/(1+\varepsilon)$ and $\Delta/\beta^2$ is a positive semidefinite matrix, we have $\sigma_{\max}(I- \Delta/\beta^2) \le 1$ and $\sigma_{\min}(I - \Delta/\beta^2) \ge \varepsilon/(1+\varepsilon)$. Therefore $\sigma_{\max}((I-\Delta/\beta^2)^{-1/2}) \le \sqrt{{(1+\varepsilon)}/{\varepsilon}}$ and $\sigma_{\min}((I-\Delta/\beta^2)^{-1/2}) \ge 1$. By Weyl's inequality, we obtain that 
\begin{equation*}
\sigma_{\max}(M) \le \sqrt{(1+\varepsilon)/\varepsilon} + \delta\quad \text{and}\quad \sigma_{\min}(M) \ge 1 - \delta.
\end{equation*}
By sub-multiplicativity of the spectral norm
\begin{align*}
	\opnorm{AA^+ B(\beta^2I - \Delta)^{-1/2}-\frac{AA^+ BM}{\beta}} &\le \frac{\opnorm{AA^+ B}}{\beta}\opnorm{(I-(\Delta/\beta^2))^{-1/2} - M}\\
	&\le \frac{\opnorm{AA^+ B}}{\beta}\delta.
\end{align*}
Using Weyl's inequality, we obtain
\begin{equation}
	\sigma_{k+1}\left(\frac{AA^+ BM}{\beta}\right) \le \sigma_{k+1}(AA^+ B(\beta^2I - \Delta)^{-1/2}) + \frac{\opnorm{AA^+ B}}{\beta}\delta \le 1 + \frac{\opnorm{AA^+ B}}{\beta}\delta.
	\label{eqn:weyls-inequality-conclusion-k-plus-one}
\end{equation}
The last inequality follows as there exists a rank $k$ matrix with $\opnorm{AX-B} \le \beta$. If we can now find a rank $k$ matrix $Z$ with orthonormal columns such that
\begin{equation}
	\opnorm{Z\T{Z}\frac{AA^+ BM}{\beta} - \frac{AA^+ BM}{\beta}} \le (1+\varepsilon)\sigma_{k+1}\left(\frac{AA^+ BM}{\beta}\right),
	\label{eqn:guarantee-we-show}
\end{equation}
then
\begin{align*}
	&\opnorm{Z\T{Z}AA^+ B(\beta^2I - \Delta)^{-1/2} - AA^+ B(\beta^2I - \Delta)^{-1/2}}\\
	&\le \opnorm{Z\T{Z}\frac{AA^+ BM}{\beta} - \frac{AA^+ BM}{\beta}} + \opnorm{(I-Z\T{Z})\left(\frac{AA^+ BM}{\beta} - AA^+ B(\beta^2I - \Delta)^{-1/2}\right)}\\
	&\le (1+\varepsilon)\sigma_{k+1}\left(\frac{AA^+ BM}{\beta}\right) + \frac{\opnorm{AA^+ B}}{\beta}\delta\\
	&\le (1+\varepsilon)(1+2\opnorm{AA^+ B}(\delta/\beta)). 
\end{align*}
The last inequality follows from \eqref{eqn:weyls-inequality-conclusion-k-plus-one}. If $\delta$ is chosen to be less than $\varepsilon/4\kappa$ where $\kappa = \sigma_1(B)/\sigma_{k+1}(B)$, then
\begin{align*}
	&\opnorm{Z\T{Z}AA^+ B(\beta^2I - \Delta)^{-1/2} - AA^+ B(\beta^2I - \Delta)^{-1/2}} \\
	&\le (1+\varepsilon)\left(1+2\opnorm{AA^+ B}(\delta/\beta)\right)\\
	&\le (1+\varepsilon)\left(1 + 2 \frac{\opnorm{AA^+ B}}{\beta}\frac{\varepsilon\sigma_{k+1}(B)}{4\sigma_1(B)}\right)\\
	&\le 1+2\varepsilon
\end{align*}
as $\opnorm{AA^+ B} \le \sigma_1(B)$ and $\beta \ge (1+\varepsilon)\sigma_{k+1}(B)$.
This implies that if $(1-x)^{-1/2}$ is approximated by a polynomial $ {r}(x)$ uniformly in the interval $[0,1/(1+\varepsilon)]$ with an error at most $\varepsilon/4\kappa$, and if matrix $Z$ is an orthonormal basis for a space that spans a $1+\varepsilon$ rank $k$ approximation in spectral norm for the matrix $AA^+ B \frac{ {r}(\Delta/\beta^2)}{\beta}$, then
\begin{equation*}
	\opnorm{AA^+ Z(AA^+ Z)^+ B - B} \le (1+6\varepsilon)\beta = (1+O(\varepsilon))\opt.
\end{equation*}
We obtain the proof by appropriately scaling $\varepsilon$.
\end{proof}
\subsection{Proof of Lemma~\ref{lma:oracle-time-complexity}}
\IncMargin{1em}
\begin{algorithm2e}[t]
    \caption{Oracle$_\mathcal{M'}$}
    \label{alg:oracle-M}
    \KwIn{$v \in \R^d$, $\epsilonsub{r} > 0$}
    \KwOut{$y \in \R^n$}
    \DontPrintSemicolon
 	\tcc{Let $ {r}(x)$ be the polynomial as in Lemma~\ref{lma:replacing-neg-square-root}}
	{$t \gets \text{degree}(r)$}\;
	{$\epsilonsub{reg} \gets O\left({\epsilonsub{r}}/{\kappa\|r\|_1}\right)$}\;
	{$y \gets 0$}\;
	{$\Apx_0 \gets v$}\;
	\For{$i=0,\ldots,t$}{
	{$y \gets y + r_i\Apx_i$}\;
	{$\Apx_{i+1} \gets \T{B}B \cdot\Apx_{i} - \T{B} \cdot (\textsc{HighPrecisionRegression}(A,B\cdot \Apx_i,\epsilonsub{reg}))$}\;
	}
	{$y \gets (\text{HighPrecisionRegression}(A,B\cdot y,\epsilonsub{reg}))/\beta$}\;
\end{algorithm2e}
\DecMargin{1em}
\IncMargin{1em}
\begin{algorithm2e}[t]
    \caption{Oracle$_\mathcal{\T{M'}}$}
    \label{alg:oracle-M-transpose}
    \KwIn{$v \in \R^d$, $\epsilonsub{r} > 0$}
    \KwOut{$y \in \R^n$}
    \DontPrintSemicolon
    % \LineComment{This procedure returns $s$ such that $\opt \le s \le \sqrt{3d_B}\opt$}  
    % \Procedure{ApproximateProductTranspose}{$v,\varepsilon$}
 	\tcc{Let $ {r}(x)$ be the polynomial as in Lemma~\ref{lma:replacing-neg-square-root}}
	{$t \gets \text{degree}(r)$}\;
	{$\epsilonsub{reg} \gets O\left({\epsilonsub{r}}/{\kappa\|r\|_1}\right)$}\;
	{$y \gets 0$}\;
	{$\Apx_0 \gets \T{B} \cdot (\textsc{HighPrecisionRegression}(A,v,\epsilonsub{reg}))$}\;
	\For{$i=0,\ldots,t$}{
	    {$y \gets y + r_i\Apx_i$}\;
	    {$\Apx_{i+1} \gets \T{B}B \cdot \Apx_{i} - \T{B} \cdot (\textsc{HighPrecisionRegression}(A,B\cdot \Apx_i,\epsilonsub{reg}))$}\;
	}
	{$y \gets y/\beta$}\;
\end{algorithm2e}
\DecMargin{1em}
Throughout the analysis, we assume $\opnorm{AA^+ B} \ge \varepsilon\opnorm{B}$. Suppose that $\opnorm{AA^+ B} \le \varepsilon\opnorm{B}$. Let $z$ be the top singular vector of matrix $B$. Then
\begin{align*}
	\opnorm{B}^2 &= \opnorm{Bz}^2\\
	&= \opnorm{AA^+ Bz}^2 + \opnorm{(I-AA^+)Bz}^2\\
	&\le \varepsilon^2\opnorm{B}^2 + \opnorm{(I-AA^+ )Bz}^2.
\end{align*}
Thus, $\opnorm{(I-AA^+)B}^2 \ge \opnorm{(I-AA^+)Bz}^2 \ge (1-\varepsilon^2)\opnorm{B}^2$. Therefore $\opt \ge \sqrt{1-\varepsilon^2}\opnorm{B}$ which implies $\opnorm{B} \le (1/\sqrt{1-\varepsilon^2})\opt \le (1+\varepsilon)\opt$ for $\varepsilon \le 1/2$. Thus $\opnorm{A(0) - B} \le (1+\varepsilon)\opt$ and hence we have a trivial $(1+\varepsilon)$-approximate solution. Thus, we can assume $\opnorm{AA^+ B} \ge \varepsilon\opnorm{B}$.
%
%{\color{red} First try to taylor expand the term $(\beta^2I - \Delta)^{-1/2}$ as before. Then try using the trick of running power method on $AA^+ B(\beta^2I - \Delta)^{-1}\T{B}\T{(AA^{+})} = AA^+ B(\beta^2I - \Delta)^{-1}\T{B}{(AA^{+})}$ where conjugate gradient is used to compute product with the inverse. Note that conjugate gradient itself would have some error as we can't compute $\T{B}(I-AA^+)Bv$ exactly for a given vector $v$}.

Based on Theorem~\ref{thm:high-precision-regression}, we compute approximate projections onto the column span of $A$. The following lemma states that a matrix-vector product with the matrix $(\Delta/\beta^2)$ can be approximated well.

\begin{lemma}
	Given an arbitrary vector $v \in \Real^{d}$, we can compute a vector $y \in \Real^{d}$ such that
	\begin{equation*}
		\opnorm{y - (1/\beta^2){\Delta} v} \le \epsilonsub{reg}\kappa\opnorm{v}
	\end{equation*}
	in time $O(\nnz{B} + (\nnz{A}+c^2)\log(1/\epsilonsub{reg}))$.
	\label{lma:delta-by-beta-product}
\end{lemma}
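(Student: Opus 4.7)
The plan is to exploit the additive decomposition $\Delta v/\beta^2 = (\T{B}Bv - \T{B}AA^+ Bv)/\beta^2$, which follows from $\Delta = \T{B}(I-AA^+)B$. The term $\T{B}Bv$ is computed exactly by two sparse matrix-vector products in $O(\nnz{B})$ time. For the second term, I would first compute $b := Bv$ in $O(\nnz{B})$ time, then invoke $\textsc{HighPrecisionRegression}(A, b, \epsilonsub{reg}^2)$ via Theorem~\ref{thm:high-precision-regression} to obtain $x \in \R^c$ satisfying $\opnorm{AA^+ b - Ax} \le \epsilonsub{reg}\opnorm{(I-AA^+)b}$ in time $O((\nnz{A} + c^2)\log(1/\epsilonsub{reg}))$; the one-off $c^\omega$ preconditioner cost is paid outside the oracle and amortized across calls. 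Finally, compute $\T{B}(Ax)$ in $O(\nnz{A}+\nnz{B})$ time and output $y := (\T{B}Bv - \T{B}(Ax))/\beta^2$.

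For the error analysis, by the triangle inequality and submultiplicativity of the spectral norm,
$$\opnorm{y - (1/\beta^2)\Delta v} \;=\; \frac{1}{\beta^2}\opnorm{\T{B}(AA^+ Bv - Ax)} \;\le\; \frac{\opnorm{B}}{\beta^2}\opnorm{AA^+ b - Ax} \;\le\; \frac{\opnorm{B}\opnorm{(I-AA^+)Bv}}{\beta^2}\,\epsilonsub{reg}.$$
I would then bound $\opnorm{(I-AA^+)Bv} \le \opnorm{(I-AA^+)B}\opnorm{v} \le \opt\opnorm{v} \le \beta\opnorm{v}$, where the middle inequality uses Corollary~\ref{cor:optimum-value}. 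The right-hand side simplifies to $(\opnorm{B}/\beta)\,\epsilonsub{reg}\opnorm{v}$, and since $\beta \ge (1+\varepsilon)\sigma_{k+1}(B)$ (again by Corollary~\ref{cor:optimum-value}), we have $\opnorm{B}/\beta \le \sigma_1(B)/\sigma_{k+1}(B) = \kappa$, yielding the claimed bound $\opnorm{y - (1/\beta^2)\Delta v} \le \epsilonsub{reg}\kappa\opnorm{v}$.

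The only delicate point is that we want the dependence on $\kappa$ to be linear, not quadratic. This is exactly what the Pythagorean form of Theorem~\ref{thm:high-precision-regression} buys us: the regression error is controlled by $\opnorm{(I-AA^+)Bv}$, which is at most $\beta\opnorm{v}$ via the $\opt$ characterization, whereas the naive bound $\opnorm{Bv}\le \opnorm{B}\opnorm{v}$ would cost an additional factor of $\opnorm{B}/\beta = \kappa$ and give $\kappa^2\epsilonsub{reg}\opnorm{v}$. Apart from this observation, the argument is a direct combination of sparse matrix-vector multiplications with a single high-precision regression call, and the quoted running time follows immediately from Theorem~\ref{thm:high-precision-regression}.
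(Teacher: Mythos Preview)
Your proof is correct and essentially the same as the paper's: both decompose $\Delta v$ as $\T{B}Bv - \T{B}AA^+ Bv$, approximate $AA^+ Bv$ via a single high-precision regression call, and bound the resulting error using $\opnorm{(I-AA^+)B} \le \beta$ and $\opnorm{B}/\beta \le \kappa$. The only cosmetic differences are that you explicitly call the regression with parameter $\epsilonsub{reg}^2$ (to match the Pythagorean form of Theorem~\ref{thm:high-precision-regression}) and route the bounds through Corollary~\ref{cor:optimum-value}, whereas the paper invokes $\beta \ge \max(\opnorm{(I-AA^+)B},\sigma_{k+1}(B))$ directly.
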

\begin{proof}
	Recall that $\Delta = \T{B}(I-AA^+)B$. Therefore, for a vector $v$, $\Delta v = \T{B}Bv - \T{B}AA^+ Bv$. After computing $Bv$ exactly, we can compute $\tilde{y}$ by  Theorem~\ref{thm:high-precision-regression} in $O((\nnz{A}+c^2)\log(1/\epsilonsub{reg}))$ time such that
	\begin{equation*}
		\opnorm{AA^+ Bv - \tilde{y}} \le \epsilonsub{reg}\opnorm{(I-AA^+)Bv}.
	\end{equation*}
	Let $y = \T{B}Bv - \T{B}\tilde{y}$, which can be computed in $O(\nnz{B})$ time. Then $\Delta v - y = \T{B}(\tilde{y}-AA^+ Bv)$, which implies $\opnorm{\Delta v - y} \le \opnorm{B}\opnorm{\tilde{y} - AA^+ Bv} \le \epsilonsub{reg}\opnorm{B}\opnorm{(I-AA^+)B}\opnorm{v}$.
	
	Thus, given a vector $v$, we can compute $(\Delta/\beta^2)v$ up to an error of \[\epsilonsub{reg}\opnorm{B}\opnorm{(I-AA^+)B}\opnorm{v}/\beta^2 \le \epsilonsub{reg}\kappa\opnorm{v},\] since $\beta \ge \max(\opnorm{(I-AA^+)B},\sigma_{k+1}(B))$.
\end{proof}
\begin{lemma}
	Given an arbitrary vector $v \in \Real^d$, for matrix $M = r\left({\Delta}/{\beta^2}\right) = \sum_{j=0}^{t}r_j\left({\Delta}/{\beta^2}\right)^j$ where the degree $t = O(({1}/{\sqrt{\varepsilon}})\log({\kappa}/{\varepsilon}))$ and $\|r\|_1 = O((1+\sqrt{2})^{O(\sqrt{1/\varepsilon}\log(\kappa/\varepsilon))}\log(\kappa/\varepsilon))$, we can compute a vector $y$ such that $\opnorm{Mv - y} \le \epsilonsub{r}\opnorm{v}$ in time \[O\left(t \cdot \left(\nnz{B} + (\nnz{A} + c^2)\log\left({\kappa\|r\|_1}/{\epsilonsub{r}}\right)\right)\right).\]
\end{lemma}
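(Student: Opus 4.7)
The plan is to bound the matrix-vector product iteratively, reusing the per-step guarantee of the preceding lemma (which approximates $(\Delta/\beta^2) u$ for any $u$ up to error $\epsilonsub{reg}\kappa\opnorm{u}$ in time $O(\nnz{B}+(\nnz{A}+c^2)\log(1/\epsilonsub{reg}))$). Concretely, I would set $\Apx_0 := v$ and then, for $i=0,1,\dots,t-1$, let $\Apx_{i+1}$ be the vector produced by applying that lemma to $\Apx_i$, so that $\opnorm{\Apx_{i+1}-(\Delta/\beta^2)\Apx_i}\le \epsilonsub{reg}\kappa\opnorm{\Apx_i}$. The output is $y := \sum_{j=0}^{t} r_j \Apx_j$, which takes $O(d)$ additional time per term. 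The cost is $t$ invocations of the single-step subroutine, matching the claimed $O\bigl(t\cdot(\nnz{B}+(\nnz{A}+c^2)\log(1/\epsilonsub{reg}))\bigr)$ up to setting $\epsilonsub{reg}$.

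The key step is the error analysis. Let $E_i := \opnorm{\Apx_i - (\Delta/\beta^2)^i v}$, with $E_0=0$. Using the triangle inequality,
\begin{equation*}
E_{i+1} \;\le\; \opnorm{\Apx_{i+1}-(\Delta/\beta^2)\Apx_i} \;+\; \opnorm{(\Delta/\beta^2)\bigl(\Apx_i-(\Delta/\beta^2)^i v\bigr)} \;\le\; \epsilonsub{reg}\kappa\opnorm{\Apx_i} + \opnorm{\Delta/\beta^2}\,E_i.
\end{equation*}
Because $\beta\ge(1+\varepsilon)\opt\ge\opnorm{(I-AA^+)B}=\sqrt{\opnorm{\Delta}}$, we have $\opnorm{\Delta/\beta^2}\le 1$, so $\opnorm{(\Delta/\beta^2)^i v}\le\opnorm{v}$ and hence $\opnorm{\Apx_i}\le\opnorm{v}+E_i$. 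Provided we maintain the induction hypothesis $E_i\le\opnorm{v}$ (which I will verify a posteriori by choice of $\epsilonsub{reg}$), this recurrence yields $E_{i+1}\le E_i + 2\epsilonsub{reg}\kappa\opnorm{v}$, so $E_j\le 2j\,\epsilonsub{reg}\kappa\opnorm{v}\le 2t\,\epsilonsub{reg}\kappa\opnorm{v}$ for every $j\le t$.

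Combining the per-power error bounds with the linearity of $y$,
\begin{equation*}
\opnorm{y-Mv} \;=\; \Bigl\|\sum_{j=0}^{t} r_j\bigl(\Apx_j-(\Delta/\beta^2)^j v\bigr)\Bigr\| \;\le\; \sum_{j=0}^{t} |r_j|\,E_j \;\le\; \|r\|_1 \cdot 2t\,\epsilonsub{reg}\kappa\opnorm{v}.
\end{equation*}
Choosing $\epsilonsub{reg} = \Theta\bigl(\epsilonsub{r}/(t\,\|r\|_1\,\kappa)\bigr)$ makes the right-hand side at most $\epsilonsub{r}\opnorm{v}$, and also keeps $E_i\le\opnorm{v}$ so the induction is valid. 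Since $t\le\|r\|_1$, we have $\log(1/\epsilonsub{reg})=O(\log(\kappa\|r\|_1/\epsilonsub{r}))$, so each of the $t$ single-step calls costs $O(\nnz{B}+(\nnz{A}+c^2)\log(\kappa\|r\|_1/\epsilonsub{r}))$, giving the stated total runtime.

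The only subtle point is the error propagation, and it is mild precisely because $\opnorm{\Delta/\beta^2}\le 1$ prevents the per-step error from being amplified geometrically; all that accumulates is a linear factor of $t$, which is absorbed into $\log(1/\epsilonsub{reg})$. Everything else, the per-iteration cost and the summation $y=\sum_j r_j \Apx_j$, is routine given the preceding lemma and the assumed bounds on $t$ and $\|r\|_1$ from Lemma~\ref{lma:replacing-neg-square-root}.
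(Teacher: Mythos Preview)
Your proposal is correct and follows essentially the same approach as the paper: the same iterates $\Apx_i$, the same error quantities $E_i$, and the same final summation against the coefficients $r_j$. The one minor difference is that the paper exploits the sharper bound $\opnorm{\Delta/\beta^2}\le 1/(1+\varepsilon)^2$ (together with $\epsilonsub{reg}\kappa\le\varepsilon/4$) to make the recurrence contractive and obtain $E_i\le\epsilonsub{reg}\kappa\opnorm{v}$ uniformly in $i$, so that $\epsilonsub{reg}=O(\epsilonsub{r}/(\kappa\|r\|_1))$ suffices without your extra factor of $t$; since $\log t=O(\log\|r\|_1)$, your looser choice is absorbed into the logarithm and the stated running time is unaffected.
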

\begin{proof}
Let $\Apx_0 := v$ and for $i \ge 1$, define $\Apx_i$ to be the approximation computed for the product $(\Delta/\beta^2)\Apx_{i-1}$ by Lemma~\ref{lma:delta-by-beta-product}. Define
\begin{equation*}
	E_i := \opnorm{(\Delta/\beta^2)^iv - \Apx_i}.
\end{equation*}
We have the following recurrence
\begin{align*}
	E_i = \opnorm{\left(\frac{\Delta}{\beta^2}\right)^iv - \Apx_i} &\le \opnorm{(\Delta/\beta^2)^iv - (\Delta/\beta^2)Apx_{i-1}} + \opnorm{(\Delta/\beta^2)\Apx_{i-1} - \Apx_i}\\
	&\le \opnorm{(\Delta/\beta^2)}E_{i-1} + \epsilonsub{reg}\kappa\opnorm{\Apx_{i-1}}\\
	&\le \opnorm{(\Delta/\beta^2)}E_{i-1}+ \epsilonsub{reg}\kappa \cdot (\opnorm{\Delta/\beta^2}^{i-1}\opnorm{v} + E_{i-1})\\
	&\le (\opnorm{\Delta}/\beta^2 + \epsilonsub{reg}\kappa)E_{i-1} + \epsilonsub{reg}\kappa\opnorm{\Delta/\beta^2}^{i-1}\opnorm{v}.
\end{align*}
As $\beta \ge (1+\varepsilon)\opnorm{(I-AA^+)B}$, we have that $\opnorm{\Delta/\beta^2} \le 1/(1+\varepsilon)^2$. If {$\epsilonsub{reg}\kappa \le \varepsilon/4$}, then $\opnorm{\Delta/\beta^2} + \epsilonsub{reg}\kappa \le 1/(1+\varepsilon)^2 + \varepsilon/4 \le 1/(1+\varepsilon)$. Therefore
\begin{equation*}
	E_i \le \frac{E_{i-1}}{1+\varepsilon}+ \frac{\epsilonsub{reg}\kappa}{(1+\varepsilon)^{2(i-1)}}\opnorm{v}.
\end{equation*}
This implies upon solving the recurrence that
\begin{equation*}
	E_i \le \epsilonsub{reg}\kappa\opnorm{v}
\end{equation*}
for all $i$.
Then 
\begin{align*}
	\opnorm{Mv - \sum_{j=0}^t r_j\Apx_j} &\le \sum_{j=0}^t|r_j|\opnorm{(\Delta/\beta^2)^jv - \Apx_j}\\
	 &\le \sum_{j=0}^t |r_j|E_j \le \epsilonsub{reg}\kappa\opnorm{v}\sum_{j=0}^t|r_j| = \epsilonsub{reg}\kappa\opnorm{v}\|r\|_1.
\end{align*}
So for any arbitrary vector $v$, we can compute a vector $y$ such that
\begin{equation*}
	\opnorm{Mv - y} \le \epsilonsub{r}\opnorm{v}
\end{equation*}
by setting $\epsilonsub{reg} = O(\frac{\epsilonsub{r}}{\kappa\|r\|_1}) \le \varepsilon/4\kappa$ for all $t$ approximate products and thus $y$ can be computed by Lemma~\ref{lma:delta-by-beta-product} in time
\begin{equation*}
	O(t \cdot (\nnz{B} + (\nnz{A} + r^2)\log\left({\kappa\|r\|_1}/{\epsilonsub{r}}\right))). 
\end{equation*}
This concludes the proof of the lemma.
\end{proof}
Thus for an arbitrary vector $v$, we can compute a vector $y$ such that $\opnorm{Mv - y} \le \epsilonsub{r}\opnorm{v}$. 

\begin{proof}[Proof of Lemma~\ref{lma:oracle-time-complexity}]
Recall that $\mathcal{M}' = (AA^+ BM)/\beta$, $\opnorm{M} \le 2/\sqrt{\varepsilon}$ and $\sigma_{\min}(M) \ge 1/2$ from Lemma~\ref{lma:replacing-neg-square-root}. We have $\opnorm{AA^+ BM} \ge \opnorm{AA^+ B}\sigma_{\min}(M) \ge \opnorm{AA^+ B}/2 \ge \varepsilon\opnorm{B}/2$ where the last ineqaulity follows from our assumption that $\opnorm{AA^+ B} \ge \varepsilon\opnorm{B}$. Thus $\opnorm{\mathcal{M}'} \ge \varepsilon\opnorm{B}/2\beta \ge \varepsilon/4$ as $\beta \le (1+\varepsilon)\opnorm{B}$.

Now we show how to compute approximations to $\mathcal{M}'v$ and $\T{\mathcal{M}'}v'$ for arbitrary vectors $v,v'$.

 To compute an approximation to $\mathcal{M'} v$, we first obtain a vector $y_1$ using the above lemma such that $\opnorm{Mv - y_1} \le \epsilonsub{r}\opnorm{v}$. Then we compute the product $By_1$ exactly in time $O(\nnz{B})$. Thereafter we compute a vector $y_2$ by Theorem~\ref{thm:high-precision-regression} such that
\begin{equation*}
	\opnorm{y_2 - AA^+ By_1} \le \epsilonsub{reg}\opnorm{(I-AA^+)By_1} \le \epsilonsub{reg}\opnorm{(I-AA^+)B}\opnorm{y_1}.
\end{equation*}
We also have $\opnorm{AA^+ By_1 - AA^+ BMv} \le \epsilonsub{r}\opnorm{AA^+ B}\opnorm{v}$. Therefore by the triangle inequality, $\opnorm{AA^+ BMv - y_2} \le \epsilonsub{r}\opnorm{AA^+ B}\opnorm{v} + \epsilonsub{reg}\opnorm{(I-AA^+)B}\opnorm{y_1}$. Hence
\begin{align*}
	\opnorm{\mathcal{M}' v - (y_2/\beta)} &\le \epsilonsub{r}\frac{\opnorm{AA^+ B}}{\beta}\opnorm{v} + \epsilonsub{reg}\opnorm{y_1} \le \epsilonsub{r}\kappa\opnorm{v} + \epsilonsub{reg}(\epsilonsub{r}\opnorm{v} + \opnorm{M}v)\\
	&\le \epsilonsub{r}(\kappa+1)\opnorm{v} + \frac{2\epsilonsub{reg}}{\sqrt{\varepsilon}}\opnorm{v}.
\end{align*}
Thus if $\epsilonsub{r} = O(\epsilonsub{f}\varepsilon/\kappa)$ and $\epsilonsub{reg} = O(\epsilonsub{f}\varepsilon^{3/2})$, we have that $\opnorm{\mathcal{M}'v - (y_2/\beta)} \le \epsilonsub{f}\varepsilon\opnorm{v} \le \epsilonsub{f}\opnorm{\mathcal{M}'}\opnorm{v}$. Therefore a vector $y_2/\beta$ can be computed in time $O(t \cdot (\nnz{B} + (\nnz{A} + c^2)\log\left(\frac{\kappa^2\|r\|_1}{\epsilonsub{f}\varepsilon}\right))) + O((\nnz{A} + c^2)\log(\frac{1}{\epsilonsub{f}\varepsilon}))$.

Now we compute an approximation to $\T{\mathcal{M}'}v = (\T{M}\T{B}AA^+/\beta) v$ for an arbitrary vector $v$. We first compute a vector $y_1$ such that 
\begin{equation*}
	\opnorm{AA^+ v - y_1} \le \epsilonsub{reg}\opnorm{(I-AA^+)v} \le \epsilonsub{reg}\opnorm{v}.
\end{equation*}
Then we compute $\T{B}y_1$ exactly. Then we compute a vector $y_2$ such that $\opnorm{M\T{B}y_1 - y_2} \le \epsilonsub{r}\opnorm{\T{B}y_1}\le \epsilonsub{r}\opnorm{B}(1+\epsilonsub{reg})\opnorm{v}$ .
We further have 
\begin{equation*}
	\opnorm{M\T{B}AA^+ v - M\T{B}y_1} \le \epsilonsub{reg}\opnorm{M\T{B}}\opnorm{v} \le \epsilonsub{reg}\frac{2\opnorm{B}}{\sqrt{\varepsilon}}\opnorm{v}.
\end{equation*}
Thus
\begin{equation*}
	\opnorm{y_2 - M\T{B}AA^+ v} \le 2\epsilonsub{r}\opnorm{B}\opnorm{v} + \epsilonsub{reg}\frac{2\opnorm{B}}{\sqrt{\varepsilon}}\opnorm{v}
\end{equation*}
and hence
\begin{equation*}
	\opnorm{y_2/\beta - \T{\mathcal{M}'} v} \le 2\epsilonsub{r}\kappa\opnorm{v} + \epsilonsub{reg}\frac{2\kappa}{\sqrt{\varepsilon}}\opnorm{v}
\end{equation*}
Now picking $\epsilonsub{r} = O(\epsilonsub{f}\varepsilon/\kappa)$ and $\epsilonsub{reg} = O(\epsilonsub{f}\varepsilon^{3/2}/\kappa)$, we obtain that 
\begin{equation*}
	\opnorm{(y_2/\beta) - \T{\mathcal{M}'} v} \le \epsilonsub{f}\varepsilon\opnorm{v} \le \epsilonsub{f}\opnorm{\mathcal{M}'}\opnorm{v}.
\end{equation*}
Thus, this approximation can be computed in time $O(t \cdot (\nnz{B} + (\nnz{A} + c^2)\log\left(\frac{\kappa^2\|r\|_1}{\epsilonsub{f}\varepsilon}\right))) + O((\nnz{A} + c^2)\log(\frac{\kappa}{\epsilonsub{f}\varepsilon}))$.
It follows that given an accuracy parameter $\epsilonsub{f}$, we can compute approximate matrix-vector products with $\mathcal{M}'$ and $\T{\mathcal{M}'}$ in time at most 
\begin{align*}
T(\epsilonsub{f}) &= O(t \cdot (\nnz{B} + (\nnz{A} + c^2)\log\left({\kappa(B)^2\|r\|_1}/{(\epsilonsub{f}\varepsilon})\right)))\\
&\quad + O((\nnz{A} + c^2)\log({\kappa(B)}/({\epsilonsub{f}\varepsilon}))).
\end{align*}
\end{proof}
\subsection{Proof of Theorem~\ref{thm:final-theorem}}\label{subsec:thm:final-theorem}
\begin{proof}
From Lemma~\ref{lma:replacing-neg-square-root}, $$\sigma_1(\mathcal{M}') \le \sigma_1\left(\frac{AA^+ B}{\beta}\right)\opnorm{M} \le \sigma_1\left(\frac{AA^+ B}{\beta}\right)\frac{2}{\sqrt{\varepsilon}}$$ and $$\sigma_{k+1}(\mathcal{M'}) \ge \sigma_{k+1}(AA^+ B/\beta)\cdot \sigma_{\min}(M) \ge \sigma_{k+1}(AA^+ B/\beta)(1/2).$$
Therefore $\kappa(\mathcal{M}') \le \sigma_1(AA^+ B/\beta)(2/\sqrt{\varepsilon})/\sigma_{k+1}(AA^+ B/\beta)/2 \le (4/\sqrt{\varepsilon})\kappa(AA^+ B)$. By Theorem~\ref{thm:main-theorem-krylov}, we can compute a matrix $Z \in \R^{n \times k}$ such that 
$
	\opnorm{(I-Z\T{Z})\mathcal{M'}} \le (1+2\varepsilon)\sigma_{k+1}(\mathcal{M}')
$
in time
\begin{equation*}
	T\left(\frac{\varepsilon}{\kappa(\mathcal{M}')^{5q}k^{11}C^q}\right)qk + T\left(\frac{\varepsilon^2}{48\kappa(\mathcal{M'}^2(\sqrt{qk})k)}\right)qk, 
\end{equation*}
where $q = O((1/\sqrt{\varepsilon})\log(d/\varepsilon))$. Thus the total time required is
\begin{equation*}
	O\left(tqk \cdot \left(\nnz{B} + (\nnz{A} + c^2)\log\left(\frac{\kappa^2\|r\|_1\kappa(\mathcal{M}')^{5q}k^{11} C^q}{\varepsilon^2}\right)\right)\right).
\end{equation*}
As $\|r\|_1 = (1+\sqrt{2})^{O(1/\sqrt{\varepsilon}\log(\kappa/\varepsilon))}\log(\kappa/\varepsilon)/\varepsilon$ and $\kappa(\mathcal{M}') = \kappa(AA^+ B)/\sqrt{\varepsilon}$, we obtain that the total time required is
$
	O(tqk\cdot \nnz{B} + tqk \cdot (\frac{1}{\sqrt{\varepsilon}}\log(\kappa/\varepsilon) + q)\log(\frac{\kappa \cdot \kappa(\mathcal{M}') \cdot k}{\varepsilon}) \cdot (\nnz{A} + c^2)).
$
Substituting $t = O(\sqrt{1/\varepsilon}\log(\kappa/\varepsilon))$, we obtain that the total running time is
\begin{equation}
	O\left(\left(\frac{\nnz{B} \cdot k}{\varepsilon} + \frac{\nnz{A}\cdot k}{\varepsilon^{1.5}} +\frac{c^2k}{\varepsilon^{1.5}}\right)\cdot \text{polylog}(\kappa, \kappa(AA^+ B),d,k,1/\varepsilon)\right), 
\end{equation}
and there is an additional $c^\omega$ time for computing a preconditioner.
%Thus we have the following Theorem.
By Lemmas~\ref{lma:apporximate-subspace-lra-rra} and \ref{lma:replacing-neg-square-root}, we obtain that
\begin{equation*}
    \opnorm{(AA^+ Z)(AA^+ Z)^+ B - B} = \opnorm{Z\T{Z}B - B} \le (1+O(\varepsilon))\opt.
\end{equation*}
The equality is from the fact that $Z$ is spanned by the columns of matrix $A$ by Theorem~\ref{thm:main-theorem-krylov}, and therefore $AA^+ Z = Z$. Thus, there exists a matrix $X_1 \in \R^{c \times k}$ such that $AX_1 = Z$ and the matrix $X_1$ can be computed in time $O((\nnz{A} + c^2)k + c^\omega)$ using sketching-based preconditioning techniques. Let $Y_1 = \T{Z}B$, which can be computed in time $O(\nnz{B}\cdot k)$. Therefore,
\begin{equation*}
	\opnorm{AX_1Y_1 - B} = \opnorm{Z\T{Z}B - B} \le (1+O(\varepsilon))\opt.
\end{equation*}
Thus $X_1 \cdot Y_1$ is a $(1+O(\varepsilon))$-approximation to the regression problem. By appropriately scaling $\varepsilon$, we obtain the proof.
\end{proof}
\subsection{Proof of Lemma~\ref{lma:removing-dependence}}\label{subsec:lma:removing-dependence}
\begin{proof}
Let $G \sim N(0,1)^{n \times (k+1)}$ and $\T{F} \in \R^{(k+1) \times d}$ be a matrix with $k+1$ orthonormal rows. Let $\alpha$ be a parameter to be chosen later and $\tilde{B} := B + \alpha G\T{F}$. For all matrices $X$, by the triangle inequality,
\begin{equation*}
    \opnorm{AX-\tilde{B}} \in \opnorm{AX - B} \pm \alpha\opnorm{G\T{F}}.
\end{equation*}
With probability $\ge 9/10$, $\opnorm{G} \le 2\sqrt{n}$. Thus
$
    \opnorm{AX-\tilde{B}} \in \opnorm{AX - B} \pm 2\alpha\sqrt{n}.
$
Therefore, if $\tilde{X}$ is a $(1+\varepsilon)$-approximation to $\min_{\text{rank-}k\ X}\opnorm{AX-\tilde{B}}$, then $\opnorm{A\tilde{X} - B} \le (1+\varepsilon)\opt + 6\alpha\sqrt{n}$.

We now have $\sigma_1(AA^+ \tilde{B}) \le \opnorm{\tilde{B}} \le \opnorm{B} + 2\alpha\sqrt{n}$ from the above discussion. We now lower bound $\sigma_{k+1}(AA^+ \tilde{B})$. Let $U$ be an orthonormal basis for the columns of $A$. Therefore $AA^+ = U\T{U}$.
\begin{align*}
    \sigma_{k+1}(AA^+ \tilde{B}) &= \sigma_{k+1}(U\T{U}\tilde{B})\\
    &= \sigma_{k+1}(\T{U}\tilde{B})\\
    &= \sigma_{k+1}(\T{U}B + \alpha\T{U}G\T{F})\\
    &\ge \sigma_{k+1}(\T{U}BF\T{F} + \alpha\T{U}G\T{F})\\
    &\ge \sigma_{k+1}(\T{U}BF + \alpha \T{U}G).
\end{align*}
As the rows of $\T{U}$ are orthonormal, the matrix $G' = \T{U}G$ is a matrix of i.i.d. normal random variables. Assuming $A$ is of full rank, $G'$ is a $c \times (k+1)$ matrix. Assuming $c \ge k+1$, let $E$ be the top $(k+1) \times (k+1)$ submatrix of $\T{U}BF + \alpha G'$. Then $E$ can be seen as a fixed $(k+1) \times (k+1)$ matrix where each entry is perturbed by a Gaussian random variable of variance $\alpha^2$. From Theorem~2.2 of \cite{tao-vu-perturbation}, we obtain that $\sigma_{\min}(E) \ge \alpha/(C\sqrt{k})$ for a constant $C$ with probability $\ge 9/10$. Thus $\sigma_{k+1}(\T{U}BF + \alpha\T{U}G) \ge \sigma_{\min}(E) \ge \alpha/(C\sqrt{k})$.

Thus, $\sigma_1(AA^+ \tilde{B})/\sigma_{k+1}(AA^+ \tilde{B}) \le (\opnorm{B} + 2\alpha\sqrt{n})/(\alpha/(C\sqrt{k}))$. For $\alpha = \frac{\varepsilon\sigma_{k+1}(B)}{(6\sqrt{n})}$, we obtain that
\begin{equation*}
    \sigma_1(AA^+ \tilde{B})/\sigma_{k+1}(AA^+\tilde{B}) \le \frac{Cn}{\varepsilon}\kappa
\end{equation*}
for a constant $C$ with probability $\ge 4/5$. Also, if $\tilde{X}$ is a $(1+\varepsilon)$-approximation as mentioned above, $\opnorm{A\tilde{X} - B} \le (1+\varepsilon)\opt + \varepsilon\sigma_{k+1}(B) \le (1+2\varepsilon)\opt$. We obtain the proof by scaling $\varepsilon$ appropriately.
\end{proof}
\end{document}